\newtheorem{definition}{Definition}
\newtheorem*{proposition*}{Proposition}
\newtheorem{lemma}{Lemma}
\DeclareMathOperator*{\argmax}{arg\,max}
\newcommand{\mar}[1]{\textcolor{blue}{Maria : #1}}
\newcommand{\VP}[1]{\textcolor{magenta}{VP : #1}}
\title{Online matching on stochastic block model}
\author{
  Maria Cherifa \\
  Université de Paris, MAP5, Paris, France\\
  CREST, ENSAE, Palaiseau, France\\
  Faiplay joint team\\
  \texttt{maria.cherifa@ensae.fr} \\
   \And
   Clément Calauzènes \\
   Criteo AI Lab, Paris, France  \\
    Fairplay joint team  \\
   \texttt{c.calauzenes@criteo.com} \\
   \AND
   Vianney Perchet \\
  CREST, ENSAE, Palaiseau, France \\
  Criteo AI Lab, Paris, France\\
  Fairplay joint team\\
\texttt{vianney.perchet@normalesup.org} \\
}
\begin{document}

\maketitle

\begin{abstract}
While online bipartite matching has gained significant attention in recent years, existing analyses in stochastic settings fail to capture the performance of algorithms on heterogeneous graphs, such as those incorporating inter-group affinities or other social network structures. In this work, we address this gap by studying online bipartite matching within the stochastic block model (SBM). A fixed set of offline nodes is matched to a stream of online arrivals, with connections governed probabilistically by latent class memberships. We analyze two natural algorithms: a $\myopic$ policy that greedily matches each arrival to the most compatible class, and the $\balance$ algorithm, which accounts for both compatibility and remaining capacity. For the $\myopic$ algorithm, we prove that the size of the matching converges, with high probability, to the solution of an ordinary differential equation (ODE), for which we provide a tractable approximation along with explicit error bounds. For the $\balance$ algorithm, we demonstrate convergence of the matching size to a differential inclusion and derive an explicit limiting solution. Lastly, we explore the impact of estimating the connection probabilities between classes online, which introduces an exploration-exploitation trade-off. 
\end{abstract}

\section*{Introduction}
Finding matchings in bipartite graphs is a fundamental problem at the intersection of graph theory~\cite{Godsil1981MatchingsAW,Zdeborova_2006}, network science, and combinatorial optimization~\cite{lovasz2009matching,schrijver2003combinatorial}. In this context, a bipartite graph $ \mathcal{G} = (\mathcal{N}, \mathcal{T}, \mathcal{E}) $ is defined by two disjoint node sets, $ \mathcal{N} $ and $ \mathcal{T} $, and a set of edges $ \mathcal{E} \subset \mathcal{N} \times \mathcal{T} $ connecting them. These graphs naturally model systems where entities from one group are to be matched with entities from another, such as tasks to agents, products to consumers, or ads to users. A \emph{matching} in such a graph refers to a subset of edges with no shared endpoints — ensuring that each entity is involved in at most one pair. The core challenge lies in computing optimal matchings that respect resource constraints and maximize some utility or coverage. This problem has significant practical relevance, especially in operations research, where it is linked to the classical assignment problem~\cite{Grove}.

Recent real-world applications, particularly in online advertising, ride-hailing platforms, and real-time job allocation, have attracted significant interest in the online version of the matching problem~\cite{mehta_survey}. In this setting, nodes in $\mathcal{N}$ (e.g., advertisers or servers) are fixed and known in advance, while nodes in $\mathcal{T}$ (e.g., users or requests) arrive sequentially. When a new node $t \in \mathcal{T}$ arrives, the algorithm must decide on the spot whether to match it to an available node $n \in \mathcal{N}$, such that $(n, t) \in \mathcal{E}$, with the constraint that each node can be matched at most once. These decisions are irrevocable, making the problem both practically and theoretically challenging. The central goal is to design algorithms that construct a maximum matching—that is, one that covers as many nodes as possible. Typically, such algorithms are analyzed in one of two ways: by approximating the size of the resulting matching, or by evaluating their competitive ratio, defined as the worst-case ratio between the size of the algorithm’s matching and that of an optimal matching computed with full knowledge of the graph in advance.

Online matching has been explored through several theoretical lenses, most notably the adversarial and stochastic frameworks \cite{mehta_survey}. In the adversarial setting, the graph and arrival sequence are designed to be worst-case, providing robust but conservative performance guarantees. In contrast, stochastic models assume randomness in either the graph structure or the arrival process \cite{borrodin}, allowing for more realistic analyses and often stronger guarantees. Within this stochastic line of work, much attention has been given to Erdős–Rényi-type models \cite{mastin_jaillet}, where edges are included independently with a fixed probability. While such models are analytically tractable and provide insights, they fail to capture complex patterns like community structure, heterogeneity, and group-based interactions observed in real-world systems \cite{sbm_overview}. 

To address these limitations, the \emph{stochastic block model} (SBM) has emerged as a powerful alternative \cite{sbm_overview,abbe2018community}. SBM introduces latent classes (or communities) and allows edge probabilities to depend on class membership, thereby capturing structured heterogeneity—such as homophily (the tendency of similar nodes to connect more frequently)\cite{mcpherson2001birds}, core-periphery patterns (where a densely connected "core" group links to many others while "periphery" nodes have fewer connections) \cite{borgatti2000models}, or inter-group affinities (specific preferences or tendencies for nodes in one group to connect with nodes in another group)\cite{airoldi2008mixed}. This makes SBM particularly well-suited for modeling interactions in social networks, recommendation engines, and online marketplaces, where the likelihood of a match depends not just on individual attributes but also on group-level dynamics \cite{karrer2011stochastic}. In the context of online bipartite matching, this leads naturally to the \emph{bipartite stochastic block model}, where one partition consists of fixed agents and the other of arriving users, both belonging to latent classes that govern connection probabilities \cite{kim2020stochastic}.

We study online matching in the bipartite stochastic block model, focusing on the sparse regime, where the average degree of each node remains bounded as the system grows. This regime is particularly relevant in practice, as real-world platforms typically feature users or items that interact with only a small subset of the population. Sparsity not only reflects these empirical network structures but also introduces significant analytical challenges. Formally, we consider a bipartite graph $ \mathcal{G} = (\mathcal{N}, \mathcal{T}, \mathcal{E}) $, where $ \mathcal{N} $ is a fixed set of nodes and $ \mathcal{T} $ is a set of nodes that arrive sequentially. Each node $ n \in \mathcal{N} $ is independently assigned a class $ c(n) \in \mathcal{C} $ according to a distribution $ \mu(n) $, and each arriving node $ t \in \mathcal{T} $ is independently assigned a class $ d(t) \in \mathcal{D} $ with distribution $ \nu(t) $. Conditional on these class assignments, an edge between $ n $ and $ t $ is present independently with probability $ p_{n,t}$, which depends on their respective classes. A defining feature of this model, motivated by real-world constraints, is that the set of edges incident to each arriving node is not known in advance. Instead, when a node $ t \in \mathcal{T} $ arrives, the algorithm observes only then which edges exist between $ t $ and the nodes in $ \mathcal{N} $. This assumption captures a key operational reality in many systems, where information about potential interactions is revealed only upon arrival or activation of a new entity. For example, in online job platforms, the compatibility between a newly posted job (a node in $ \mathcal{T} $) and existing freelancers (nodes in $ \mathcal{N} $) becomes clear only when the job description is published. The system cannot precompute all possible matches due to computational constraints and the dynamic, user-driven nature of postings. In recommendation systems, user preferences are inferred from behavior observed at login, and only then can the platform determine which content is relevant — effectively modeling the appearance of edges at the moment of interaction. Given this online nature of information revelation, an algorithm must operate under uncertainty: it observes each vertex in $ \mathcal{T} $ sequentially and must decide, upon arrival, whether and how to match it to an available node in $ \mathcal{N} $, without knowledge of future arrivals or their connections. In this work, we introduce and analyze two natural algorithms designed for this setting:

\begin{itemize}
    \item \textbf{\myopic}: Upon the arrival of a node $ t \in \mathcal{T} $, the algorithm chooses a compatible class $ c^* \in \mathcal{C} $ and attempts to match $ t $ with an available node from this class.
    \item \textbf{\balance}: This algorithm selects the class with the highest probability of a successful match, considering both compatibility and current availability.
\end{itemize}
We focus on the $\myopic{}$ and $\balance{}$ algorithms because they are simple, practical, and reflect decision-making heuristics commonly used in real-world systems. Notably, while the $\myopic{}$ algorithm has been studied in stochastic models, the $\balance{}$ algorithm—despite its widespread use—has only been analyzed in adversarial settings. Its theoretical performance in structured stochastic environments, such as the bipartite stochastic block model, remains unexplored. These algorithms are appealing not only for their practical relevance but also for their interpretability and ease of implementation, which is particularly valuable in applications like online marketplaces, content recommendation, or allocation systems. We first analyze them under the assumption that the compatibility probabilities $p_{n,t}$ between user and item classes are known. This setting provides a tractable analytical framework, allowing us to rigorously characterize the fluid-limit behavior of both algorithms. While this fully informed setting offers valuable insights, it often does not reflect the realities faced by many real-world systems. Motivated by these practical constraints, we then turn to the more realistic case where the probabilities $ p_{n,t} $ are not known a priori. In many applications—such as recommendation engines or online platforms—the interaction propensities between user and item types must be inferred over time through observed outcomes. This naturally gives rise to a bandit setting, where the algorithm must estimate the unknown affinities $ p_{n,t} $ from binary feedback (indicating whether a match succeeded or failed), while simultaneously making irrevocable matching decisions. This introduces an exploration-exploitation trade-off that is absent in the known-parameter regime. To address this challenge, we propose and analyze a bandit version of the \balance{} algorithm, which learns class affinities dynamically while aiming to preserve strong matching performance.

Our two main contributions, correspond to the two settings considered:

\begin{itemize}
    \item \textbf{When $ p_{n,t} $ are known:} We provide a fluid-limit analysis of both $\myopic{}$ and $\balance{}$ algorithms in the sparse bipartite stochastic block model. Specifically,
    \begin{itemize}
        \item We prove that the matching size obtained by the $\myopic{}$ algorithm is, with high probability, close to the solution of a specific ordinary differential equation. Due to the complexity of solving this equation in closed form, we derive a tractable approximation and show that the resulting error remains small.
        \item For the $\balance{}$ algorithm, we prove that the matching size converges with high probability to a solution of a differential inclusion—a generalization of ODEs that captures the algorithm's discontinuous decision rules. To our knowledge, this is the first use of differential inclusions to analyze online matching problems.
        \item We extend this analysis to a generalized version of \balance{}, where the decision rule incorporates both connection probabilities and real-time availability, and show that its performance similarly converges to a well-defined differential inclusion.
    \end{itemize}

    \item \textbf{When $ p_{n,t} $ are unknown:} We study $\learbalance$ algorithm a bandit extension of the $\balance$ algorithm, where the compatibility probabilities $ p_{n,t} $ are unknown and must be learned over time. In this setting, the algorithm receives binary feedback for each matching attempt and must estimate the latent affinities between classes while making sequential, irrevocable decisions. We analyze the regret of this learning-based algorithm and prove that it is of order $\cO(T^{\frac{q+3}{4}})$ for $0<q<1$ using stochastic approximations and differential inclusions tools.
\end{itemize}
\subsection*{Related works}
Online bipartite matching has been extensively studied, particularly in adversarial and stochastic models (\cite{surveyhuang,mehta} for a survey). In the adversarial setting, the $\greedy$ algorithm guarantees a $1/2$ competitive ratio, improving to $1 - 1/e$ under random arrivals \cite{mehta_goel}. The $\ranking$ algorithm achieves the optimal $1 - 1/e$ bound in this setting and performs even better with random arrivals \cite{karp_vazirani, Devanur_ranking, mahdian_yan}. In contrast, stochastic models assume known distributions over vertex types, often in the i.i.d. setting. This allows improved performance, with algorithms reaching competitive ratios up to ~0.711 \cite{Manshadi_sto, jaillet_liu, Brubach2016Online, huang_shu}. However, the i.i.d. model overlooks graph structure, and in many practical or average-case settings, simple heuristics can match or outperform these algorithms \cite{brrodin}. This has motivated the study of stochastic input models that better reflect real-world graphs. Consequently, another line of work focuses on applying online algorithms to specific random graph families. A foundational example is online matching in Erdős–Rényi graphs, particularly in the sparse regime where each edge exists independently with probability $c/N$ \cite{mastin_jaillet, second_member_approx, dyer_frieze}. Even for simple strategies like $\greedy$, analysis in this setting is nontrivial and yields valuable insights. The configuration model further generalizes this approach by prescribing degree distributions for vertices \cite{noiry2021online, aamand2022optimal}. Additional generalizations of Erdős–Rényi have introduced dynamic elements—for instance, models where node degrees evolve over time to reflect changing environments or behaviors \cite{cherifaal}. The stochastic block model (SBM), a structured extension of Erdős–Rényi that captures community structure, has also been studied in the online setting. In the dense regime, \cite{sopranoloto2023online} analyze max-weight policies and show they can achieve asymptotically perfect matchings. In the general SBM, \cite{brandenberger} extend lower bounds from the Erdős–Rényi case, demonstrating that the 0.837 bound from \cite{mastin_jaillet} remains tight when communities have equal expected degrees. They propose several efficient heuristics for online matching in SBMs, which perform well empirically, but none are proven to achieve asymptotic optimality. However, all these works focus primarily on dense regimes or heuristics without theoretical guarantees. In particular, the sparse regime of SBM, which is highly relevant for many real-world applications where connections are scarce and structured, has received limited attention. From another perspective, bandit algorithms offer a general framework for decision-making under uncertainty, where limited feedback guides sequential choices. These models focus on balancing between exploration and exploitation and have found broad application in online learning and resource allocation \cite{lattimore,Slivkins}. While conceptually distinct, they share core challenges with online matching and offer complementary insights.
\section{Model}
\label{subsec:model}
We consider the online bipartite matching problem, where the nodes on one side arrive sequentially, with an additional graph structure given by a \emph{stochastic block model}. The latter is defined by a bipartite graph $ \mathcal{G} = (\mathcal{N}, \mathcal{T}, \mathcal{E}) $, where $ \mathcal{N} = [N] := \{1, \ldots, N\} $ is the set of ``offline'' nodes, $ \mathcal{T} = [T] $ is the set of ``online'' nodes, and $ \mathcal{E} \subset \mathcal{N} \times \mathcal{T} $ is the set of edges. This underlying graph is random, in the sense that each edge $ (n,t) \in \mathcal{N} \times \mathcal{T} $ belongs to $ \mathcal{E} $ independently with some probability. The block model assumes that each node belongs to a latent class: we denote by $ \mathcal{C} := [C] $ the set of classes on the offline side, and by $ \mathcal{D} := [D] $ the set of classes on the online side. Each offline node $ n \in \mathcal{N} $ is assigned a class $ c(n) \in \mathcal{C} $, and each online node $ t \in \mathcal{T} $ is assigned a class $ d(t) \in \mathcal{D} $. These assignments are drawn independently: nodes on the offline side are sampled from a distribution $ \mu $ over $ \mathcal{C} $, and nodes on the online side are sampled from a distribution $ \nu $ over $ \mathcal{D} $. Given the class labels, the edge $ (n, t) $ appears in $ \mathcal{E} $ with probability $ p_{n,t} = p\big(c(n), d(t)\big) \in [0,1] $, where $ p = (p(c,d))_{c,d \in \mathcal{C} \times \mathcal{D}} $ is a class-to-class affinity matrix. In this work, we focus on the \emph{sparse regime}, where the underlying graph remains sparse as the number of offline nodes $ N $ grows. More precisely, we assume the existence of a non-negative matrix $ a = (a_{c,d}) \in \mathbb{R}_+^{C \times D} $ such that $p(c,d) = \frac{a_{c,d}}{N}$. Moreover, we assume that \( a_{c,d} \leq a \) for all \( c \in \mathcal{C} \), \( d \in \mathcal{D} \), where \( a \in (0, N) \) is a fixed constant.
This choice ensures that each offline node has a bounded expected degree, even as $ N \to \infty $, which reflects realistic constraints in large-scale platforms where individual users or items interact with only a limited number of others.

As mentioned in the introduction, in the online matching problem, an algorithm $\mathsf{ALG}$ observes sequentially the vertices in $\mathcal{T}$ and constructs on the fly a matching (i.e., a subset of edges such that any vertex belongs to at most one of them) irrevocably: after seeing the vertex $t \in \mathcal{T}$, it can decide to add irrevocably an edge $(n,t) \in \mathcal{E}$ to the current matching if $t$ does not belong to an edge of the matching yet.

We shall now introduce some notations. We denote by $b_c$  the proportion of nodes of $\mathcal{N}$ of class $c \in \mathcal{C}$. We also assume that there exists some scaling factor $\alpha >0$ such that $T=\alpha N$ (as we will consider asymptotic results when $N$ is large). We will also define the Boolean variable $m_{n}(t)$ equal to 1 if, and only if, the vertex $u$ has been included in the matching by $\mathsf{ALG}$ before the vertex $t$ arrives (otherwise $m_{n}(t)=0$). Additionally, we denote by $\mathcal{N}_c:=\{n \in \mathcal{N}, c(n)=c\}$ the set of nodes of class $c \in \mathcal{C}$, by $\mathcal{M}_c(t)=\{n \in \mathcal{N}_c, m_{n}(t)=1\}$ the set of vertices of class $c$ that are already matched before seeing vertex $t \in \mathcal{T}$ (we denote by $M_c(t)$ its cardinality) and by $M(t):=\sum_{c\in\mathcal{C}}M_c(t)$ the size of the matching constructed so far. We also denote by $\mathcal{F}_c(t)=\{u \in \mathcal{N}_c\backslash\mathcal{M}_c(t), (n,t) \in \mathcal{E}\}$ the set of vertices of class $c$ that are not matched so far (thus free), but such that $(n,t)$ belongs to $\mathcal{E}$ (they are the ``available neighbors'' of $t$ of class $c$), and by $C(t)= \{c\in [C] | \mathcal{F}_c(t)\neq \emptyset\}$ the set of classes that have available nodes at time $t\in [T]$.  Finally, we shall denote by $e_i$  the  $i$-th basis vector of $\lR^{C}$.

\section{Known compatibility probabilities}
\subsection{A warm-up: Myopic algorithm}

In this section, we present a simple yet foundational algorithm, $\myopic$, which serves as a baseline for more sophisticated algorithms presented later. This algorithm is designed to make fast, greedy decisions for matching, without attempting to look ahead or anticipate future availability. Specifically, when a new vertex $t \in \mathcal{T}$ arrives (e.g., a request or a user), the policy \emph{selects a class} $c_t \in \mathcal{C}$ according to a fixed probability distribution, and then \emph{attempts to match $t$} to an available node within that class. The selection is made without verifying beforehand whether any nodes in the chosen class are actually available at time $t$. As a result, $\myopic$ is computationally simple and immediate in its decisions, but it may sometimes fail to make a match due to resource unavailability. Despite its reactive nature, $\myopic$ is carefully designed to respect class-specific budget constraints and to maximize the expected long-term success rate of matches. The key component of the algorithm is the computation of a probability matrix $Q^*(c,d)$, which solves the following optimization problem:
\begin{align*}
 &Q^* \in \arg\max_Q \sum_{c,d} Q(c,d)p(c,d), \  \\ \text{s.t}~~\sum_d Q(c,d)\nu(d)=b_c, & ~\forall c \in \mathcal{C} ~~~~\text{and}~~~~~
  \sum_c Q(c,d)\nu(d)=\nu(d), \forall d \in \mathcal{D}.
\end{align*}
In particular, $Q^*$ represents the optimal transport plan that maps the distribution $\nu$ to the budget vector $b$, minimizing the transport cost with respect to $-p(c,d)$. This matrix can be computed efficiently using the Hungarian algorithm, with a computational complexity of $\mathcal{O}(CD(C+D))$. Notably, the special case where there is only one class (i.e., $C = D = 1$) reduces to the setting studied in \cite{mastin}.

\begin{algorithm}[H]
\caption{$\myopic$ policy}
\label{alg:greedy_algo}
\SetAlgoNlRelativeSize{0}
\DontPrintSemicolon
\KwOut{Updated matching $M(t)$}
Compute the optimal transport plan $Q^*$\;
\For{$t \in [T]$}{
Choose $c_t \in \mathcal{C}$ at random with probability $Q^*(c_t,d_t)/\nu(d_t)$

\If{$\mathcal{F}_{c_t}(t) = \emptyset$}{
    $M(t) = M(t-1)$\;
}
\Else{
    $M(t) = M(t-1) \cup \{(n_t, t)\}$ for $n_t \sim \text{unif}(\mathcal{F}_{c_t}(t))$
    }}
\end{algorithm}
In order to understand the behavior of the $\myopic$ policy in large-scale matching markets, we consider a deterministic approximation via an ordinary differential equation (ODE). The guiding intuition is that, as the number of agents grows (nodes in the graph), stochastic variability in the system averages out, and the evolution of the system can be captured by a smooth deterministic trajectory. However, the ODE associated with the dynamics of the $\myopic$ policy \Cref{eq:edo_greedy} is nonlinear and does not generally admit a closed-form solution, especially due to the complex structure of the underlying  graph encoded in the parameters $a_{c,d}$ and $Q^*(c,d)$. This makes direct analysis challenging. In certain structured settings, however, the ODE becomes more tractable. For instance, when $a_{c,d} = c$ for all $(c,d)$ which corresponds to the Erdős–Rényi model, the ODE simplifies and admits a closed-form solution, as shown in \cite{mastin_jaillet}. This illustrates how the structure of the graph can significantly affect the tractability of the analysis and motivates the need for approximation techniques in the more general, heterogeneous case.

To address this, we introduce a simplified surrogate ODE, which approximates the behavior of the original system while being analytically tractable. This auxiliary equation has a closed-form solution $\tilde{y}_c(t)$, allowing us to extract structural insights such as convergence rates and equilibrium behavior. Crucially, we rigorously control the error between the true ODE solution $y_c(t)$ and the approximate solution $\tilde{y}_c(t)$ by bounding it with an explicit error term $e_c(t)$.

The next theorem formalizes this two-step approximation strategy. It shows that:
\begin{itemize}
    \item The normalized matching size produced by the $\myopic$ policy closely follows the ODE solution $y_c(t)$ with high probability.

\item The ODE solution is well-approximated by the simpler function $\tilde{y}_c(t)$ with a small and explicitly controlled error.
\end{itemize}

\begin{restatable}{theorem}{approxsolution}\label{theorem:approxsolution}
Let $y_c : [0, \alpha] \to \mathbb{R}$ be the solution of the following ODE
\begin{align}
    \begin{cases}
    \dot{y}_c(s) &= \sum_{d=1}^{D} \left(1 - e^{-a_{c,d}(b_c - y_c(s))} \right)Q^*(c,d)\\
    y_c(0) &= 0
    \end{cases}
    \label{eq:edo_greedy}
\end{align}

Then, for each class $c \in \cC$, the matching size $M_c(t)$ produced by $\myopic$  satisfies, for all $t \in [T]$
\begin{align}
   \left| \frac{M_c(t)}{N} - y_c(t/N) \right| \leq  \frac{3 L_ce^{\alpha L_c}}{N^{1/3}}, \ \text{where}\ L_c = \sum_{d=1}^{D} a_{c,d} Q^{*}(c,d)
\end{align}
with probability at least $1-2Ce^{- N^{1/3}L_c^2/ 8\alpha}$. Moreover, for $ c \in [C] $, $ y_c(t)= \tilde{y}_c(t)-e_{c}(t)$, where $ e_c(0) = 0 $,  and $ \tilde{y}_c(t)= b_c-b_c\exp\left( -t L_c \right)$, and $e_c$ satisfies, 
$$e_c(t)\leq  \frac{J_c}{L_c}(1-e^{-L_c t}) $$
    Where  $J_c= \frac{b_c^2}{2}\sum_{d=1}^{D}a_{c,d}^2 Q^{*}(c,d)$.
\end{restatable}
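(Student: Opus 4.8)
The statement splits into a probabilistic claim (concentration of $M_c(t)/N$ around the ODE solution $y_c$) and a deterministic one (comparison of $y_c$ with the explicit surrogate $\tilde y_c$). I would prove them separately; the concentration is the crux, and the comparison is elementary ODE bookkeeping.

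\textbf{Concentration (differential equation method).} The plan is to realize \eqref{eq:edo_greedy} as the fluid limit of the chain $(M_c(t))_t$ and to control fluctuations by a Doob decomposition together with Azuma's inequality. Condition on the offline class labels, so that $|\mathcal N_c|=b_cN$ is fixed, and let $\mathcal H_t$ be the history just before the $t$-th arrival. An arrival of online class $d$ has probability $\nu(d)$ and triggers selection of offline class $c$ with probability $Q^*(c,d)/\nu(d)$, so the event (select $c$, arrival class $d$) has probability $Q^*(c,d)$; given it, the match succeeds iff at least one of the $b_cN-M_c(t)$ free class-$c$ nodes is a neighbour of $t$, which occurs with probability $1-(1-a_{c,d}/N)^{b_cN-M_c(t)}$. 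Summing over $d$,
\begin{equation*}
\mathbb{E}[M_c(t+1)-M_c(t)\mid\mathcal H_t]=\sum_{d=1}^{D}Q^*(c,d)\Bigl(1-(1-a_{c,d}/N)^{b_cN-M_c(t)}\Bigr)=:f_c\bigl(M_c(t)\bigr),
\end{equation*}
which depends on $M_c(t)$ alone, so each class is analysed in isolation. Writing $M_c(t)$ as its compensator plus a martingale $S_c(t)$ with increments in $[-1,1]$ sets up the decomposition.

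\textbf{Key estimates and the main obstacle.} Two facts drive the bound. First, $1-(1-a/N)^{b_cN-m}=1-e^{-a(b_c-m/N)}+O(1/N)$, so the rescaled compensator matches the drift of \eqref{eq:edo_greedy} up to $O(1/N)$ per step; summed over $T=\alpha N$ steps this contributes $O(1)$ to $M_c(t)-Ny_c(t/N)$. Second, $|f_c'(m)|\le\frac1N\sum_d a_{c,d}Q^*(c,d)+O(N^{-2})=L_c/N+O(N^{-2})$, i.e.\ $f_c$ is essentially $(L_c/N)$-Lipschitz — this is exactly where $L_c$ enters, and a discrete Gronwall argument turns this into the amplification factor $e^{L_ct/N}$, equal to $e^{\alpha L_c}$ at $t=T$. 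Doob's maximal inequality plus Azuma give $\mathbb{P}(\sup_{t\le T}|S_c(t)|\ge\lambda)\le 2e^{-\lambda^2/2\alpha N}$; choosing $\lambda\asymp N^{2/3}L_c$ makes the amplified martingale term of order $L_ce^{\alpha L_c}N^{2/3}$, i.e.\ of order $N^{-1/3}$ after dividing by $N$, and yields failure probability $2e^{-N^{1/3}L_c^2/8\alpha}$, with a union bound over the $C$ classes producing the factor $C$. The main obstacle is the bookkeeping: verifying the per-step discretization and Lipschitz errors are $O(1/N)$ so their Gronwall-amplified sum is lower order, and fixing the constant in $\lambda$ so that the martingale, discretization, and Lipschitz contributions collapse into the single bound $3L_ce^{\alpha L_c}/N^{1/3}$.

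\textbf{Comparison with the surrogate.} This part is pure calculus. Since the drift in \eqref{eq:edo_greedy} is non-negative and vanishes at $y_c=b_c$, one has $0\le y_c\le b_c$. Applying $x-\tfrac{x^2}{2}\le 1-e^{-x}\le x$ with $x=a_{c,d}(b_c-y_c)$ to each summand, the upper bound gives $\dot y_c\le L_c(b_c-y_c)$ and the lower bound, using $(b_c-y_c)^2\le b_c^2$, gives $\dot y_c\ge L_c(b_c-y_c)-J_c$. The linear comparison ODE $\dot{\tilde y}_c=L_c(b_c-\tilde y_c)$, $\tilde y_c(0)=0$, integrates to $\tilde y_c(t)=b_c(1-e^{-L_ct})$, the claimed surrogate. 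Hence $e_c:=\tilde y_c-y_c$ satisfies $\dot e_c\le -L_ce_c+J_c$ and $\dot e_c\ge -L_ce_c$ with $e_c(0)=0$, so $e_c\ge0$ and, comparing against the solution of $\dot u=-L_cu+J_c$, $u(0)=0$, the comparison principle gives $e_c(t)\le\frac{J_c}{L_c}(1-e^{-L_ct})$, as claimed.
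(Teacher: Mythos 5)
Your proposal is correct and follows essentially the same route as the paper: the same drift computation $\sum_d Q^*(c,d)\bigl(1-(1-a_{c,d}/N)^{b_cN-M_c(t)}\bigr)$, the same $O(1/N)$ comparison with the exponential, the same Lipschitz constant $L_c$, and the same choice $\lambda\asymp N^{2/3}L_c$ yielding the stated error and failure probability --- the only difference being that the paper invokes Wormald's theorem as a black box where you unpack the underlying Doob/Azuma/Gronwall argument. The surrogate comparison is likewise identical in substance (the paper merely shifts variables to $z_c=y_c-b_c$ before applying the same first- and second-order exponential bounds and the comparison principle), and your $J_c$ matches the theorem statement.
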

\subsection{$\balance$ }
The $\balance$ algorithm builds on the limitations of the $\myopic$ policy by trying to take into account node availability. When a node of class $c(t)$ arrives, the $\myopic$ policy selects a compatible class $c$ purely based on potential compatibility between the classes—without considering whether any unmatched nodes from that class are actually available at that moment. In contrast, $\balance$ takes a more informed approach: it chooses the class $c$ that maximizes the probability that at least one unmatched node is available and connected to the arriving node. This probability is estimated by the expression $1 - \left(1 - \frac{a_{c,j}}{N}\right)^{N b_c - M_c(t)}$, which captures the likelihood of an edge existing under a stochastic block model, where edge probabilities between classes are governed by parameters $a_{c,j}$. 
\begin{algorithm}[H]
\caption{$\balance$}
\label{alg:balance_algo}
\SetAlgoNlRelativeSize{0}
\DontPrintSemicolon
\KwOut{Updated matching $M(t)$}
\For{$t \in [T]$}{

Choose $c_t= \argmax_{c\in [1,C]} \sum_{j=1}^{D}(1-(1-\frac{a_{c,j}}{N})^{Nb_c-M_c(t)})\nu(j)$ \; 

\uIf{${\cal F}_{c_t}(t) = \emptyset$}{
    $M(t) = M(t-1)$\;
}
\Else{
        $M(t) = M(t-1) \cup \{(n_t, t)\}$ for $n_t \sim \text{unif}(\mathcal{F}_{c_t}(t))$
}
}
\end{algorithm}
A key distinction between the $\myopic$ and $\balance$ policies lies in how they respond to the evolving state of the system—and this difference has important implications for their large-scale behavior. The $\myopic$ approach selects a class based solely on static compatibility between node types, leading to smooth dynamics. As the system scales, the randomness introduced by node arrivals averages out, and the evolution of the system can be accurately described by an ordinary differential equation (ODE). This continuous, deterministic approximation leverages the fact that the $\myopic$ policy’s decision rule is smooth and Lipschitz-continuous. In contrast, the $\balance$ policy takes a more strategic decision by selecting the class that maximizes the actual match probability. This introduces discontinuities into the process—small variations in the system state can lead to abrupt changes in the selected class. As a result, the system no longer evolves smoothly, and the assumptions required for ODE convergence break down.

To address this, we turn to the framework of differential inclusions, a generalization of ODEs designed to handle such discontinuous dynamics. Rather than prescribing a single trajectory, a differential inclusion allows the system's evolution to follow a set of possible directions at each point, capturing the non-smooth transitions in behavior driven by abrupt changes in decision rules. The following theorem formalizes this connection by proving that, with high probability, the normalized matching sizes produced by $\balance$ converge to a solution of a differential inclusion. For an introduction to differential inclusions and their relevance in this context, see \Cref{sec:differential_inclusion}.
\begin{restatable}{theorem}{inclusiondiffsol}\label{theorem:inclusiondiffsol}
Let $m$ be the unique solution of the differential inclusion $$\dot{m}\in F(m):= \text{conv}\left\{f_{c,b_{c}}(m_c)e_c \ ; \   c\in\argmax_{k\in [C]}f_{k,b_k}(m_k) \right\},$$ which is the convex hull of the mappings $$ f_{c,b_{c}}(x)= \sum_{d=1}^{D} (1-e^{-a_{c,d}(b_c-x)})\nu(d).$$

Then the matching  built by $\balance$ satisfies for all $t\in [T]$ and $c\in \mathcal{C}$,  with probability at least $1-\frac{b\alpha}{N\epsilon^2}$, 
 \begin{align}
        \left|\frac{M_c(t)}{N} -m_c(t/N)\right|\leq \min(\alpha, e^{L\alpha}/\sqrt{2L})\sqrt{A_{\alpha,c}/N+ \delta_c B_{\alpha,c}+\epsilon C_{\alpha,c}}, 
    \end{align}
   The different constants in are defined by $L= \max_{c\in [C]} \sum_{d=1}^{D} a_{c,d}\nu(d)$, $\delta_c= \frac{1}{N}\sum_{d=1}^{D}\frac{a_{c,d}}{e}\nu(d)$, $K_{\alpha}=(c\alpha+\epsilon)e^{c\alpha}/c$, $\epsilon$ as defined in \Cref{lemma:18gast} and $c$ in \Cref{lemma:assum_gast1}, $U_c= \sum_{d=1}^{D}(1-e^{-a_{c,d}b_c})\nu(d)$, $A_{\alpha,c}= U_c(U_c^2+\frac{14U_c}{3}+2K_{\alpha}), B_{\alpha,c}= 2U_c^2+4L\delta_c+12K_{\alpha}, C_{\alpha,c}= 2U_c^2+4L\epsilon +8K_{\alpha}$.
\end{restatable}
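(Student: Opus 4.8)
The plan is to treat $M(t)/N$ as a stochastic-approximation scheme with step size $1/N$ whose ``mean field'' is the set-valued map $F$, and then to invoke a squared-distance Grönwall estimate tailored to the differential inclusion, after quantitatively controlling three distinct error sources: martingale fluctuations, the exponential approximation of the binomial survival probability, and the gap between the discrete and the continuous $\argmax$. First I would fix the filtration $\mathcal{F}_t$ generated by the first $t$ arrivals and the decisions of $\balance{}$. When node $t+1$ arrives with class $d(t+1)=j$ (probability $\nu(j)$), the algorithm selects the $\mathcal{F}_t$-measurable class $c_t$ maximizing $g_N(c,M(t)):=\sum_{j}\bigl(1-(1-a_{c,j}/N)^{Nb_c-M_c(t)}\bigr)\nu(j)$, and a match occurs iff a connected free neighbor exists, with conditional probability $1-(1-a_{c_t,j}/N)^{Nb_{c_t}-M_{c_t}(t)}$. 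Hence the increment is $\Delta_t:=M(t+1)-M(t)=X_t\,e_{c_t}$ with $X_t\in\{0,1\}$, and, writing $m=M/N$, I would record the telescoping decomposition
\[
\frac{M(t)}{N}=\frac1N\sum_{s<t}h_N(M(s))+\frac1N\sum_{s<t}\zeta_s,\qquad h_N(M):=\mathbb{E}[\Delta_s\mid\mathcal{F}_s],\quad \zeta_s:=\Delta_s-\mathbb{E}[\Delta_s\mid\mathcal{F}_s],
\]
where the $\zeta_s$ are martingale differences with $\|\zeta_s\|\le 1$.

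The martingale term is the cleanest: $S_t:=\sum_{s<t}\zeta_s$ is a vector martingale whose per-step conditional variance is bounded by the success probability (itself at most $U_c=\sum_d(1-e^{-a_{c,d}b_c})\nu(d)$ at the start). Summing the $T=\alpha N$ contributions and applying Doob's $L^2$ (Kolmogorov) maximal inequality gives $\mathbb{P}\bigl(\max_{t\le T}\tfrac1N\|S_t\|\ge\epsilon\bigr)\le \operatorname{Var}(S_T)/(N^2\epsilon^2)\le b\alpha/(N\epsilon^2)$, which is precisely the stated failure probability; on the complementary event $\tfrac1N\|S_t\|\le\epsilon$ for all $t$, and this is where $\epsilon$ enters the $C_{\alpha,c}$ term.

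Next I would connect the discrete drift $h_N$ to the inclusion $F$ via two approximations. For the exponential one, the uniform estimate $\bigl|(1-a_{c,j}/N)^{Nb_c-M_c}-e^{-a_{c,j}(b_c-m_c)}\bigr|\lesssim \tfrac{a_{c,j}^2}{N}\sup_{u\ge0}ue^{-u}=\tfrac{a_{c,j}^2}{eN}$ shows $\|h_N(M)-f_{c_t,b_{c_t}}(m_{c_t})e_{c_t}\|$ is of order $\delta_c$, feeding the $\delta_c B_{\alpha,c}$ term. For the $\argmax$ one, since $g_N(\cdot,M)$ is within $O(\delta)$ of $f_{\cdot,b}(m)$, the selected $c_t$ satisfies $f_{c_t,b_{c_t}}(m_{c_t})\ge\max_k f_{k,b_k}(m_k)-O(\delta)$, so the realized drift direction lies within Hausdorff distance $O(\delta+\epsilon)$ of $F(m)=\operatorname{conv}\{f_{c,b_c}(m_c)e_c;\,c\in\argmax_k f_{k,b_k}(m_k)\}$; here I would invoke \Cref{lemma:18gast} for the $\epsilon$-thickening of $F$ and \Cref{lemma:assum_gast1} for the constant $c$ inside $K_\alpha$. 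Because each $f_{c,b_c}$ is $\bigl(\sum_d a_{c,d}\nu(d)\bigr)$-Lipschitz and strictly decreasing in its argument, $F$ is upper semicontinuous with nonempty convex compact values and is one-sided Lipschitz with constant $L=\max_c\sum_d a_{c,d}\nu(d)$, which yields existence and uniqueness of the solution $m$ (cf. \Cref{sec:differential_inclusion}).

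Finally I would close the argument with a squared-distance comparison. Setting $u(s)=\|\,\bar m(s)-m(s)\|^2$ for the interpolated process $\bar m$, the one-sided Lipschitz property gives $\dot u\le 2Lu+(\text{source})$, where the source aggregates the $1/N$ discretization error into $A_{\alpha,c}/N$, the exponential mismatch into $\delta_c B_{\alpha,c}$, and the martingale/$\argmax$ tolerance into $\epsilon C_{\alpha,c}$; integrating via $\int_0^\alpha e^{2L(\alpha-s)}\,ds\le e^{2L\alpha}/(2L)$ (and comparing against the crude bound $\alpha^2\cdot(\text{source})$) produces the factor $\min(\alpha,e^{L\alpha}/\sqrt{2L})$ and the overall square-root form of the bound. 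The main obstacle is exactly this last interface between the \emph{discontinuous} discrete drift and the set-valued limit: unlike the smooth $\myopic{}$ case of \Cref{theorem:approxsolution}, ordinary Grönwall does not apply, and one must exploit the monotonicity of each $f_{c,b_c}$ to obtain uniqueness and a contraction-type comparison, while carefully tracking the behavior near the switching set where the $\argmax$ is non-unique—precisely the non-smoothness that the convex-hull/differential-inclusion formulation is designed to absorb.
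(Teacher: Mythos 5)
Your proposal is correct and follows essentially the same route as the paper: the same drift-plus-martingale decomposition, Kolmogorov's maximal inequality for the $b\alpha/(N\epsilon^2)$ failure probability, the $O(a_{c,d}/(Ne))$ exponential approximation of the binomial survival probability feeding the $\delta_c$ term, upper semicontinuity plus one-sided Lipschitzness of $F$ for existence and uniqueness, and a squared-distance Grönwall comparison for the final rate. The only difference is that the paper invokes Theorems 1 and 4 of Gast and Gaujal as a black box for that last quantitative comparison, whereas you sketch its derivation directly; this changes nothing substantive.
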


As \balance algorithm always picks the class with the highest probability of connection, which decreases in case of match, it tends to progressively equalize these probabilities across the classes in $\cC$. Once two classes have their probabilities (almost) equalized, these stay equal over time by decreasing at the same rate. This generates phases, in which the $k$ "first" classes (with highest initial probability) have their probabilities equalized, decreasing at the same rate, up until reaching the probability of the $(k+1)^{\rm th}$ class. This phasing allows to obtain an explicit formula for $m$. The remainder of the section introduces the exact formula and some intuitions of how it is built, while the verification proof is deferred to \Cref{subsec:proof_solinclusiondiff}. 

In the large-$N$ limit, the marginal probability that the next online node is connected to at least one node in class $c \in \mathcal{C}$, given a vector $\boldsymbol{\beta} \in \mathbb{R}_+^C$ representing the proportions of available nodes in each class, is given by: $f_{c, \beta_c}(z) = \sum_{d\in\cD} \left(1-e^{-a_{c,d}(\beta_c-z)}\right) \nu(d) \,$.
W.l.o.g., we assume for the analysis that the elements of $\cC$ are ordered by decreasing marginal probability of receiving at least one edge at the initial time step -- i.e.  $f_{1, b_1}(0) \geq f_{2, b_2}(0) \geq \dots \geq f_{C, b_C}(0)$. Additionally, as $a_{c,d} > 0$ 
, $f_{c, \beta_c}$ is strictly decreasing and thus invertible, with $f_{c, \beta_c}^{-1}$ also strictly decreasing. 

\paragraph{During phase $k$,} the $C-k$ last classes are not selected at all, while the $k$ first classes have their probabilities equalized, decreasing at the same rate. Given the budgets $\bm{\beta} \in \mathbb{R}^C_+$ at the beginning of the phase, the number of  nodes matched during the phase, in each of the $k$ classes, at time $t$, ends up evolving following $\mu_{k,\bm{\beta}}(t)$ defined as the solution of the following separable ODE
\begin{align*}
    \begin{cases}
        \frac{{\rm d}\mu_{k,\bm{\beta}}}{F_{k,\bm{\beta}}(\mu_{k,\bm{\beta}})} = {\rm d}t\\
        \mu_{k,\bm{\beta}}(0) = 0
    \end{cases}
    ~~~~~\text{where}~~~~~
    F_{k, \bm{\beta}} = \left(\sum_{c=1}^k f_{c, \beta_c}^{-1}\right)^{-1}\,.
\end{align*}
Note that, as $\frac{{\rm d}\mu_{k,\bm{\beta}}}{{\rm d}t} > 0$, $\mu_{k,\bm{\beta}}$ is strictly increasing, positive and invertible.

\paragraph{To assemble the phases,} and provide the full expression of $m$, two sequences need to be defined. First, the sequence of time-steps $(t_k)_{k \in [C]}$ defines the phases. Then $\left(\bm{\beta}^{(k)}\right)_{k\in[C]}$ describes the proportion of available nodes per class at the start of each phase. They are defined as follows:

\begin{align}
    \forall c, k\in[C], ~\bm{\beta}_c^{(k)} &= \begin{cases}
        b_c & \text{ if } k \leq c\\
        \bm{\beta}_c^{(k-1)} - f_{c, \bm{\beta}^{(k-1)}}^{-1}\left(f_{k, \bm{\beta}^{(1)}}(0)\right) & \text{ otherwise}
    \end{cases}
    \\
    \forall k\in[C+1], ~ t_k &= \begin{cases}
        0 & \text{ if }k = 1\\
        \min\left(T, t_{k-1} + \mu_{k-1,\bm{\beta}^{(k-1)}}^{-1}\left(F_{k-1,\bm{\beta}^{(k-1)}}^{-1}\left(f_{k,\bm{\beta}^{(1)}}(0)\right)\right)\right) & \text{otherwise }
    \end{cases}
\end{align}

\begin{restatable}{theorem}{explicitsolution}\label{theorem:explicitsolution}
For any $c\in\cC$, the function
\begin{align}
    m^*_c : t \mapsto (b_c - \bm{\beta}^{(k_t)}_c) + \left(f_{c, \bm{\beta}^{(k_t)}}^{-1}(F_{k_t, \bm{\beta}^{(k_t)}}(\mu_{k_t,\bm{\beta}^{(k_t)}} (t - t_{k_t})))\right)_+
\end{align}
where $k_t = \max \{k\in[C] : t > t_k\}$,  is thel solution of the differential inclusion defined in \Cref{theorem:inclusiondiffsol}.
\end{restatable}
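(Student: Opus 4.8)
The plan is to verify directly that the piecewise-defined $m^*$ solves the differential inclusion of \Cref{theorem:inclusiondiffsol}, and then to invoke the uniqueness already asserted there. The whole argument is organized around the phase decomposition induced by the times $(t_k)$: on each open interval $(t_k,t_{k+1})$ the active set $\argmax_{c} f_{c,b_c}(m^*_c)$ is constant and equal to $\{1,\dots,k\}$, so the set-valued $F$ collapses to a single ODE direction that I can check by hand. The only genuinely delicate points are the invariance of the active set within a phase and the continuity of $m^*$ across phase boundaries; everything else is calculus on the explicitly invertible maps $f_{c,\bm{\beta}^{(k)}_c}$, $F_{k,\bm{\beta}^{(k)}}$ and $\mu_{k,\bm{\beta}^{(k)}}$.

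First I would record the elementary facts used throughout: the shift identity $f_{c,b_c}(x)=f_{c,\bm{\beta}^{(k)}_c}\!\big(x-(b_c-\bm{\beta}^{(k)}_c)\big)$, which lets me pass between the global matched mass $m_c$ and the within-phase mass, together with the strict monotonicity (hence invertibility) of $f_{c,\bm{\beta}^{(k)}_c}$, of $\sum_{c=1}^k f_{c,\bm{\beta}^{(k)}_c}^{-1}$, of $F_{k,\bm{\beta}^{(k)}}$ and of $\mu_{k,\bm{\beta}^{(k)}}$ already noted in the text. I would then set $\pi(t):=F_{k,\bm{\beta}^{(k)}}\big(\mu_{k,\bm{\beta}^{(k)}}(t-t_k)\big)$ on phase $k$ and show, via the shift identity, that every active class $c\le k$ satisfies $f_{c,b_c}(m^*_c(t))=\pi(t)$, while for every inactive class $c>k$ the $(\cdot)_+$ truncation forces $m^*_c(t)=0$, so that $f_{c,b_c}(m^*_c(t))=f_{c,b_c}(0)$. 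Since $\pi$ decreases from $f_{k,b_k}(0)$ toward $f_{k+1,b_{k+1}}(0)$ as $\mu_{k,\bm{\beta}^{(k)}}$ increases, the initial ordering gives $\pi(t)>f_{c,b_c}(0)$ for all $c>k$ strictly inside the phase, which pins the argmax to $\{1,\dots,k\}$ and makes the trajectory self-consistent with its own active set.

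The core computation is the membership $\dot m^*(t)\in F(m^*(t))$ within a phase. Differentiating $m^*_c=(b_c-\bm{\beta}^{(k)}_c)+f_{c,\bm{\beta}^{(k)}_c}^{-1}(\pi)$ through the chain $f_{c,\bm{\beta}^{(k)}_c}^{-1}\circ F_{k,\bm{\beta}^{(k)}}\circ\mu_{k,\bm{\beta}^{(k)}}$ and using $\dot\mu_{k,\bm{\beta}^{(k)}}=F_{k,\bm{\beta}^{(k)}}(\mu_{k,\bm{\beta}^{(k)}})=\pi$ yields $\dot m^*_c=\lambda_c\,f_{c,b_c}(m^*_c)$ with $\lambda_c=(f_{c,\bm{\beta}^{(k)}_c}^{-1})'(\pi)\,F_{k,\bm{\beta}^{(k)}}'(\mu_{k,\bm{\beta}^{(k)}})$. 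Writing $G=\sum_{c\le k}f_{c,\bm{\beta}^{(k)}_c}^{-1}$, so that $F_{k,\bm{\beta}^{(k)}}=G^{-1}$ and $F_{k,\bm{\beta}^{(k)}}'(\mu_{k,\bm{\beta}^{(k)}})=1/G'(\pi)$, I obtain $\lambda_c=(f_{c,\bm{\beta}^{(k)}_c}^{-1})'(\pi)/G'(\pi)$, which is nonnegative (a ratio of two negative quantities) and satisfies $\sum_{c\le k}\lambda_c=1$. Since inactive classes contribute $\dot m^*_c=0$, the vector $\dot m^*$ is exactly the convex combination $\sum_{c\le k}\lambda_c f_{c,b_c}(m^*_c)e_c$, hence lies in $F(m^*)$ given the argmax established above.

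Finally I would stitch the phases together. Continuity of $m^*$ at $t_k$ reduces to the identity $F_{k,\bm{\beta}^{(k)}}(0)=f_{k,b_k}(0)$, which I establish by showing $f_{c,\bm{\beta}^{(k)}_c}^{-1}(f_{k,b_k}(0))=0$ for every $c\le k$; this is precisely the cancellation encoded in the budget recursion $\bm{\beta}^{(k)}_c=\bm{\beta}^{(k-1)}_c-f_{c,\bm{\beta}^{(k-1)}_c}^{-1}(f_{k,\bm{\beta}^{(1)}}(0))$, so that both one-sided limits of $m^*_c$ at $t_k$ equal $b_c-\bm{\beta}^{(k)}_c$. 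The initial condition $m^*(0)=0$ follows from $\bm{\beta}^{(1)}=b$ and the same cancellation at $\pi=f_{1,b_1}(0)$, while the $\min(T,\cdot)$ in the definition of $t_k$ merely caps the horizon and leaves the verification on $[0,T]$ unaffected. Having shown $m^*$ is continuous, piecewise-$C^1$, satisfies $\dot m^*(t)\in F(m^*(t))$ off the finitely many switching times (a null set, which is all that the inclusion requires), and has $m^*(0)=0$, it is a solution, and uniqueness from \Cref{theorem:inclusiondiffsol} identifies it as the solution. I expect the main obstacle to be rigorously establishing the invariance of the active set over each phase — that equalized classes remain equalized and that inactive classes do not enter early — since this is exactly what legitimizes the reduction to the separable ODE for $\mu_{k,\bm{\beta}^{(k)}}$; the nonnegativity and normalization of the weights $\lambda_c$ and the boundary cancellation $F_{k,\bm{\beta}^{(k)}}(0)=f_{k,b_k}(0)$ are the technical linchpins.
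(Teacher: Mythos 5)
Your proposal is correct and follows the same overall verification skeleton as the paper — check phase by phase that the explicit formula satisfies the inclusion, then invoke uniqueness from \Cref{theorem:inclusiondiffsol} — but the mechanism for the key membership step differs. The paper first establishes your two structural facts as \Cref{lem:explicitsolutionpiecewise} (inactive classes are frozen at zero by the $(\cdot)_+$) and \Cref{lem:explicitsolutionallequal} (all active classes share the common value $F_{k_t,\bm{\beta}^{(k_t)}}(\mu_{k_t,\bm{\beta}^{(k_t)}}(t-t_{k_t}))$, which pins the argmax), and then verifies $\dot{m}^*\in F(m^*)$ by an \emph{aggregate} argument: it characterizes the convex hull of the active scaled basis vectors by the scalar condition $\sum_{c\le k}\dot{m}^*_c\le F_{k_t,\bm{\beta}^{(k_t)}}(\cdot)$ together with $\dot{m}^*_c=0$ for $c>k$, and closes the loop with the sum identity $\sum_c m^*_c = \|b-\bm{\beta}^{(k_t)}\|_1+\mu_{k_t,\bm{\beta}^{(k_t)}}(t-t_{k_t})$ (\Cref{lem:explicitsolutionsumswell}), which reduces everything to the defining ODE $\dot{\mu}_{k,\bm{\beta}}=F_{k,\bm{\beta}}(\mu_{k,\bm{\beta}})$. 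You instead differentiate each coordinate through the chain $f_{c,\bm{\beta}^{(k)}}^{-1}\circ F_{k,\bm{\beta}^{(k)}}\circ\mu_{k,\bm{\beta}^{(k)}}$ and exhibit the explicit barycentric weights $\lambda_c=(f_{c,\bm{\beta}^{(k)}}^{-1})'(\pi)/G'(\pi)$ with $\lambda_c\ge 0$ and $\sum_{c\le k}\lambda_c=1$. Your route requires differentiating the individual inverses (harmless here, since all maps are smooth and strictly monotone) but buys strictly more information: it identifies exactly how the matching flow is apportioned among the equalized classes, and it makes the convex-combination membership immediate without the detour through the sum identity. You also make explicit two points the paper leaves implicit in its induction — continuity at the phase boundaries via the cancellation $F_{k,\bm{\beta}^{(k)}}(0)=f_{k,b_k}(0)$ encoded in the budget recursion, and the initial condition $m^*(0)=0$ — which is a genuine improvement in completeness.
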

\Cref{fig:illustration_fluid_limit_and_regret} illustrates the accuracy of $m^*_c(t/N)$ to estimate $\frac{M_c(t)}{N}$ in the sparse regime.

\subsection{$\realbalance$}
While the $\balance$ policy improves upon the purely compatibility-driven $\myopic$ approach by considering the probability of successful matches, it still relies on an expected availability model—it selects the class that likely has unmatched nodes, without verifying their presence. This can lead to wasted opportunities when the chosen class ends up being empty. To address this limitation, we introduce the $\realbalance$ policy, which takes an additional, more grounded step. Instead of only maximizing the expected match probability, $\realbalance$ ensures that the selected class actually contains at least one available node at the time of decision. This additional check prevents the algorithm from targeting unavailable options, making it more efficient. We prove that the matching size under $\realbalance$ converges with high probability to a differential inclusion. This differential inclusion differs from the one in \Cref{theorem:inclusiondiffsol}, as it explicitly accounts for real-time availability constraints. Full details and results are provided in \Cref{subsec:real_balance}.
\section{Unknown compatibility probabilities}
In many real-world applications, the underlying parameters of the graph—such as the connection probabilities between node classes or the distribution of arriving node types—are not known \emph{a priori}. These parameters may be shaped by latent variables, evolve over time, or be inferred only through noisy and partial observations. Consequently, algorithms must operate under uncertainty and learn these parameters dynamically in order to make effective matching decisions. In this section, we study the setting in which the connection probabilities $ a_{c,d} $ are unknown and must be estimated online. This transforms the problem into a bandit setting, where each class $ c \in \mathcal{C} $ can be viewed as an arm, and upon choosing class $ c_t $ at time $ t $, a Bernoulli reward is observed—indicating whether a successful match occurred between the arriving node and an available node in class $ c_t $.

\subsection{$\learbalance$}
Algorithm~\ref{alg:learbalance_concentration} presents the $\learbalance$ policy, which combines a fixed-duration Explore-Then-Commit (ETC) strategy with the $\balance$ rule for class selection. The algorithm proceeds in two phases. During the exploration phase, which lasts for a fixed number of rounds \( T_{\text{explore}} \), arriving nodes are matched by selecting classes uniformly at random. This allows the algorithm to collect data on the outcomes of triplets \((c, d, m)\), where \(c \in \mathcal{C}\), \(d \in \mathcal{D}\), and \(m\) is the current matching size of the class \(c\). We let \(T_{c,d,m}\) denote the number of times the triplet \((c, d, m)\) has been observed. After this phase, the algorithm enters the commitment phase and uses the collected data to estimate the match failure probabilities. Specifically, it estimates \(D_{c,d}(m) = \left(1 - \frac{a_{c,d}}{N} \right)^{N b_c - m}\) using an estimator \(\hat{D}_{c,d}(m)\), whose form and concentration properties are given in \Cref{theorem:concentrationpowers}. These estimates are then incorporated into the $\balance$ rule to select the class \(c\) that maximizes the expected success probability, weighted by the type distribution \(\nu(d)\).
 
\begin{algorithm}[ht]
\caption{$\learbalance$ }
\label{alg:learbalance_concentration}
\DontPrintSemicolon
\KwIn{$T$, $\nu(d)$, $N$, $T_{\text{explore}}$, confidence level $\delta$}
\KwOut{Matching $M(t)$}

\textbf{Init:} $\mathcal{M}(0) = \emptyset$, $M_c(0) = 0$, $T_{c,d,m}(0) = 0$, $\hat{D}_{c,d}(m) = 1$\;

\For{$t = 1$ \KwTo $T$}{
    Node $t$ of type $d(t)$ arrives\;

    \uIf{$t \leq T_{\text{explore}}$}{
        Select $c_t$ uniformly at random\;
    }
    \Else{
        \ForEach{$c \in [C]$}{
            Let $m = M_c(t-1)$, $s(m) = \frac{b_c}{b_c - m/N}$\;
            Compute $\hat{D}_{c,d}(m)$ 
        }
        Choose $c_t = \arg\max_c \sum_d 1 - \hat{D}_{c,d}(M_c(t{-}1)) \nu(d)$\;
    }

    \uIf{ $ \mathcal{F}_{c_t}(t) \neq \emptyset$ }{
        Match $(n_t,t)$ for $n_t \sim \text{unif}(\mathcal{F}_{c_t}(t))$, update $\mathcal{M}(t)$, $M_{c_t}(t)$, and $Y_{c_t,d(t)}(t) = 1$\;
    }
    \Else{
        No match, $Y_{c_t,d(t)}(t) = 0$\;
    }

    Update $T_{c_t,d(t),M_{c_t}(t)} \gets T_{c_t,d(t),M_{c_t}(t)} + 1$\;
}
\end{algorithm}
\subsection{Regret}
For each class $ i \in \mathcal{C} $, let $ M_i(t) $ and $ \hat{M}_i(t) $ denote the number of matches made by the $\balance$ and $\learbalance$ algorithms, respectively, up to time $ t $. We define the regret of $\learbalance$ as the total difference in matching performance across all classes when compared to $\balance$, which has full knowledge of the match probabilities. In other words, the regret quantifies the cumulative loss incurred by $\learbalance$ due to not knowing the match probabilities in advance. The following result shows that this regret grows at most on the order of $ \mathcal{O}(N^{(q+3)/4}) $ for some $0<q<1$.
\begin{restatable}{theorem}{regretetcbalance}\label{theorem:regretetcbalance}
Let $ R(T) = \sum_{i \in \mathcal{C}} M_i(T) - \hat{M}_i(T) $ denote the regret of $\learbalance$. Suppose the exploration phase lasts for $ T_{\text{explore}} = T^{\frac{q+3}{4}} $, for some $ 0 < q < 1 $. Then the regret satisfies
\[
R(T) = \mathcal{O}(T^{\frac{q+3}{4}}).
\]
\end{restatable}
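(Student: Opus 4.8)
The plan is to decompose the regret $R(T)=\sum_{i\in\cC}\bigl(M_i(T)-\hat M_i(T)\bigr)$ into two qualitatively different sources of loss: the loss incurred \emph{during} the exploration phase, and the loss incurred \emph{after} committing, due to residual estimation error in the class-selection rule. The total regret bound should then emerge as the sum of these two contributions, balanced by the choice $T_{\text{explore}}=T^{(q+3)/4}$.

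\textbf{Step 1: Bounding the exploration loss.} During the first $T_{\text{explore}}$ rounds, $\learbalance$ selects classes uniformly at random rather than following the $\balance$ rule. Since each arrival contributes at most one unit to the matching size, the cumulative difference in matches accrued over the exploration window is crudely bounded by $T_{\text{explore}}$ itself. Thus this term contributes $\cO(T_{\text{explore}})=\cO(T^{(q+3)/4})$, which already matches the claimed rate. The point of the argument is therefore to show that the \emph{commitment-phase} loss is of the \emph{same or smaller} order, so that it does not dominate.

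\textbf{Step 2: Controlling the estimation error and propagating it through the selection rule.} After exploration, the algorithm selects $c_t=\argmax_c\sum_d\bigl(1-\hat D_{c,d}(M_c(t-1))\bigr)\nu(d)$, whereas $\balance$ uses the true $D_{c,d}$. I would invoke the concentration guarantee for $\hat D_{c,d}(m)$ (\Cref{theorem:concentrationpowers}) to show that, with high probability, $|\hat D_{c,d}(m)-D_{c,d}(m)|$ is small — on the order of $1/\sqrt{T_{c,d,m}}$, where the exploration counts $T_{c,d,m}$ scale like $T_{\text{explore}}/(CDN)$ up to constants. The key technical maneuver is a perturbation argument: whenever the estimated and true selection scores are within $2\varepsilon$ of each other (with $\varepsilon$ the uniform estimation accuracy), $\learbalance$ and $\balance$ can only diverge in their choice when the two top classes' true scores are themselves within $O(\varepsilon)$ — and in that regime, picking the ``wrong'' class costs at most $O(\varepsilon)$ in expected match probability per round. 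Summed over the at most $T$ commitment rounds, this yields a cumulative commitment loss of order $T\cdot\varepsilon\sim T/\sqrt{T_{\text{explore}}}=T^{1-(q+3)/8}=T^{(5-q)/8}$.

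\textbf{Step 3: Balancing the two terms.} Combining Steps 1 and 2, $R(T)=\cO\bigl(T^{(q+3)/4}+T^{(5-q)/8}\bigr)$, and one checks that for $0<q<1$ the first exponent dominates: $(q+3)/4\ge(5-q)/8\iff 2(q+3)\ge 5-q\iff 3q\ge -1$, which always holds. Hence $R(T)=\cO(T^{(q+3)/4})$, as claimed. I would note that the role of the parameter $q$ is to trade off the two terms; the stated result fixes $T_{\text{explore}}=T^{(q+3)/4}$ precisely so the commitment loss is absorbed into the exploration loss.

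\textbf{Main obstacle.} The delicate part is \emph{Step 2}, specifically making the perturbation/propagation argument rigorous in the presence of the \emph{state dependence} of the scores: the selection score depends on $M_c(t-1)$, which is itself a random trajectory coupled to the algorithm's past decisions. A wrong early choice can shift the entire subsequent matching trajectory, so one cannot naively compare $\learbalance$ and $\balance$ round-by-round against a fixed reference state. I expect the cleanest route is to compare both algorithms to the common fluid limit $m^*$ from \Cref{theorem:explicitsolution} — using the high-probability concentration of each discrete trajectory around $m^*$ (via \Cref{theorem:inclusiondiffsol}) as an intermediary — so that the regret is bounded by the sum of two deviations from the same deterministic curve, plus the selection-error term. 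Controlling how the estimation error $\varepsilon$ feeds into a perturbation of the differential inclusion's solution, and ensuring the implied constants do not blow up near the phase-transition times $t_k$ where the $\argmax$ is non-unique, is where the real care is required.
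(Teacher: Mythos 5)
Your Step 1 matches the paper. The problem is Step 2: the round-by-round perturbation argument you propose --- ``picking the wrong class costs at most $O(\varepsilon)$ in expected match probability per round, summed over $T$ rounds'' --- is not valid as stated, for exactly the reason you flag yourself in your final paragraph. Once $\learbalance$ and $\balance$ make a single different choice, their state vectors $\hat{M}(t)$ and $M(t)$ diverge, and from then on the two selection scores are evaluated at different states; there is no common reference state against which the per-round loss is $O(\varepsilon)$, and the divergence can compound over time. Your resulting accounting of the commitment loss as $T\varepsilon \sim T^{(5-q)/8}$ therefore does not hold up, and your reading of $q$ as trading exploration length against per-round estimation error is not how the parameter actually enters: in the paper, $q$ calibrates the confidence level $\epsilon^2 = N^{-(1-q)}$ in the martingale concentration underlying the fluid-limit bound of \Cref{theorem:inclusiondiffsol}, and the exploration length is then chosen to match the resulting deviation rate $T^{(q+3)/4}$.

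The route you sketch under ``Main obstacle'' is the paper's actual proof, and it is the part you would need to carry out rather than the per-round comparison. The paper writes $|M_c(T)-\hat M_c(T)| \le |M_c(T)-Nm_c(T/N)| + |\hat M_c(T_{\text{explore}}) - M_c(T_{\text{explore}})| + |Nm_c(T/N)-\hat M_c(T)|$, bounds the first term by $\cO(T^{(q+3)/4})$ directly from \Cref{theorem:inclusiondiffsol} with $\epsilon^2=N^{-(1-q)}$, and for the third term shows that $\hat M_c$ obeys the same stochastic-approximation recursion as $M_c$ up to a perturbation $\Delta_{c,b_c,N}(\hat M_c(t))$, namely the gap between the estimated drift and the true drift, both evaluated along $\hat M$'s \emph{own} trajectory. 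The time-averaged perturbation is driven to zero using the concentration of $\hat D_{c,d}$ together with a margin argument controlling when the estimated argmax can flip, whence $\hat M_c$ converges to the same differential inclusion and inherits the same $\cO(T^{(q+3)/4})$ deviation. Both intermediate terms are thus of the \emph{same} order as the exploration loss; there is no smaller $T^{(5-q)/8}$ contribution to absorb. Your worry about behaviour near the phase-transition times where the argmax is non-unique is legitimate: the paper handles it, somewhat tersely, through the margin condition $\psi \ge 4e^{a}\sqrt{\log(2/\delta)/(2T_{\text{total}})}$, and making that step airtight is where the remaining work lies.
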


\begin{figure}
    \centering
    \includegraphics[width=0.5\linewidth]{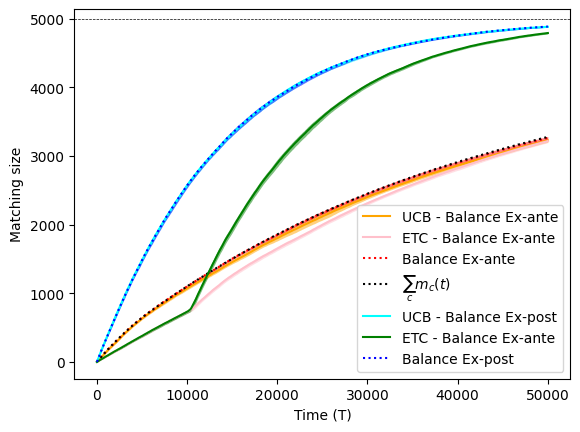}
    \caption{Illustration of the matching size for the different methods. $T=50000$, $N=5000$, $C=5$, $D=6$ and simulations are averaged over $20$ trajectories. {\bf 1.} As expected the \emph{ex-post} versions of the algorithms perform better than their \emph{ex-ante} counterparts.  {\bf 2.} The fluid limit $m^*$ (dark) is an accurate estimator of the actual empirical trajectory of \balance (red). {\bf 3.} A UCB version of balance can be built using the confidence set from \Cref{theorem:concentrationpowers} and performs empirically better than ETC. Yet, its analysis remains an open question.}
    \label{fig:illustration_fluid_limit_and_regret}
\end{figure}

\section{Conclusion}
\label{sec:conclusion}
We studied online bipartite matching within the Stochastic Block Model (SBM), capturing structured heterogeneity in real-world networks through class-dependent connection probabilities. We analyzed two main algorithms under known probabilities: the Myopic policy, which is simple but limited by ignoring availability, and the $\balance$ algorithm, which accounts for compatibility and capacity, and is shown to converge to a differential inclusion. When the probabilities are unknown, we introduced $\learbalance$, a bandit-based extension that learns affinities over time and achieves sublinear regret. Simulations confirm that UCB-based methods outperform $\learbalance$, while $\balance$ under known probabilities closely matches actual outcomes, validating its effectiveness. A promising future research direction is to analyze UCB in this setting using differential inclusion tools, to better understand its asymptotic behavior and theoretical guarantees.

\bibliographystyle{abbrvnat}
\bibliography{biblio}
\newpage
\appendix
\section{Differential inclusions}
\label{sec:differential_inclusion}
This section aims to introduce the fundamental concepts of differential inclusions. Unlike ordinary differential equations (ODE), where the derivative of the unknown function is determined by a single-valued map, differential inclusions generalize this by allowing the derivative to lie within a set-valued map. This broader framework is well-suited for modeling dynamical systems that involve uncertainty, discontinuities, or control constraints.
\subsection{Set-Valued Maps}
Let $ F : \lR^n \to \lR^n $ denote a set-valued map, i.e a mapping which assigns to each point $ x \in \lR^n $ a subset $ F(x) \subseteq \lR^n $. We consider in general that $ F(x) $ is nonempty for all $ x \in \lR^n $. The notation $\langle x,y\rangle$ denotes the standard inner product on $\lR^{d}$, and the norm is given by $\|x\|=\sqrt{\langle x,x\rangle}$. For a set $A\subset \lR^{d}$, we define its norm as $\|A\|=\sup_{x\in A} \|x\|$.

A set-valued map $ F $ is said to be:
\begin{itemize}
    \item \textbf{Upper semicontinuous (u.s.c.):} A set-valued map $F : \mathbb{R}^n \to \mathbb{R}^n$ is upper semicontinuous at a point $y \in \mathbb{R}^n$ if for every sequence $y^{(n)} \to y$ and any sequence $x_n \in F(y^{(n)})$ such that $x_n \to x$, it holds that $x \in F(y)$.

    \item \textbf{Locally bounded} if for every compact set $ K \subset \mathbb{R}^n $, there exists a constant $M$ such that,
$$
\sup_{x \in K} \sup_{v \in F(x)} \|v\| \leq M.
$$
  
    \item \textbf{Measurable} if its graph is measurable in the product sigma-algebra on $ \lR^n \times \lR^n $.
    \item \textbf{One sided Lipschitz} with constant $L$ if for all $z,\tilde{z} \in \lR^n$ and for $z\in F(y), \tilde{z}\in F(\tilde{y})$, 
    $$\langle y-\tilde{y},z-\tilde{z} \rangle \leq L ||y-\tilde{y}||^2$$
\end{itemize}
\subsection{Definition of differential inclusions}
\begin{definition}
    Let $F: \lR^n \to \lR^n $ be a set valued and $T>0$. A differential inclusion is defined as, 
    \begin{align}
    \label{eq:diffinclusion}
        \dot{x}(t) \in F(x(t)) ~~~~~~\text{for}~t\in [0,T]
    \end{align}
    together with an initial condition, 
    $$x(0)=x_0$$
\end{definition}
Let $I \subseteq \lR$, a function $x: I \to \lR^n$ is a solution to the differential inclusion defined in \Cref{eq:diffinclusion} with initial condition $x(0)=x_0$ if there exists a function $\phi : I \to \lR^n$ such that:
\begin{itemize}
    \item For all $t\in I$: $x(t)=x(0)+\int_{0}^{t} \phi(s)\mathrm{d}s$. 
    \item For almost every $t\in I$ $\phi(t)\in F(x(t))$. 
\end{itemize}
\subsection{Existence and uniqueness of the solution}
\begin{proposition*}(\cite{filippov1988,aubin1984,kunze2000})
\begin{itemize}
    \item If $F$ is upper semicontinuous and if there exists $c$ such that $\|F(y)\|\leq c(1+\|y\|) $ then for any initial condition $x_0$, $\dot{x}\in F(x)$ has at least one solution on $[0,+\infty)$ with $x(0)=x_0$. 
    \item If $F$ is one-sided Lipschitz, then for all $T>0$ there exists at most one solution of $\dot{x}\in F(x)$ on $[0,T]$
\end{itemize}
If $F$ is upper semicontinuous and one-sided Lipschitz then for $T>0$,  $\dot{x}\in F(x)$ has a unique solution on $[0,T]$.
\end{proposition*}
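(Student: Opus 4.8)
The plan is to prove the three assertions in order: existence under upper semicontinuity plus linear growth (first bullet), uniqueness under the one-sided Lipschitz condition (second bullet), and then combine them for the final claim. For existence I would follow the classical compactness scheme for differential inclusions. Fix $T>0$ and an initial point $x_0$, and build \emph{approximate (Euler) solutions}: on a uniform grid $0=s_0<s_1<\dots<s_m=T$ of mesh $\eta$, define the polygonal arc $x_\eta$ that starts at $x_0$ and on each $[s_k,s_{k+1})$ evolves with a constant velocity $v_k\in F(x_\eta(s_k))$, which is nonempty by assumption. Each $x_\eta$ is piecewise affine, hence absolutely continuous, and satisfies $\dot x_\eta(s)\in F(x_\eta(\tau_\eta(s)))$, where $\tau_\eta(s)$ denotes the largest grid point $\le s$.

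The growth hypothesis $\|F(y)\|\le c(1+\|y\|)$ gives $\|\dot x_\eta(s)\|\le c\,(1+\|x_\eta(\tau_\eta(s))\|)$, so a discrete Gr\"onwall argument yields a uniform a priori bound $\sup_{s\le T}\|x_\eta(s)\|\le R$ independent of $\eta$. Consequently the velocities are uniformly bounded, the family $\{x_\eta\}$ is equi-Lipschitz, hence equicontinuous and equibounded, and by Arzel\`a--Ascoli a subsequence $x_{\eta_j}$ converges uniformly on $[0,T]$ to a limit $x$. Since the uniformly bounded derivatives lie in a weakly compact subset of $L^2([0,T];\mathbb{R}^n)$, a further subsequence gives $\dot x_{\eta_j}\rightharpoonup\phi$ weakly; passing to the limit in $x_\eta(t)=x_0+\int_0^t\dot x_\eta$ yields $x(t)=x_0+\int_0^t\phi$, so $\dot x=\phi$ almost everywhere. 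The crux -- and the step I expect to be the main obstacle -- is the \emph{closure step}: showing $\phi(s)\in F(x(s))$ for a.e.\ $s$, since weak convergence of the velocities is not by itself sufficient. Here I would apply Mazur's lemma to extract strongly convergent convex combinations of the $\dot x_{\eta_j}$, and combine this with the \emph{convexity} of the values of $F$ together with its upper semicontinuity (the state arguments converge uniformly, hence lie in arbitrarily small neighbourhoods of $x(s)$) to trap the a.e.\ limit velocity inside $F(x(s))$. This is precisely the Convergence Theorem of Aubin--Cellina, and it is where convexity of $F(x)$ is essential; in the paper's application $F$ is by construction a convex hull, so this holds. As the a priori bound extends to every finite horizon, the solution continues to $[0,+\infty)$.

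For uniqueness, let $x$ and $\tilde x$ be two solutions on $[0,T]$ with $x(0)=\tilde x(0)=x_0$ and a.e.\ velocities $\phi(s)\in F(x(s))$ and $\tilde\phi(s)\in F(\tilde x(s))$. Set $g(s)=\|x(s)-\tilde x(s)\|^2$; since $x,\tilde x$ are absolutely continuous, so is $g$, and for a.e.\ $s$
$$g'(s)=2\,\langle x(s)-\tilde x(s),\,\phi(s)-\tilde\phi(s)\rangle \le 2L\,\|x(s)-\tilde x(s)\|^2 = 2L\,g(s),$$
where the inequality is the one-sided Lipschitz condition applied with $y=x(s)$ and $\tilde y=\tilde x(s)$. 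Gr\"onwall's inequality with $g(0)=0$ and $g\ge 0$ forces $g\equiv 0$, i.e.\ $x=\tilde x$ on $[0,T]$. This argument requires only local integrability of the selections, which the definition of solution guarantees.

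The final statement then follows by applying both parts to the same map: upper semicontinuity, together with the local boundedness (hence linear growth on the relevant compact region) that holds for the $F$ of interest as a convex hull of bounded maps, produces at least one solution on $[0,T]$, while the one-sided Lipschitz property rules out a second. Hence $\dot x\in F(x)$ has a unique solution on $[0,T]$ for every $T>0$.
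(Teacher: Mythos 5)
Your proposal is correct, but there is nothing in the paper to compare it against line by line: the paper states this proposition as imported background, with citations to Filippov, Aubin and Kunze, and gives no proof at all — it is used as a black box when existence and uniqueness are invoked in the proof of \Cref{theorem:inclusiondiffsol} (where the paper verifies upper semicontinuity and the one-sided Lipschitz property of its specific $F$ and then appeals to this proposition). Your argument is a faithful reconstruction of the standard proofs in exactly those references: Euler polygonal approximations with the linear-growth/Gr\"onwall a priori bound, Arzel\`a--Ascoli plus weak compactness of the velocities, and the Mazur-lemma closure step (the Aubin--Cellina Convergence Theorem) for existence; and the Gr\"onwall estimate on $g(s)=\|x(s)-\tilde x(s)\|^{2}$ under the one-sided Lipschitz condition for uniqueness, which is precisely the mechanism the paper's Theorem 2 relies on when it checks $\langle z-z', s-s'\rangle \le L\|s-s'\|^{2}$.

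One observation you make deserves emphasis, because it is a genuine (if benign) imprecision in the paper's statement: the closure step requires the values $F(x)$ to be \emph{closed and convex}, a hypothesis absent from the proposition as written. Upper semicontinuity with compact but nonconvex values is not enough — e.g.\ $F(x)=\{-\mathrm{sign}(x)\}$ for $x\neq 0$ and $F(0)=\{-1,1\}$ is u.s.c.\ with linear growth yet admits no solution from $x_{0}=0$ — so trapping the a.e.\ limit of the convex combinations inside $\bigcap_{\epsilon>0}\left(F(x(s))+\epsilon B\right)=F(x(s))$ genuinely uses both closedness and convexity. As you note, this is harmless in the paper, where every $F$ considered is the convex hull of finitely many vectors, hence compact and convex. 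Two small caveats for completeness: in the combined (third) claim you should either carry the growth hypothesis over from the first bullet, as the paper implicitly does, or observe that the one-sided Lipschitz condition itself furnishes the a priori bound (pair $y=x(s)$ with $\tilde y=x_{0}$ and any $\tilde z\in F(x_{0})$ to get $\frac{d}{ds}\|x(s)\|^{2}\lesssim \|x(s)\|^{2}+\|x(s)\|$); and your appeal to weak $L^{2}$ compactness tacitly assumes the selections are measurable, which your Euler construction guarantees since the approximate velocities are piecewise constant.
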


\section{Myopic algorithm}
This section is organized into two parts. In the first part, we analyze the $\myopic$ algorithm. In the second part, we show that, under certain assumptions, the model reduces to the Erdős–Rényi case studied in~\cite{mastin_jaillet}. We begin by considering the following $\myopic$ algorithm, as introduced in the main paper:

\begin{algorithm}[H]
\caption{$\myopic$ policy}
\SetAlgoNlRelativeSize{0}
\DontPrintSemicolon
\KwOut{Updated matching $M(t)$}
Compute the optimal transport plan $Q^*$\;
\For{$t \in [T]$}{
Choose $c_t \in \mathcal{C}$ at random with probability $Q^*(c_t,d_t)/\nu(d_t)$

\If{$\mathcal{F}_{c_t}(t) = \emptyset$}{
    $M(t) = M(t-1)$\;
}
\Else{
    $M(t) = M(t-1) \cup \{(n_t, t)\}$ for $n_t \sim \text{unif}(\mathcal{F}_{c_t}(t))$
    }}
\end{algorithm}
\subsection{Proof of \Cref{theorem:approxsolution} }
\label{sec:proof_th1}
\approxsolution*
The proof of \Cref{theorem:approxsolution} is based on the Wormald theorem \cite{Warnke2019OnWD,Wormald95} and is structured as follows:
\begin{itemize}
\item We first define the evolution of $M_c(t)$, the size of the matching constructed by $\myopic$ in class $c \in [C]$ at time $t \in [T]$. 
\item We then verify that $M_c(t)$ satisfies the conditions required to apply the Wormald theorem \cite{Warnke2019OnWD,Wormald95}. \item We apply the Wormald theorem \cite{Warnke2019OnWD,Wormald95} to analyze the behavior of $M_c(t)$. 
\item Next, we construct an approximate solution to the differential equation that serves as a continuous approximation of $M_c(t)$. 
\item Finally, we derive an explicit bound on the error between the true solution and its approximation. 
\end{itemize}
Let $\mathbf{M}(t) = (M_1(t), \ldots, M_C(t))$ denote the vector of matching sizes in each class $c \in [C]$ constructed by the $\myopic$ algorithm. For each class $c \in [C]$, the matching size evolves according to the following dynamics:
\begin{align} M_c(t+1) = M_c(t) + \indicator{\exists n \in \mathcal{N}_c(t) ~\text{s.t}~ c_t^* = c ~\text{and}~m_n(t+1) = 1} \end{align}

Here, $c_t^*$ represents the class selected by $\myopic$ at time $t$.

The first step is to compute the expected one-step change in $M_c(t)$, the size of the matching constructed by $\myopic$. This is formalized in the following lemma.
\begin{restatable}{lemma}{onestepchangeSi}\label{lemma:one_step_change_Si}
For $t\in [T]$, $c\in [C]$ and for any $B=(b_1,\hdots,b_{C})$, the expectation of the one-step change of $M_c(t)$, when matching is constructed using $\myopic$ is given by,
    \begin{equation}
        \lE[M_c(t+1)-M_c(t)|\mathbf{M}(t), B]= F_c(t,M_1(t),\hdots, M_C(t))
    \end{equation}
    where $F_c(t,M_1(t),\hdots, M_C(t))=\sum_{d=1}^{D}\left(1 - \left(1-\frac{a_{c,d}}{N}\right)^{Nb_c-M_c(t)}\right) Q^*(c,d) $
\end{restatable}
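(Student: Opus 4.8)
The plan is to recognize that the one-step increment $M_c(t+1)-M_c(t)$ is a Bernoulli indicator, so its conditional expectation is simply the conditional probability of a successful match into class $c$ at round $t$. First I would write
$$M_c(t+1)-M_c(t) = \mathbf{1}\{c_t = c\}\,\mathbf{1}\{\mathcal{F}_c(t)\neq\emptyset\},$$
since $\myopic$ increments the class-$c$ count exactly when it selects class $c$ \emph{and} that class has at least one free neighbour of the arriving vertex $t$ (recall $\mathcal{F}_c(t)$ already encodes both freeness and adjacency to $t$). Taking $\mathbb{E}[\cdot\mid \mathbf{M}(t),B]$ converts the left-hand side into the quantity of interest and the right-hand side into $\mathbb{P}[c_t=c,\ \mathcal{F}_c(t)\neq\emptyset \mid \mathbf{M}(t),B]$.

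Next I would condition on the type $d_t$ of the arriving vertex. Writing $d_t=d$ with prior probability $\nu(d)$ and applying the law of total probability gives
$$\mathbb{E}[M_c(t+1)-M_c(t)\mid \mathbf{M}(t),B] = \sum_{d=1}^{D}\nu(d)\,\mathbb{P}\big[c_t=c,\ \mathcal{F}_c(t)\neq\emptyset \mid d_t=d,\mathbf{M}(t),B\big].$$
The key structural observation — and the main thing to justify carefully — is that, conditionally on $\{d_t=d\}$ and the state $\mathbf{M}(t)$, the two sources of randomness decouple. The class label $c_t$ is drawn by the policy using only $Q^*$ and $d_t$, hence independently of the edges incident to $t$; and the edges between $t$ and the offline nodes are revealed only upon arrival, so they are fresh $\mathrm{Ber}(a_{c,d}/N)$ variables independent of everything determined before $t$, in particular of the matched set $\mathcal{M}_c(t)$ and of the selection $c_t$. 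This is precisely where the online edge-revelation assumption of the model is used.

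Given this decoupling I would factor the joint probability. By definition of the selection rule $\mathbb{P}[c_t=c\mid d_t=d]=Q^*(c,d)/\nu(d)$, while for the availability event I note that, conditionally on $B$, there are exactly $|\mathcal{N}_c|-M_c(t)=Nb_c-M_c(t)$ free vertices in class $c$, each independently connected to $t$ with probability $a_{c,d}/N$, so the probability that none is connected is $(1-a_{c,d}/N)^{Nb_c-M_c(t)}$, giving
$$\mathbb{P}\big[\mathcal{F}_c(t)\neq\emptyset \mid d_t=d,\mathbf{M}(t),B\big]=1-\Big(1-\tfrac{a_{c,d}}{N}\Big)^{Nb_c-M_c(t)}.$$
Substituting both factors, the $\nu(d)$ from the selection probability cancels the prior $\nu(d)$, leaving
$$\sum_{d=1}^{D} Q^*(c,d)\Big(1-\big(1-\tfrac{a_{c,d}}{N}\big)^{Nb_c-M_c(t)}\Big)=F_c(t,M_1(t),\dots,M_C(t)),$$
which is the claim. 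I expect the only genuinely delicate point to be the conditional-independence argument of the second paragraph: one must state explicitly that conditioning on $\mathbf{M}(t)$ does not bias the edges of the new vertex $t$, because these are drawn independently of the past matching history, and that the free/matched partition of $\mathcal{N}_c$ is measurable with respect to the pre-$t$ information. Everything else reduces to a one-line binomial computation.
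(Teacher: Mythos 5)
Your proposal is correct and follows essentially the same route as the paper's proof: both rewrite the increment as an indicator, condition on the arriving type $d$ with weight $\nu(d)$, factor the joint probability into the selection probability $Q^*(c,d)/\nu(d)$ times the availability probability $1-(1-a_{c,d}/N)^{Nb_c-M_c(t)}$, and let the $\nu(d)$ factors cancel. Your explicit justification of why the fresh edges of vertex $t$ are independent of the past matching history is a point the paper leaves implicit, but it is the same argument.
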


\begin{proof}
    
    Moving to conditional expectation gives, 
    \begin{align}
        \lE&[M_c(t+1)-M_c(t)|\mathbf{M}(t), B]\\
        &= \lE[\indicator{\exists n\in {\cal N}_{c}(t),c_t^*=c, m_n(t+1)=1}|\mathbf{M}(t), B]\\
        &= \lP(\exists n\in {\cal N}_{c}(t),c_t^*=c, m_n(t+1)=1|\mathbf{M}(t), B)\\
        &= \sum_{d=1}^{D}\lP(\exists n\in {\cal N}_{c}(t),c_t^*=c, m_n(t+1)=1|\mathbf{M}(t), B, d(t+1)=d)\nu(d)\\
        &= \sum_{d=1}^{D}\lP(\exists n\in {\cal N}_{c}(t), m_n(t+1)=1|\mathbf{M}(t), B, d(t+1)=d,c_t^{*}=c) \nu(d)\nonumber\\
        &~~~~~~~~~~~~~~ \lP(c_t^*=c|\mathbf{M}(t), B, d(t+1)=d)\\
        &= \sum_{d=1}^{D}\left(1 - (1-p(c,d))^{Nb_c-M_c(t)}\right) \nu(d)Q^*(c,d)/\nu(d)  \\
        &= \sum_{d=1}^{D}\left(1 - \left(1-\frac{a_{c,d}}{N}\right)^{Nb_c-M_c(t)}\right)  Q^*(c,d)
        \end{align}
\end{proof}
The following lemma establishes the Lipschitz continuity of the function $f_c(x)= \sum_{d=1}^{D} (1-e^{-a_{c,d}(Nb_c-x)})Q^{*}(c,d)$. 
\begin{lemma}
\label{lemma:Lipschitzconstantgreedy}
    For $x\leq Nb_c$ and $c\in [C]$, the function $f_c$ is $L_c$-Lipschitz with  $L_c= \sum_{d=1}^{D} a_{c,d} Q^{*}(c,d) $.
\end{lemma}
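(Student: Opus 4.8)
The plan is to exploit that $f_c$ is continuously differentiable in $x$ and that a differentiable function whose derivative is bounded in absolute value by $L_c$ on a convex domain is automatically $L_c$-Lipschitz there. Concretely, I would first differentiate the finite sum defining $f_c$ term by term, then bound $|f_c'(x)|$ uniformly over the admissible range $x \le Nb_c$, and finally conclude via the mean value theorem. Since this is a finite sum of smooth terms, no existence or regularity issues arise.

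For the derivative, each summand is $g_d(x) = (1 - e^{-a_{c,d}(Nb_c - x)})Q^*(c,d)$, and a direct chain-rule computation gives $g_d'(x) = -a_{c,d}\,e^{-a_{c,d}(Nb_c - x)}Q^*(c,d)$, so that
\begin{align*}
f_c'(x) = -\sum_{d=1}^{D} a_{c,d}\, e^{-a_{c,d}(Nb_c - x)}\, Q^*(c,d).
\end{align*}

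The key step is the uniform bound, and this is where the domain restriction does all the work. On $x \le Nb_c$ we have $Nb_c - x \ge 0$; since $a_{c,d} \ge 0$, the exponent $-a_{c,d}(Nb_c - x)$ is nonpositive, hence $e^{-a_{c,d}(Nb_c-x)} \in (0,1]$. Applying the triangle inequality term by term and using $Q^*(c,d)\ge 0$ then yields
\begin{align*}
|f_c'(x)| \le \sum_{d=1}^{D} a_{c,d}\, e^{-a_{c,d}(Nb_c-x)}\, Q^*(c,d) \le \sum_{d=1}^{D} a_{c,d}\, Q^*(c,d) = L_c.
\end{align*}
Finally, for any $x_1, x_2 \le Nb_c$ the segment joining them stays in the domain, so the mean value theorem gives $|f_c(x_1) - f_c(x_2)| = |f_c'(\xi)|\,|x_1 - x_2| \le L_c\,|x_1 - x_2|$ for some intermediate $\xi$, which is exactly the claimed $L_c$-Lipschitz property.

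I do not expect a genuine obstacle here; the only point requiring care is the restriction $x \le Nb_c$, which is precisely what guarantees each exponential is $\le 1$ and hence makes the constant $L_c$ valid (and tight, attained in the limit $x \to Nb_c$). Without this constraint the exponentials could exceed $1$ and the derivative would not be uniformly bounded, so the lemma deliberately confines attention to $x \le Nb_c$, the physically meaningful regime since $x$ tracks a matching size that never exceeds the class budget $Nb_c$.
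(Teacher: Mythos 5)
Your proof is correct and follows essentially the same route as the paper's: differentiate the finite sum, bound $|f_c'(x)|$ by $L_c$ using that $e^{-a_{c,d}(Nb_c-x)}\le 1$ for $x\le Nb_c$, and conclude with the mean value theorem. Your remark on why the domain restriction is what makes the exponentials bounded by one is a point the paper leaves implicit, but the argument is the same.
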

\begin{proof}
   For $x \leq Nb_c$, the function $f_c$ is a sum of continuous and differentiable functions on $\lR$. Its derivative is given by
\begin{align}
    |f_c'(x)| = \sum_{d=1}^{D} e^{-a_{c,d}(Nb_c - x)} a_{c,d} Q^{*}(c,d) \leq \sum_{d=1}^{D} a_{c,d} Q^{*}(c,d) .
\end{align}
Since $f_c$ is differentiable, the Mean Value Theorem implies that for any $x, y \in \lR$ with $x, y \leq Nb_c$, there exists $\xi \in (x, y)$ such that
\begin{align}
    |f_c(x) - f_c(y)| = |f_c'(\xi)| \, |x - y| \leq L_c |x - y|,
\end{align}
where the Lipschitz constant is defined by $ L_c = \sum_{d=1}^{D} a_{c,d} Q^{*}(c,d) $.
\end{proof}
The following lemma is a technical lemma, 
 \begin{lemma}
\label{lemma:technical_lemma}
    For $n>0$, $a\leq n/2$ and $0\leq w\leq 1$, 
    $$0\leq e^{-aw}-\left(1-\frac{a}{n}\right)^{nw}\leq \frac{a}{ne}$$
\end{lemma}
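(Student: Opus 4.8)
The plan is to establish the two inequalities separately: the lower bound is essentially immediate, while the upper bound is where the hypothesis $a \le n/2$ does its work.

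For the lower bound I would use the elementary inequality $1 - x \le e^{-x}$, valid for all real $x$. Applying it to $x = a/n \in [0,1)$ gives $1 - a/n \le e^{-a/n}$, and since both sides are nonnegative and the map $u \mapsto u^{nw}$ is nondecreasing on $[0,\infty)$ for $nw \ge 0$, raising to the power $nw$ yields $(1-a/n)^{nw} \le e^{-aw}$. This is exactly the claim $0 \le e^{-aw} - (1-a/n)^{nw}$.

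For the upper bound, write $\beta = 1 - a/n$ and factor out the exponential,
$$ e^{-aw} - \beta^{nw} = e^{-aw}\bigl(1 - e^{s}\bigr), \qquad s := w\bigl(a + n\ln(1 - a/n)\bigr). $$
Using the convexity inequality $e^{s} \ge 1 + s$, I obtain $1 - e^{s} \le -s = w\,\delta$, where $\delta := -a - n\ln(1-a/n) \ge 0$. Combining this with the bound $w e^{-aw} \le \tfrac{1}{ae}$ — which holds on $[0,1]$ since the global maximum of $w \mapsto w e^{-aw}$ on $[0,\infty)$ is attained at $w = 1/a$ with value $\tfrac{1}{ae}$ — gives
$$ e^{-aw} - \beta^{nw} \le w e^{-aw}\,\delta \le \frac{\delta}{a e}. $$
It then remains to show $\delta \le a^2/n$, which reduces the whole statement to $\tfrac{\delta}{ae} \le \tfrac{a}{ne}$.

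The main obstacle — and the only point where the assumption $a \le n/2$ is actually used — is the estimate $\delta \le a^2/n$, equivalently the logarithmic inequality $\ln(1-x) \ge -x - x^2$ for $x = a/n \in [0, 1/2]$. I would prove this by setting $p(x) = \ln(1-x) + x + x^2$, observing that $p(0) = 0$ and computing
$$ p'(x) = \frac{-1}{1-x} + 1 + 2x = \frac{x(1-2x)}{1-x}, $$
which is nonnegative precisely when $x \le 1/2$. Hence $p$ is nondecreasing on $[0,1/2]$ and therefore nonnegative there, yielding $-n\ln(1-a/n) \le a + a^2/n$, i.e.\ $\delta \le a^2/n$. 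Chaining this with the previous display closes the upper bound and completes the proof.
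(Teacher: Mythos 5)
Your proof is correct and follows essentially the same route as the paper: both arguments rest on $1-x\le e^{-x}$ for the lower bound, and on the inequality $\ln(1-x)\ge -x-x^2$ for $x\le 1/2$ (which the paper states multiplicatively as $1-x\ge e^{-x-x^2}$), the linearization $e^{s}\ge 1+s$, and the bound $awe^{-aw}\le 1/e$ for the upper bound. The only difference is cosmetic — you work additively in the exponent via the defect $\delta$ and supply a derivative proof of the logarithmic inequality that the paper simply cites as elementary.
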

\begin{proof}
    Using the following inequalities: $1-x \geq e^{-x-x^2}$ for $x\leq \frac{1}{2}$ and $1-x\leq e^{-x}$ for $x\geq 0$, we obtain $e^{-aw}\left(1-\frac{a^2 w}{n}\right)\leq \left(1-\frac{a}{n}\right)^{nw}\leq e^{-aw} $.  The result follows by rearranging terms and using that $aw e^{-aw}\leq 1/e$. 
    \end{proof}
    
The next lemma bounds the distance between $F_c$ defined in \Cref{lemma:one_step_change_Si} and $f_c$, 
\begin{lemma}
\label{lemma:distance_f_cF}
    For $c\in [C]$, 
    \begin{align}
        |F_c(t,M_1(t),\hdots,M_{C}(t))- f_c(M_c(t))|\leq \sum_{d=1}^{D}\frac{a_{c,d}}{N e}Q^{*}(c,d)
    \end{align}
\end{lemma}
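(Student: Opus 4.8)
The plan is to reduce the bound directly to the pointwise estimate of \Cref{lemma:technical_lemma}, handling the sum over $d \in \mathcal{D}$ term by term. Both $F_c$ (from \Cref{lemma:one_step_change_Si}) and $f_c(M_c(t))$ are convex combinations, weighted by the nonnegative coefficients $Q^*(c,d)$, of expressions of the form $1 - g_{c,d}$, where $g_{c,d}$ is either the exact survival factor $\left(1 - \frac{a_{c,d}}{N}\right)^{Nb_c - M_c(t)}$ or its exponential surrogate $e^{-a_{c,d}(b_c - M_c(t)/N)}$. Subtracting, the constant $1$ cancels and the coefficients factor out, so
\begin{align*}
F_c(t,M_1(t),\dots,M_C(t)) - f_c(M_c(t)) = \sum_{d=1}^{D} \left[ e^{-a_{c,d}(b_c - M_c(t)/N)} - \left(1 - \tfrac{a_{c,d}}{N}\right)^{Nb_c - M_c(t)} \right] Q^*(c,d).
\end{align*}

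First I would verify the hypotheses needed to invoke \Cref{lemma:technical_lemma} on each summand, under the substitution $n = N$, $a = a_{c,d}$, and $w = b_c - M_c(t)/N$. The constraint $0 \le w \le 1$ follows from $0 \le M_c(t) \le N b_c$ (the matching in class $c$ never exceeds the class budget $N b_c$) together with $b_c \le 1$; the constraint $a_{c,d} \le N/2$ holds in the sparse regime once $N$ is large, since every $a_{c,d}$ is bounded by the fixed constant $a$. With this substitution one has $n w = N b_c - M_c(t)$ and $a w = a_{c,d}(b_c - M_c(t)/N)$, so each bracketed term is exactly $e^{-a w} - (1 - a/n)^{n w}$.

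\Cref{lemma:technical_lemma} then gives, for every $d$, that this bracketed term lies in $\left[0, \frac{a_{c,d}}{Ne}\right]$. Since the weights $Q^*(c,d)$ are nonnegative, the entire difference is itself nonnegative and bounded above by $\sum_{d=1}^{D} \frac{a_{c,d}}{Ne} Q^*(c,d)$, so its absolute value satisfies the claimed bound. No triangle inequality is even required, because every summand carries the same (nonnegative) sign.

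The only genuine subtlety — and the step I would check most carefully — is the correct identification of the scaling in the exponent: the relevant surrogate exponent is $a_{c,d}(b_c - M_c(t)/N)$, which matches $\left(1-\frac{a_{c,d}}{N}\right)^{N b_c - M_c(t)}$ through the $\log(1-x)\approx -x$ relation that \Cref{lemma:technical_lemma} makes precise. The companion point is checking $a_{c,d}\le N/2$, which is exactly what guarantees validity of the lower bound $1-x \ge e^{-x-x^2}$ underlying \Cref{lemma:technical_lemma}. Everything else is mechanical.
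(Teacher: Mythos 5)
Your proof is correct and follows essentially the same route as the paper's: the constant terms cancel, \Cref{lemma:technical_lemma} is applied to each summand with $n=N$, $a=a_{c,d}$, $w=b_c-M_c(t)/N$, and the resulting bounds are summed against the nonnegative weights $Q^{*}(c,d)$. Your explicit check of the hypotheses ($0\le w\le 1$ and $a_{c,d}\le N/2$) and of the exponent scaling $a_{c,d}\bigl(b_c-M_c(t)/N\bigr)$ is more careful than the paper's one-line invocation, which in fact displays the surrogate exponent as $a_{c,d}(Nb_c-M_c(t))$ --- an apparent typo that your reading correctly resolves.
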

\begin{proof}
    For $c\in [C]$, 
    \begin{align}
       &|F_c(t,M_1(t),\hdots,M_{C}(t))- f_c(M_c(t))|\\
       &= \left|\sum_{d=1}^{D} e^{-a_{c,d}(Nb_c-M_c(t))}Q^{*}(c,d)-\left(1-\frac{a_{c,d}}{N}\right)^{Nb_c-M_c(t)}Q^{*}(c,d)\right|\\
       &\leq \sum_{d=1}^{D}\frac{a_{c,d}}{N e}Q^{*}(c,d)~~~~~~~~~~~~(\Cref{lemma:technical_lemma})
    \end{align}
\end{proof}
Now we are ready to prove \Cref{theorem:approxsolution}.
 \begin{proof}
 From \Cref{lemma:one_step_change_Si}, we have 
 \begin{align}
     \lE[M_c(t+1)-M_c(t)|\mathbf{M}(t), B]&= \sum_{d=1}^{D}\left(1 - \left(1-\frac{a_{c,d}}{N}\right)^{Nb_c-M_c(t)}\right)  Q^{*}(c,d)
 \end{align}
Let $Y_c(s) = \frac{M_c(sN)}{N}$ denote the normalized matching size in class $c\in [C]$, and define the vector $\mathbf{Y}(s) = (Y_1(s), \ldots, Y_C(s))$ for $0 \leq s \leq T/N$. Then, we obtain:

 \begin{align}
    \frac{\lE[Y_c(s+1/N)-Y_c(s)|\mathbf{Y}(s), B]}{1/N}&= \sum_{d=1}^{D}\left(1 - \left(1-\frac{a_{c,d}}{N}\right)^{Nb_c-NY_c(s)}\right)  Q^{*}(c,d)
 \end{align}
As $ N \to \infty $, we find:
$$\text{for}~~ s \in \left[\frac{T}{N}\right]~~~\dot{y}_c(s)= \sum_{d=1}^{D}(1-e^{-a_{c,d}(b_c-y_c(s))}Q^{*}(c,d), \text{and}~~~y_c(0)=0$$

 Applying Wormald’s theorem~\cite{Warnke2019OnWD}, and considering the domain $ 0 \leq y_s \leq 1$ with $\beta = 1$ (by the nature of the matching process), we define the Lipschitz constant as $L_c = \sum_{d=1}^{D} a_{c,d} Q^{*}(c,d)$ (see \Cref{lemma:Lipschitzconstantgreedy}). Additionally, we set $\delta = \sum_{d=1}^{D} \frac{a_{c,d}}{N e} Q^{*}(c,d)$ as in \Cref{lemma:distance_f_cF}, which gives $\lambda = N^{-1/3} \sum_{d=1}^{D} a_{c,d} Q^{*}(c,d)$.
Therefore, with probability at least $1-2Ce^{- N^{1/3}L_c^2/ 8\alpha}$
the approximation holds.
    \begin{align}
         |M_c(t)-Ny_c(t/n)|\leq  3e^{L_c\alpha}N^{2/3} L_c
    \end{align}
     Where $y_c$ satisfies for $s\in [T/n]$, 
\begin{align}
\label{eq:edo_greedy1}
    \dot{y}_c(s)&= \sum_{d=1}^{D}\left(1-e^{-a_{c,d}(b_c-y_c(s))}\right) Q^{*}(c,d)\\
    \nonumber y_c(0)&=0
\end{align}

Now we want to approximate the solution of \Cref{eq:edo_greedy1}.  To simplify the analysis, we introduce the change of variable $ z_c(t) = y_c(t)-b_c  $ where $t\in [T/N]$. Under this transformation, \Cref{eq:edo_greedy1} becomes:
\begin{align}
    \begin{cases}
    \dot{z}_c(t) &= \sum_{d=1}^{D} \left(1 - e^{a_{c,d} z_c(t)} \right) Q^*(c,d) \\
    z_c(0) &= -b_c
    \end{cases}
    \label{eq:edo_greedy3}
\end{align}

The right-hand side of \Cref{eq:edo_greedy3} involves terms of the form $ 1 - e^{a_{c,d} z_c} $. Since $ -b_c \leq z_c \leq 0 $, we can use the Taylor expansion:
\[
 1-e^{a_{c,d} z_c}  = -a_{c,d} z_c + \mathcal{O}(z_c^2).
\]

Consider now $ \tilde{z}_c $ be the solution of the following linearized differential equation with initial condition $ \tilde{z}_c(0) = -b_c $ and $t\in [T/N]$:
\begin{align}
\label{eq:edo_greedy_approx}
    \tilde{z}_c'(t) = -\tilde{z}_c(t) \sum_{d=1}^{D} a_{c,d} Q^*(c,d).
\end{align}

The solution to \Cref{eq:edo_greedy_approx} is explicitly given by:
\[
\tilde{z}_c(t) = -b_c \exp\left( -t \sum_{d=1}^{D} a_{c,d} Q^*(c,d) \right).
\]

With $ \tilde{z}_c $  in hand, we are ready to bound the error between the exact solution of \Cref{eq:edo_greedy3} and its approximation $\tilde{z}_c$.
Let $e_c= \tilde{z}_c-z_c$ be the approximation error, its derivative for $t\in [T/N]$ is given by, 
\begin{align}
    e_c'(t)&=  \tilde{z}_c'(t)- z_c'(t)\\
    &= \sum_{d=1}^{D}(-a_{c,d}\tilde{z}_{c}(t)-1+e^{a_{c,d}z_{c}(t)}) Q^{*}(c,d)
    \end{align}
Since $ z_c \in [-b_c, 0] $, we have the inequality
$$
1 - e^{a_{c,d} z_c(t)} \leq -a_{c,d} z_c(t) \quad \text{for all } d = 1, \dots, D.
$$
This implies,
$$
\dot{z}_c(t) = \sum_{d=1}^{D} \left(1 - e^{a_{c,d} z_c(t)}\right) Q^*(c,d)
\leq -z_c(t) \sum_{d=1}^{D} a_{c,d} Q^*(c,d).
$$
This upper bound matches the right-hand side of the linearized system~\Cref{eq:edo_greedy_approx} evaluated at $ z_c(t) $. Since both solutions share the same initial condition, $ z_c(0) = \tilde{z}_c(0) = -b_c $, we can apply the comparison principle~\cite[Chapter~3, Lemma~3.4]{Khalil:1173048}.

It follows that
\[
z_c(t) \leq \tilde{z}_c(t) \quad \text{for all } t \in [T/N],
\]
which implies that the approximation error
\[
e_c(t) := \tilde{z}_c(t) - z_c(t) \geq 0.
\] 
 For all $t\in [T/N]$, using Taylor expansion of order 2, 
    \begin{align}
        e_c'(t)&\leq \sum_{d=1}^{D} \left(-a_{c,d}\tilde{z}_c(t)+ a_{c,d}z_c(t)+\frac{a_{c,d}^{2}z_c^{2}(t)}{2}\right)Q^{*}(c,d)\\
        e_c'(t)&\leq \sum_{d=1}^{D} \left(-a_{c,d}e_{c}(t)+\frac{a_{c,d}^{2}b_c^{2}}{2}\right)Q^{*}(c,d) ~~~~~~~~~~~~~~(\text{using}~z_c\in[-b_c,0] )\\
        e_c'(t)&\leq -e_c(t)L_c+ J_c \label{eq:28}
    \end{align}
   We multiply  \Cref{eq:28} by $e^{L_c t}$, 
    
    \begin{align}
        e_c'(t)e^{L_c t} + L_ce^{L_c t}e_c(t)&\leq J_c e^{L_c t}
    \end{align}
    Integrating both sides, 
    \begin{align}
        e^{t L_c} e_{c}(t)-e_c(0)\leq J_c \frac{e^{L_c t}-1}{L_c}
    \end{align}
    Thus, 
    \begin{align}
        e_c(t)&\leq \frac{J_c}{L_c}(1-e^{-L_c t}) +e(0)e^{-L_c t}\\
        e_{c}(t)&\leq \frac{J_c}{L_c}(1-e^{-L_c t})
        \label{eq:error_bound_ec_theorem1}
    \end{align}
    Thus, for $t\in [0,\frac{T}{N}]$ , 
    $$z_c(t)= \tilde{z_c}(t)- e_c(t)$$
    where $e_c(t)$ satisfies \Cref{eq:error_bound_ec_theorem1}, with $L_{c}=  \sum_{d=1}^{D}a_{c,d}Q^{*}(c,d)$ and $J_c= \sum_{d=1}^{D} \frac{a_{c,d}b_c^2}{2} Q^{*}(c,d)$.

Thus replacing $z_c(t)$ by $y_c(t)-b_c$ we get the final result. 
\end{proof}
\subsection{Recovering the Erdős–Rényi case}
Consider the special case where the connection probability depends only on the class $ c $, that is, $ p(c,d) = \frac{a_c}{n} $. In this setting, the graph structure intoduced in \Cref{subsec:model} simplifies to an Erdős–Rényi random graph \cite{bollobas2001random,janson2011random}. Under this assumption, \Cref{eq:edo_greedy1} reduces to:
\begin{align}
    \dot{z}_c(t) &= \left(1 - e^{-a_c z_c(t)}\right) \sum_{d=1}^{D} Q^{*}(c,d), \\
    \frac{-a_c \dot{z}_c(t) \, e^{-a_c z_c(t)}}{e^{-a_c z_c(t)} - 1} &= -a_c \sum_{d=1}^{D} Q^{*}(c,d).
\end{align}

Integrating both sides with respect to time yields:
\begin{align}
    \ln\left| e^{-a_c z_c(t)} - 1 \right| - \ln\left| e^{-a_c b_c} - 1 \right| = -a_c t \sum_{d=1}^{D} Q^{*}(c,d).
\end{align}

Solving for $ z_c(t) $, we obtain the closed-form expression:
\begin{align}
\label{eq:z_isol_erdos_reyni}
    z_c(t) = -\frac{1}{a_c} \ln\left(1 + \left(e^{-a_c b_c} - 1\right) e^{-a_c t \sum_{d=1}^{D} Q^{*}(c,d)}\right).
\end{align}

This shows that when the model is reduced to the Erdős–Rényi setting, we obtain an exact solution to the corresponding differential equation. Consequently, with high probability, the size of the matching in each class $ c \in [C] $ concentrates around $ z_c $ as given in \Cref{eq:z_isol_erdos_reyni}. These results are in close agreement with those found in \cite{mastin}, which also examined the dynamics of matching in Erdős–Rényi graphs and observed similar asymptotic behavior.
\section{Balance algorithm}
\label{sec:balance_algo}
This section is organized into two parts. First, we analyze the case where matching is performed using the $\balance$ algorithm, as defined in the main paper. Next, we extend the analysis to the $\realbalance$ algorithm.


\subsection{$\balance$}
\label{subsec:balance}
We consider the following $\balance$ algorithm. 

\begin{algorithm}[H]
\caption{$\balance$}
\SetAlgoNlRelativeSize{0}
\DontPrintSemicolon
\KwOut{Updated matching $M(t)$}
\For{$t \in [T]$}{

Choose $c_t= \argmax_{c\in [1,C]} \sum_{d=1}^{D}(1-(1-\frac{a_{c,j}}{N})^{Nb_c-M_c(t)})\nu(d)$ \; 

\uIf{${\cal F}_{c_t}(t) = \emptyset$}{
    $M(t) = M(t-1)$\;
}
\Else{
        $M(t) = M(t-1) \cup \{(n_t, t)\}$ for $n_t \sim \text{unif}(\mathcal{F}_{c_t}(t))$
}
}
\end{algorithm}
\subsubsection{Proof of \Cref{theorem:inclusiondiffsol}}
\inclusiondiffsol*
The proof of \Cref{theorem:inclusiondiffsol} is structured as follows:
\begin{itemize}
    \item We first characterize the drift of the process $M_c$.
    \item Next, we verify that $ M_c $ satisfies the assumptions required by Theorem 1 in \cite{gast:inria-00491859}.
    \item Finally, we define the associated differential inclusion and apply Theorem 4 from \cite{gast:inria-00491859} to derive an explicit rate of convergence.
\end{itemize}
The following lemma computes the drift of the process $M_c$ for $c\in [C]$, defined as the conditional expectation $\lE[M_c(t+1) - M_c(t)| B, \mathbf{M}(t)] $, where $B = (b_1, \ldots, b_C)$ and $\mathbf{M}(t)=(M_1(t),\hdots,M_C(t))$. 
\begin{lemma}
\label{lemma:onestepchangebalance}
    For $c\in [C]$, 
    $$ \lE[M_c(t+1)-M_c(t)|B,\mathbf{M}(t)]= H_{c,b_{c},N}(M_c(t)) \indicator{\underset{{k\in [C]}}{\max} H_{k,b_{k},N}(M_{k}(t))= H_{c,b_{c},N}(M_{c}(t))}$$
    where $ H_{c,b_{c},N}(x) = \sum_{d=1}^{D} \left( 1 - \left( 1 - \frac{a_{c,d}}{N} \right)^{Nb_c- x} \right) \nu(d)$.
\end{lemma}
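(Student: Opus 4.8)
The plan is to mirror the one-step computation carried out in \Cref{lemma:one_step_change_Si} for the myopic policy, the essential new ingredient being that under $\balance$ the selected class is a \emph{deterministic} function of the current state rather than a randomized function of the arriving type. First I would write the increment as an indicator: class $c$ gains a match at step $t+1$ precisely when the algorithm selects $c_t=c$ \emph{and} the arriving node is connected to at least one currently unmatched node of class $c$ (the uniform choice of matched node among $\mathcal{F}_{c_t}(t)$ is irrelevant, since $M_c$ increments by $1$ regardless of which neighbor is picked). Taking conditional expectation turns this into a probability:
\begin{align*}
\lE[M_c(t+1)-M_c(t)\mid B,\mathbf{M}(t)]
= \lP\big(c_t=c,\ \exists\, n\in\mathcal{N}_c\setminus\mathcal{M}_c(t):(n,t{+}1)\in\mathcal{E}\ \big|\ B,\mathbf{M}(t)\big).
\end{align*}

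I would then condition on the type $d(t+1)=d$ of the incoming node and apply the law of total probability, writing the right-hand side as $\sum_{d=1}^D \lP(\cdots\mid d(t+1)=d)\,\nu(d)$. The key structural observation — and the point that separates this from the myopic analysis — is that the selection rule $c_t=\argmax_{k}\sum_d(1-(1-a_{k,d}/N)^{Nb_k-M_k(t)})\nu(d)=\argmax_k H_{k,b_k,N}(M_k(t))$ depends only on $(\mathbf{M}(t),B)$ and is independent of $d(t+1)$. Hence the event $\{c_t=c\}$ is $\sigma(\mathbf{M}(t),B)$-measurable and factors out of every term of the sum, leaving $\indicator{c_t=c}\sum_{d}\lP(\exists\text{ connected unmatched node of class }c\mid d(t+1)=d)\,\nu(d)$.

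For the remaining per-type probability I would use that conditioning on $B$ fixes $|\mathcal{N}_c|=Nb_c$, so the number of unmatched class-$c$ nodes before $t+1$ is exactly $Nb_c-M_c(t)$, and that in the SBM each edge $(n,t+1)$ is present independently with probability $p(c,d)=a_{c,d}/N$. Independence across the unmatched nodes then gives $\lP(\exists\text{ connection}\mid d(t+1)=d)=1-(1-a_{c,d}/N)^{Nb_c-M_c(t)}$, and summing against $\nu$ reconstructs exactly $H_{c,b_c,N}(M_c(t))$. It remains only to rewrite $\indicator{c_t=c}$ as $\indicator{\max_{k\in[C]}H_{k,b_k,N}(M_k(t))=H_{c,b_c,N}(M_c(t))}$, which holds by definition of the argmax, producing the claimed formula.

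The main delicacy will be the treatment of ties in the $\argmax$: when several classes attain the maximal value of $H$, the indicator $\indicator{\max_k H_k=H_c}$ is simultaneously $1$ for all of them, whereas the algorithm commits to a single class. I would observe that this discrepancy is precisely what the convex hull in the definition of $F(m)$ in \Cref{theorem:inclusiondiffsol} absorbs; at the level of the drift it is harmless, since ties occur only on a lower-dimensional set of states, and the set-valued limit used in the subsequent stochastic-approximation argument (Theorems~1 and~4 of \cite{gast:inria-00491859}) does not depend on the tie-breaking convention.
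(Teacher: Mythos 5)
Your proposal is correct and follows essentially the same route as the paper's proof: write the increment as an indicator, pass to a conditional probability, condition on the arriving type $d(t+1)$, factor out the ($\sigma(\mathbf{M}(t),B)$-measurable) selection event, and use edge independence across the $Nb_c-M_c(t)$ unmatched class-$c$ nodes to obtain $1-(1-a_{c,d}/N)^{Nb_c-M_c(t)}$. Your explicit remark on argmax ties is a point the paper's proof silently glosses over, and your resolution (the discrepancy is absorbed by the convex hull in the differential inclusion) is the right one.
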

\begin{proof}
 For $c\in [C]$, as defined previously, $M_{c}(t)$ is the number of matched nodes in the class $c$, for $t\in [T]$, $M_c(t)$ follows the dynamics, 
\begin{align}
    M_c(t+1)=M_c(t)+\indicator{\exists n\in {\cal N}_{c}(t)~\text{s.t}~c_t^*=c ~\text{and}~ m_n(t+1)=1}
\end{align}
Let $ c_t^* $ denote the class selected by the $\balance$ algorithm. We define the expected one-step change of the process $ M_c(t) $, for each $ c\in [C] $ and $ t \in [T] $, as follows:
\begin{align}
    &\lE[M_c(t+1)-M_c(t)|N,\mathbf{M}(t)]\\
    &=\lP(\exists n\in {\cal N}_{c}(t)~\text{s.t}~c_t^*=c ~\text{and}~ m_n(t+1)=1| \mathbf{M}(t),B)\\
    &=\sum_{d=1}^{D}\lP(\exists n\in {\cal N}_{c}(t)~\text{s.t}~c_t^*=c ~\text{and}~ m_n(t+1)=1|\mathbf{M}(t), B, d(t+1)=d)\nu(d)\\
        &= \sum_{d=1}^{D}\lP(\exists n\in {\cal N}_{c}(t) ~\text{s.t}~ m_n(t+1)=1|\mathbf{M}(t),B, d(t+1)=d,c_t^*=c) \nu(d)\nonumber\\
        &~~~~~~~~~~~~~~ \lP(c_t^*=c|\mathbf{M}(t), R, d(t+1)=d)\\
        &= H_{c,b_{c},N}(M_c(t)) \indicator{\underset{{k\in [C]}}{\max} H_{k,b_{k},N}(M_{k}(t))= H_{c,b_{c},N}(M_{c}(t))}
\end{align}
 where $ H_{c,b_{c},N}(x) = \sum_{d=1}^{D} \left( 1 - \left( 1 - \frac{a_{c,d}}{N} \right)^{Nb_c - x} \right) \nu(d)$.
 \end{proof}
 The next lemma defines a martingale difference sequence based on the process $M_c$, and proves that its second moment is bounded.  
 \begin{lemma}
\label{lemma:martingale}
Let $c \in [C]$ and $t \in [T]$. Define the process $Q_c(t+1) = M_c(t+1) - M_c(t) - \mathbb{E}[M_c(t+1) - M_c(t)|\mathbf{M}(t),B]$. 
Then  $(Q_c(t))_{t \in [T]}$ is a martingale difference sequence with respect to the filtration generated by $\mathbf{M}(t)$. Moreover, there exists a constant $ b > 0 $ such that:
$$\lE[Q_c(t+1)|\mathbf{M}(t),B]=0$$ and $$\lE[|Q_c(t+1)|^2|\mathbf{M}(t),B]\leq b$$
\end{lemma}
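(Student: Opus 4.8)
The plan is to establish the three claims in sequence; the first two are immediate from the definition of $Q_c$, and only the second-moment bound requires a short structural argument. First I would fix the natural filtration $\mathcal{F}_t = \sigma(\mathbf{M}(0),\dots,\mathbf{M}(t))$, with the budget vector $B$ held fixed throughout, and note that $Q_c(t+1)$ is $\mathcal{F}_{t+1}$-measurable whereas the centering term $\lE[M_c(t+1)-M_c(t)\mid \mathbf{M}(t),B]$ is $\mathcal{F}_t$-measurable. The zero-mean property is then purely definitional: subtracting the conditional mean from the increment gives $\lE[Q_c(t+1)\mid \mathbf{M}(t),B] = \lE[M_c(t+1)-M_c(t)\mid \mathbf{M}(t),B] - \lE[M_c(t+1)-M_c(t)\mid \mathbf{M}(t),B] = 0$, which is exactly the martingale-difference condition. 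Hence $(Q_c(t))_{t\in[T]}$ is a martingale difference sequence with respect to $(\mathcal{F}_t)_{t}$.

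For the moment bound, the key observation is that the per-step increment is a Bernoulli variable. Recalling the dynamics $M_c(t+1)-M_c(t) = \indicator{\exists n\in\mathcal{N}_c(t)\,:\,c_t^*=c,\ m_n(t+1)=1}$, at most one offline node of class $c$ can be matched upon the arrival of a single online node, so the increment takes values in $\{0,1\}$. Writing $X := M_c(t+1)-M_c(t)$ and $p := \lE[X\mid \mathbf{M}(t),B]$, \Cref{lemma:onestepchangebalance} identifies $p = H_{c,b_{c},N}(M_c(t))\,\indicator{\cdots}$, and since $H_{c,b_{c},N}$ is a $\nu$-weighted average of terms lying in $[0,1]$ we have $p\in[0,1]$. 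Therefore $Q_c(t+1)=X-p$ is a centered Bernoulli variable, and
\[
\lE[|Q_c(t+1)|^2\mid \mathbf{M}(t),B] = p(1-p) \le \tfrac14,
\]
so the claim holds with $b=1/4$ (any $b\ge 1/4$ works, and $b=1$ is a cruder but sufficient choice).

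I do not expect a genuine obstacle here: the only point needing care is the justification that $|M_c(t+1)-M_c(t)|\le 1$, i.e. that each arrival contributes at most one new match in a fixed class, which is forced by the matching constraint together with the algorithm adding at most one edge per round. Once the increment is recognized as Bernoulli, the variance bound $p(1-p)\le 1/4$ is automatic and, crucially, uniform in $t$, $c$, and $N$ — which is precisely the form required by the downstream stochastic-approximation and differential-inclusion analysis of \cite{gast:inria-00491859}.
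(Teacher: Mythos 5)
Your proof is correct and follows essentially the same route as the paper's: the zero-mean property is definitional, and the second-moment bound comes from observing that the one-step increment is $\{0,1\}$-valued. The paper simply notes $|M_c(t+1)-M_c(t)|\le 1$ and concludes boundedness, whereas you sharpen this to the explicit constant $b=1/4$ via the centered-Bernoulli variance $p(1-p)$; this is a minor refinement, not a different argument.
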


 \begin{proof} By direct computation, we obtain $$ \mathbb{E}[Q_c(t+1)|\mathbf{M}(t),B] = 0. $$ Furthermore, from the definition of the matching process, for all $c\in [C]$ we have $$ |M_c(t+1) - M_c(t)| \leq 1, \quad \forall t \in [T]. $$ This implies that the second moment of $ Q_c(t+1) $ is bounded. \end{proof}

 The next result proves the first assumption of Theorem $1$ in \cite{gast:inria-00491859}. 
 \begin{lemma}
 \label{lemma:assum_gast1}
     For $c\in [C]$ and $t\in [T]$, let $\mathcal{H}_{c,b_{c},N}(y)={H}_{c,b_{c},N}(y) \indicator{\underset{{k\in [C]}}{\max} {H}_{k,b_{k},N}(y)= {H}_{c,b_{c},N}(y)}$, it satisfies,
     $$\forall y\leq Nb_c, |\mathcal{H}_{c,b_{c},N}(y)|\leq c(1+|y|) $$
     with $\alpha_c=\ln\left(1-\frac{\min_{j\in [D]}(a_{c,j})}{N}\right) , c= \max(|1-e^{\alpha_c {Nb_c}}|,|\alpha_c e^{\alpha_c {Nb_c}}|)$.
 \end{lemma}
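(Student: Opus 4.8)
The plan is to reduce the claim to a sublinear-growth estimate on $H_{c,b_{c},N}$ alone, and then to extract that estimate from the concavity of each summand. First I would observe that the indicator appearing in the definition of $\mathcal{H}_{c,b_{c},N}$ takes values in $\{0,1\}$, so that $|\mathcal{H}_{c,b_{c},N}(y)|\le |H_{c,b_{c},N}(y)|$ pointwise. It therefore suffices to produce a constant with $|H_{c,b_{c},N}(y)|\le c(1+|y|)$ for every $y\le Nb_c$, where $H_{c,b_{c},N}$ is as in \Cref{lemma:onestepchangebalance}. This is exactly the linear-growth hypothesis $\|F(y)\|\le c(1+\|y\|)$ required to invoke the existence half of the Proposition stated in \Cref{sec:differential_inclusion}.

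Next I would rewrite each summand in exponential form. Setting $\beta_{c,d}=\ln(1-a_{c,d}/N)\le 0$, the term $1-(1-a_{c,d}/N)^{Nb_c-y}$ equals $\phi_{c,d}(y):=1-e^{\beta_{c,d}(Nb_c-y)}$. For $y\le Nb_c$ one checks $\phi_{c,d}(y)\in[0,1)$, $\phi_{c,d}(Nb_c)=0$, and $\phi_{c,d}''(y)=-\beta_{c,d}^2\,e^{\beta_{c,d}(Nb_c-y)}\le 0$, so each $\phi_{c,d}$ is nonnegative and concave on $(-\infty,Nb_c]$. Concavity then yields the tangent bound at the origin,
\[
\phi_{c,d}(y)\le \phi_{c,d}(0)+|\phi_{c,d}'(0)|\,|y|,
\]
which I would verify in the two regimes separately: for $0\le y\le Nb_c$ the function is decreasing so $\phi_{c,d}(y)\le\phi_{c,d}(0)$, while for $y<0$ the tangent line itself dominates. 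Here $\phi_{c,d}(0)=1-e^{\beta_{c,d}Nb_c}$ and $|\phi_{c,d}'(0)|=|\beta_{c,d}|\,e^{\beta_{c,d}Nb_c}$, which are precisely the two quantities appearing in the target constant.

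Finally I would sum over $d$ against the weights $\nu(d)$, which sum to one, giving $H_{c,b_{c},N}(y)\le \big(\max_d\phi_{c,d}(0)\big)+\big(\max_d|\phi_{c,d}'(0)|\big)\,|y|$, and then dominate both endpoint quantities by their value at the extreme exponent $\alpha_c=\ln(1-\min_j a_{c,j}/N)$, so that $|H_{c,b_{c},N}(y)|\le c(1+|y|)$ with $c=\max\!\big(|1-e^{\alpha_c Nb_c}|,\,|\alpha_c e^{\alpha_c Nb_c}|\big)$. The main obstacle is exactly this last bookkeeping step: the maps $\beta\mapsto 1-e^{\beta Nb_c}$ and $\beta\mapsto |\beta|e^{\beta Nb_c}$ are not monotone in the same direction as $a_{c,d}$ varies, so some care is needed to confirm that evaluating at $\alpha_c$ uniformly dominates every term and to pin down the precise constant. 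The qualitative conclusion is nonetheless robust to the exact choice of $c$, since each $\phi_{c,d}$ is bounded and concave, which is all that the existence theorem ultimately requires.
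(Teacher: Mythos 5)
Your proposal follows essentially the same route as the paper's proof: drop the indicator (it is $\{0,1\}$-valued), rewrite $(1-a_{c,d}/N)^{Nb_c-y}$ as $e^{\beta_{c,d}(Nb_c-y)}$ with $\beta_{c,d}=\ln(1-a_{c,d}/N)\le 0$, and linearize around $y=0$ to extract a bound of the form $c(1+|y|)$. Your middle step is in fact cleaner than the paper's: the tangent-line bound from concavity of $\phi_{c,d}$ is a correct and tidy way to obtain $\phi_{c,d}(y)\le \phi_{c,d}(0)+|\phi_{c,d}'(0)|\,|y|$ on $(-\infty,Nb_c]$, whereas the paper manipulates the exponentials directly and less carefully.

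The gap is exactly where you flag it, and it is real: the final step of dominating $\max_d\phi_{c,d}(0)$ by $|1-e^{\alpha_c Nb_c}|$ fails. Since $\beta\mapsto 1-e^{\beta Nb_c}$ is decreasing and $\beta_{c,d}\le\alpha_c$ for every $d$, one has $\phi_{c,d}(0)=1-e^{\beta_{c,d}Nb_c}\ge 1-e^{\alpha_c Nb_c}$, with strict inequality whenever $a_{c,d}>\min_j a_{c,j}$; the maximum over $d$ of the constant term is attained at the \emph{largest} $a_{c,d}$, not the smallest, so the stated constant does not dominate it. (The slope term $|\beta|e^{\beta Nb_c}$ is not monotone in $\beta$ either, having an interior maximum at $\beta=-1/(Nb_c)$, so it too need not be maximized at $\alpha_c$.) You should not feel obliged to recover that exact constant: the paper's own proof has the same defect, since its step replacing $a_{c,d}$ by $\min_{j}(a_{c,j})$ makes each exponential larger and hence the nonnegative quantity $1-\sum_d e^{\cdots}\nu(d)$ smaller, so that inequality is reversed. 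The lemma is salvaged either by taking $c=\max_d\max\bigl(\phi_{c,d}(0),|\phi_{c,d}'(0)|\bigr)$, or, most simply, by observing that $H_{c,b_c,N}(y)\in[0,1)$ for all $y\le Nb_c$ (it is a $\nu$-average of quantities in $[0,1)$), so the linear-growth condition holds trivially with constant $1$; nothing downstream depends on the precise value of $c$.
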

 \begin{proof}
     \begin{align}
         |\mathcal{H}_{c,b_{c},N}(y)| &\leq \left| \sum_{d=1}^{D} \left(1-\left(1-\frac{a_{c,d}}{N}\right)^{Nb_c-y}\right)\nu(d)\right|\\
         &= \left| 1-\sum_{d=1}^{D} e^{\ln\left(1-\frac{a_{c,d}}{N}\right){(Nb_c-y)}}\nu(d)\right|\\
         &\leq \left| 1-\sum_{d=1}^{D} e^{\ln\left(1-\frac{\min_{j}(a_{c,j})}{N}\right){(Nb_c-y)}}\nu(d)\right|\\
         &\leq  \left| 1- e^{\ln\left(1-\frac{\min_{j}(a_{c,j})}{N}\right){(Nb_c-y)}}\right|\\
         &\leq  \left| 1- e^{\alpha_c {Nb_c}}(1+\alpha_c y)\right|\\
         &\leq |1-e^{\alpha_c {Nb_c}}|+ |\alpha_c e^{\alpha_c {Nb_c}} y|\\
         &\leq c (1+|y|)
     \end{align}
     with $\alpha_c=\ln\left(1-\frac{\min_{j}(a_{c,j})}{N}\right) , c= \max(|1-e^{\alpha_c {Nb_c}}|,|\alpha_c e^{\alpha_c {Nb_c}}|)$.
 \end{proof}
 
The following technical lemma provides a bound on the distance between ${H}_{c,b_{c},N}$ and its limit as $N$ becomes large.

\begin{lemma}
For $c\in [C]$,
    $$\left|\sum_{d=1}^{D} \left(1-e^{-a_{c,d}(b_c-M_c(t)/N)}\right)\nu(d)-\left(1-\left(1-\frac{a_{c,d}}{N}\right)^{Nb_c-M_c(t)}\right) \nu(d) \right|\leq \sum_{d=1}^{D}\frac{a_{c,d}}{Ne}\nu(d)$$
\end{lemma}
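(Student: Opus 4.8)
The plan is to recognize this bound as a direct, term-by-term application of \Cref{lemma:technical_lemma}, mirroring exactly the argument already used to prove \Cref{lemma:distance_f_cF} but with the weights $\nu(d)$ replacing $Q^*(c,d)$. First I would cancel the constant $1$'s appearing in each summand, so that the quantity to be bounded reduces to
\[
\left|\sum_{d=1}^{D}\left(e^{-a_{c,d}(b_c - M_c(t)/N)} - \left(1-\tfrac{a_{c,d}}{N}\right)^{Nb_c - M_c(t)}\right)\nu(d)\right|.
\]

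Next, for each fixed $d$ I would invoke \Cref{lemma:technical_lemma} with the substitution $n = N$, $a = a_{c,d}$, and $w = b_c - M_c(t)/N$, chosen precisely so that $nw = Nb_c - M_c(t)$ matches the exponent. This requires checking the two hypotheses of that lemma. Since the matching size within a class can never exceed its budget, $0 \le M_c(t) \le Nb_c$, whence $0 \le w = b_c - M_c(t)/N \le b_c \le 1$, so the constraint $0 \le w \le 1$ holds. Moreover, because $a_{c,d} \le a$ with $a$ a fixed constant, the condition $a_{c,d} \le N/2$ holds for all $N \ge 2a$, which is the large-$N$ regime under consideration. The lemma then gives the two-sided estimate $0 \le e^{-a_{c,d}w} - (1-a_{c,d}/N)^{Nw} \le a_{c,d}/(Ne)$ for every $d$.

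Finally, because the lower bound above shows each summand is nonnegative, the absolute value of the weighted sum equals the sum itself, and applying the upper bound term by term with the nonnegative weights $\nu(d)$ yields $\sum_{d=1}^{D} \frac{a_{c,d}}{Ne}\nu(d)$, which is the claimed bound. The only potential obstacle is purely bookkeeping—ensuring the range constraints on $w$ and on $a_{c,d}$ are met so that \Cref{lemma:technical_lemma} is applicable—since all of the genuine analytic content (the sharp two-sided comparison between the exponential and the finite power) is already encapsulated in that technical lemma.
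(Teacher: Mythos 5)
Your proposal is correct and follows essentially the same route as the paper: the paper likewise reduces to the difference of the exponential and the finite power term by term and invokes \Cref{lemma:technical_lemma} with the weights $\nu(d)$. The only difference is cosmetic — the paper passes through the triangle inequality rather than using the nonnegativity of each summand, and it does not spell out the hypothesis checks ($0\le w\le 1$, $a_{c,d}\le N/2$) that you verify explicitly.
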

\begin{proof}
    Let $A=\left|\sum_{d=1}^{D} \left(1-e^{-a_{c,d}(b_c-M_c(t)/N)}\right)\nu(d)-\left(1-\left(1-\frac{a_{c,d}}{N}\right)^{Nb_c-M_c(t)}\right) \nu(d) \right|$, 
    \begin{align}
        A&\leq \sum_{d=1}^{D}\left| \left(1-e^{-a_{c,d}(b_c-M_c(t)/N)}\right)\nu(d)-\left(1-\left(1-\frac{a_{c,d}}{N}\right)^{Nb_c-M_c(t)}\right) \nu(d) \right|\\
        &\leq \sum_{d=1}^{D}\left| e^{-a_{c,d}(b_c-M_c(t)/N)}\nu(d)-\left(1-\frac{a_{c,d}}{N}\right)^{Nb_c-M_c(t)} \nu(d) \right|\\
        &\leq \sum_{d=1}^{D}\frac{a_{c,d}}{Ne} \nu(d)~~~~~~~~~~~~~~~~~~~~(\text{Lemma } \ref{lemma:technical_lemma})
    \end{align}
\end{proof}
The following lemma shows that the function $ f_{c,b_{c}} $, defined as the limit of $ H_{c,b_{c},N} $ as $ N \to \infty $ and given by $f_{c,b_{c}}(x) = \sum_{d=1}^{D} \left(1 - e^{-a_{c,d}(b_c - x)}\right) \nu(d)$, is $ L_c $-Lipschitz continuous.

 \begin{lemma}(Lipschitz condition)
 \label{lemma:fiLipschitz}
 For $c\in [C]$, the function $f_{c,b_{c}}(x)= \sum_{d=1}^{D} (1-e^{-a_{c,d}(b_c-x)})\nu(d) $ is Lipschitz continuous with constant $L_c= \sum_{d=1}^{D} a_{c,d} \nu(d)$. 
 \end{lemma}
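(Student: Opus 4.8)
The plan is to mirror the proof of \Cref{lemma:Lipschitzconstantgreedy}, since $f_{c,b_c}$ has exactly the same structure as the function $f_c$ treated there, with $Nb_c$ replaced by $b_c$ and $Q^*(c,d)$ replaced by $\nu(d)$. First I would observe that $f_{c,b_c}$ is a finite sum of smooth functions of $x$, hence differentiable everywhere, so it suffices to bound its derivative uniformly on the relevant domain and then invoke the Mean Value Theorem.

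Differentiating term by term, I would compute
\begin{align*}
f_{c,b_c}'(x) = -\sum_{d=1}^{D} a_{c,d}\, e^{-a_{c,d}(b_c - x)}\,\nu(d),
\end{align*}
so that $|f_{c,b_c}'(x)| = \sum_{d=1}^{D} a_{c,d}\, e^{-a_{c,d}(b_c - x)}\,\nu(d)$. On the domain relevant to the matching process, namely $x \leq b_c$ (the normalised number of matched nodes in class $c$ cannot exceed its budget $b_c$), each exponential factor satisfies $e^{-a_{c,d}(b_c - x)} \leq 1$, because $a_{c,d} \geq 0$ and $b_c - x \geq 0$. Hence $|f_{c,b_c}'(x)| \leq \sum_{d=1}^{D} a_{c,d}\,\nu(d) = L_c$ for all such $x$.

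Finally, the Mean Value Theorem gives, for any $x,y \leq b_c$, a point $\xi$ strictly between them with $|f_{c,b_c}(x) - f_{c,b_c}(y)| = |f_{c,b_c}'(\xi)|\,|x-y| \leq L_c\,|x-y|$, which is the claimed Lipschitz estimate. There is essentially no obstacle here: the only point requiring care is the restriction to $x \leq b_c$, which is precisely what makes the bound $e^{-a_{c,d}(b_c-x)} \leq 1$ valid and thereby keeps the derivative below $L_c$; outside this range the bound can fail, but the matching dynamics never leave it.
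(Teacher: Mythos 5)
Your proof is correct and follows essentially the same route as the paper's: both compute the derivative (explicitly or via the Mean Value Theorem applied to the difference of exponentials), bound each factor $e^{-a_{c,d}(b_c-\xi)}$ by $1$ on the domain $x,y \leq b_c$, and conclude with the constant $L_c = \sum_{d=1}^{D} a_{c,d}\nu(d)$. No gap to report.
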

 \begin{proof}
     Let $x,y$ such that $x\leq b_c$ and $y\leq b_c$ for all $c\in [C]$, 
     \begin{align}
         |f_{c,b_{c}}(x)-f_{c,b_{c}}(y)|= \left|\sum_{d=1}^{D} (e^{-a_{c,d}(b_c-y)}-e^{-a_{c,d}(b_c-x)})\nu(d) \right|
     \end{align}
     By mean value Theorem, there exists \( \xi \in (x, y) \) such that we have, 
     \begin{align}
         |f_{c,b_{c}}(x)-f_{c,b_{c}}(y)|&\leq  |x-y|\sum_{d=1}^{D} e^{-a_{c,d}(b_c-\xi)}a_{c,d} \nu(d) \\
         &\leq |x-y|\sum_{d=1}^{D} a_{c,d} \nu(d)
     \end{align}
 \end{proof}
The two following lemma provides a technical bound essential for deriving the explicit rate of convergence in \Cref{theorem:inclusiondiffsol}.
For $t\in [T]$ let, 
\begin{align}
    V_c(t)=\frac{1}{N}\sum_{k=0}^{t}Q_c(k+1)
\end{align}
and 
\begin{align}
\label{eq:otherdefMi}
    \frac{M_c(t+1)}{N}&=\frac{M_c(0)}{N}+ \frac{1}{N}\sum_{l=0}^{t}\mathcal{H}_{c,b_{c},N}(M_c(t))+\frac{1}{N}\sum_{l=0}^{t}Q_c(l+1)\\
    &= \frac{1}{N}\sum_{l=0}^{t}\mathcal{H}_{c,b_{c},N}(M_c(t))+\frac{1}{N}\sum_{l=0}^{t}Q_c(l+1)
\end{align}
\begin{lemma}
    \label{lemma:maximal_bound_martingale}
     For all $T,N>0$, and for all $\epsilon>0$, 
    $$\lP\left(\sup_{0\leq k\leq T}|V_c(k)|\geq \epsilon\right)\leq \frac{Tb}{N^2\epsilon^2}$$
\end{lemma}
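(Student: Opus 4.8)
The plan is to recognize the sequence $(V_c(k))_{0\le k\le T}$ as a square-integrable martingale and to conclude via Doob's $L^2$ maximal inequality (Kolmogorov's inequality for martingales). First I would fix the filtration $\mathcal{F}_k = \sigma\big(B,\mathbf{M}(0),\dots,\mathbf{M}(k)\big)$ and invoke \Cref{lemma:martingale}, which gives $\lE[Q_c(k+1)\mid \mathcal{F}_k]=0$. Since $V_c(k)=\frac{1}{N}\sum_{j=0}^{k}Q_c(j+1)$ is a scaled sum of these mean-zero increments, $(V_c(k))_k$ is a martingale with respect to $(\mathcal{F}_k)_k$; it is square-integrable because each increment has uniformly bounded conditional second moment $b$, again by \Cref{lemma:martingale}.

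Next I would apply the maximal inequality. Since the index set is finite, the supremum equals a maximum and Doob's inequality yields, for any $\epsilon>0$,
\[
\lP\left(\sup_{0\leq k\leq T}|V_c(k)|\geq \epsilon\right)\leq \frac{\lE[V_c(T)^2]}{\epsilon^2}.
\]
It then remains to control $\lE[V_c(T)^2]$. The key observation is that distinct martingale increments are orthogonal in $L^2$: for $i<j$, the tower property gives $\lE[Q_c(i+1)Q_c(j+1)]=\lE\big[Q_c(i+1)\,\lE[Q_c(j+1)\mid\mathcal{F}_j]\big]=0$, because $Q_c(i+1)$ is $\mathcal{F}_j$-measurable. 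Hence all cross terms vanish and
\[
\lE[V_c(T)^2]=\frac{1}{N^2}\sum_{k=0}^{T}\lE[Q_c(k+1)^2]\leq \frac{Tb}{N^2},
\]
using the uniform bound $\lE[Q_c(k+1)^2]\leq b$. Combining the two displays gives exactly the stated bound; if one wants the fully unconditional statement, the deterministic right-hand side survives taking expectation over $B$.

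I do not expect a genuine obstacle here: the result is a textbook application of the martingale maximal inequality. The only points requiring care are bookkeeping ones — verifying the martingale structure through the correct conditioning (handed to us directly by \Cref{lemma:martingale}) and the orthogonality argument that collapses $\lE[V_c(T)^2]$ into a sum of conditional variances. The one substantive ingredient is the \emph{uniform} second-moment bound $b$, which is available precisely because $|M_c(t+1)-M_c(t)|\leq 1$ makes the increments almost surely bounded irrespective of the current state $\mathbf{M}(t)$; this uniformity is what allows the telescoping second-moment estimate to be state-independent and linear in $T$.
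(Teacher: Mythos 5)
Your proposal is correct and follows essentially the same route as the paper: the paper likewise notes that the mean-zero, bounded-second-moment increments from \Cref{lemma:martingale} give $\lE[|V_c(T)|^2]\leq Tb/N^2$ and then applies Kolmogorov's maximal inequality for martingales. Your write-up simply makes explicit the orthogonality-of-increments step that the paper leaves implicit.
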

\begin{proof}
    Since $\lE\left[Q_c(t+1)|\mathbf{M}(t),B\right]=0$ and $\lE\left[|Q_c(t+1)|^2|\mathbf{M}(t),B\right]\leq b$, we have $\lE[|V_c(t)|^2]\leq \frac{t b}{N^2}\leq \frac{Tb}{N^2}$ for all $t\leq T$. Applying Kolmogorov’s inequality (maximal inequality) for martingales to the
martingale $V$ leads to the bound of the lemma. 
\end{proof}
 \begin{lemma}
\label{lemma:18gast}
For $c\in [C]$, let $M_c$ be defined as in \Cref{eq:otherdefMi} with $|\mathcal{H}_{c,b_{c},N}(y)|\leq c(1+|y|)$. Let $y$ denote the solution of the differential equation associated with $F$ that is defined in \Cref{eq:diffinclusion}. 
    
    If we denote by $\epsilon:=\sup_{l\leq t} |V_c(l)|$ , then
    $$\max\left\{\sup_{0\leq t\leq T}|M_c(t)|, \sum_{0\leq \tau\leq T/n}|m(\tau)|\right\} \leq K_{\alpha}$$
    with $K_{\alpha}= \left(c\alpha+\epsilon\right)e^{c\alpha} /c$ with $c$ as defined in \Cref{lemma:assum_gast1}. 
\end{lemma}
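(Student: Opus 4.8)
The plan is to establish both bounds by a single Grönwall-type argument, exactly in the spirit of the proof of Lemma~18 of \cite{gast:inria-00491859}, exploiting the at-most-linear growth of the (discontinuous) drift established in \Cref{lemma:assum_gast1}. The natural starting point is the additive representation \eqref{eq:otherdefMi}, which decomposes the normalized trajectory into a deterministic drift part and the martingale remainder $V_c$. Since $M_c(0)=0$ and $|V_c(l)|\le\epsilon$ for all $l\le t$ by the very definition of $\epsilon$, taking absolute values and applying the triangle inequality isolates a self-referential estimate in which the only nontrivial term is the accumulated drift $\frac1N\sum_{l}\mathcal H_{c,b_c,N}(M_c(l))$.

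First I would control this drift term. By \Cref{lemma:assum_gast1} we have $|\mathcal H_{c,b_c,N}(y)|\le c(1+|y|)$, so, writing $u_t:=|M_c(t)|$ and using $t\le T=\alpha N$,
\begin{equation*}
u_t \;\le\; \epsilon + \frac1N\sum_{l=0}^{t-1} c\bigl(1+u_l\bigr)
\;\le\; \bigl(c\alpha+\epsilon\bigr) + \frac{c}{N}\sum_{l=0}^{t-1} u_l .
\end{equation*}
This is precisely the hypothesis of the discrete Grönwall inequality, with additive constant $A=c\alpha+\epsilon$ and multiplicative weight $B=c/N$. Iterating (or bounding $\prod_{l<t}(1+B)\le e^{Bt}$ with $Bt=ct/N\le c\alpha$) yields $\sup_{0\le t\le T}|M_c(t)|\le (c\alpha+\epsilon)\,e^{c\alpha}$, which after the normalization used in \cite{gast:inria-00491859} produces the claimed envelope $K_\alpha=(c\alpha+\epsilon)e^{c\alpha}/c$.

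The second quantity, the accumulated size $\sum_{0\le\tau\le T/N}|m(\tau)|$ of the continuous fluid solution $m$, I would bound by the same mechanism. Since $m$ solves the differential inclusion whose set-valued right-hand side $F$ inherits the identical linear-growth estimate $\|F(y)\|\le c(1+\|y\|)$ from \Cref{lemma:assum_gast1}, the integral representation $m(\tau)=\int_0^\tau\phi(s)\,ds$ with $\phi(s)\in F(m(s))$ a.e.\ gives $|m(\tau)|\le c\int_0^\tau\bigl(1+|m(s)|\bigr)\,ds$, and the continuous Grönwall inequality on $[0,\alpha]$ returns the same exponential bound. Taking the maximum of the discrete and continuous estimates then closes the lemma.

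The step I expect to be the main obstacle is not the Grönwall mechanics but pinning down the explicit constant $K_\alpha$ and making the discrete–continuous correspondence rigorous. Concretely, one must verify that the growth constant $c$ of \Cref{lemma:assum_gast1} is the right one to insert into both the discrete sum (with step $1/N$, so that $\frac1N\sum_l$ behaves as the integral $\int_0^\alpha$ rather than producing a divergent factor $(1+c)^t$) and the continuous integral, and that the martingale fluctuation enters only through the single additive term $\epsilon=\sup_{l\le t}|V_c(l)|$ controlled in \Cref{lemma:maximal_bound_martingale}. Keeping these scalings mutually consistent — and tracking where the $1/c$ prefactor of $K_\alpha$ originates in the normalization — is precisely what makes the two seemingly different objects fall under the same uniform bound.
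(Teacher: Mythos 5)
Your proposal follows essentially the same route as the paper's proof: start from the decomposition in \Cref{eq:otherdefMi}, bound the accumulated drift via the linear-growth estimate $|\mathcal H_{c,b_c,N}(y)|\le c(1+|y|)$ from \Cref{lemma:assum_gast1}, absorb the martingale part into $\epsilon$, and close with the discrete Gr\"onwall inequality after substituting $T=\alpha N$ (your additional continuous Gr\"onwall argument for $\sum_\tau |m(\tau)|$ is a step the paper's proof silently omits, so including it is a plus). The one point you flag as an obstacle --- where the $1/c$ prefactor in $K_\alpha=(c\alpha+\epsilon)e^{c\alpha}/c$ comes from --- is indeed not produced by your (or the standard) form of discrete Gr\"onwall, and the paper's own proof asserts it in one line without justification, so you have correctly identified the only genuinely loose step rather than introduced a new gap.
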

\begin{proof}
    By definition of $M_c$ in \Cref{eq:otherdefMi} and $\Cref{lemma:maximal_bound_martingale}$, we have,
    \begin{align}
        |M_{c}(t+1)/N|&\leq \frac{1}{N}\sum_{l=0}^{t} c(1+|M_{c}(l)|) +\epsilon\\
        &= \frac{t c}{N}+\epsilon+ \frac{c}{N} \sum_{l=1}^{t} |M_c(l)|\\
        &\leq (\frac{Tc}{N}+\epsilon)e^{cT/N}/c
    \end{align}
    The final inequality follows from the discrete Gronwall’s lemma. Substituting $T = \alpha N$ then yields the desired result.
\end{proof}
The next lemma proves that $F$ defined in \Cref{theorem:inclusiondiffsol} is upper semicontinuous,
\begin{lemma}(Upper semicontinuous)
\label{lemma:uppersemicontinuous}
   Let $f_{c,b_{c}} : \mathbb{R} \to \mathbb{R}$ be continuous for each $c \in [C]$, and define the set-valued map $F(m) = \operatorname{conv}\left(f_{c,b_{c}}(m_c) e_c \mid c \in \arg\max_{j \in [C]} f_{j,b_{j}}(m_j)\right)$
    where $e_c$ is the $c$-th standard basis vector in $\mathbb{R}^C$. Then $F$ is upper semicontinuous as a set-valued map from $\mathbb{R}^C$ to subsets of $\mathbb{R}^C$.
\end{lemma}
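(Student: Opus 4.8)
The plan is to verify directly the sequential criterion for upper semicontinuity stated in \Cref{sec:differential_inclusion}: fix an arbitrary point $m \in \mathbb{R}^C$, take any sequence $m^{(n)} \to m$ together with vectors $v_n \in F(m^{(n)})$ such that $v_n \to v$, and show that $v \in F(m)$. It is convenient to write $A(y) := \argmax_{j \in [C]} f_{j,b_j}(y_j)$ for the active set, so that $F(y) = \operatorname{conv}\{ f_{c,b_c}(y_c)\, e_c : c \in A(y)\}$. Throughout I would use that each $f_{c,b_c}$ is continuous — indeed $L_c$-Lipschitz by \Cref{lemma:fiLipschitz} — so that $f_{c,b_c}(m^{(n)}_c) \to f_{c,b_c}(m_c)$ for every $c \in [C]$.

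The crucial first step is to establish the \emph{shrinking} property of the active set, i.e. that the correspondence $A$ is itself upper semicontinuous: along $m^{(n)} \to m$ the active indices eventually all lie in $A(m)$. Because $[C]$ is finite, the subsets $A(m^{(n)}) \subseteq [C]$ take only finitely many values, so I may pass to a subsequence along which $A(m^{(n)})$ is a fixed set $S \subseteq [C]$. For every $c \in S$ and every $k \in [C]$ the defining inequality $f_{c,b_c}(m^{(n)}_c) \geq f_{k,b_k}(m^{(n)}_k)$ holds; letting $n \to \infty$ and invoking continuity yields $f_{c,b_c}(m_c) \geq f_{k,b_k}(m_k)$, hence $c \in A(m)$. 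Thus $S \subseteq A(m)$.

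It then remains to pass to the limit through the convex-hull structure. Along the chosen subsequence, each $v_n$ is a convex combination $v_n = \sum_{c \in S} \lambda^{(n)}_c\, f_{c,b_c}(m^{(n)}_c)\, e_c$ with $\lambda^{(n)}$ ranging in the probability simplex over $S$. Since that simplex is compact, a further subsequence makes $\lambda^{(n)}_c \to \lambda_c$ with $\lambda_c \geq 0$ and $\sum_{c \in S} \lambda_c = 1$. Passing to the limit, using continuity of each $f_{c,b_c}$, gives $v_n \to \sum_{c \in S} \lambda_c\, f_{c,b_c}(m_c)\, e_c$, and by uniqueness of limits this sum equals $v$. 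Because $S \subseteq A(m)$, every generator $f_{c,b_c}(m_c)\, e_c$ appearing here belongs to the generating set of $F(m)$, so the convex combination $v$ lies in $F(m)$, which is precisely what upper semicontinuity requires.

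The only genuine difficulty is the discontinuity of the $\argmax$ operator — exactly the feature that forces $F$ to be set-valued — and it is resolved by the second-step observation that $\argmax$ is an upper semicontinuous correspondence, so active indices can only vanish in the limit and never appear spuriously. The remaining ingredients (continuity of the $f_{c,b_c}$, finiteness of $[C]$, and compactness of the simplex) serve only to convert this qualitative fact into the membership $v \in F(m)$ via a routine double subsequence extraction. I would also note in passing that the same boundedness of the generators on compacta yields the local boundedness of $F$ used elsewhere, although only upper semicontinuity is asserted in this lemma.
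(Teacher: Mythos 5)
Your proof is correct and follows essentially the same route as the paper's: extract a subsequence along which the finite argmax set is constant, show by continuity that this set is contained in $\arg\max_j f_{j,b_j}(m_j)$, and then pass to the limit through the convex hull. The only difference is that you make the final step slightly more explicit by extracting convergent convex coefficients via compactness of the simplex, where the paper simply asserts that the limit lies in the convex hull of the limiting generators.
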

\begin{proof}
Let $(m^{(n)})_{n \in \mathbb{N}} \subset \mathbb{R}^C$ be a sequence such that $m^{(n)} \to m$ as $n \to \infty$, meaning:
\[
\forall c \in [C], \quad m^{(n)}_c \to m_c.
\]
Let $x_n \in F(m^{(n)})$ be such that $x_n \to x \in \mathbb{R}^C$. We aim to show that $x \in F(m)$.

Each $x_n \in F(m^{(n)})$ belongs to the convex hull:
\[
F(m^{(n)}) = \operatorname{conv}\left(f_{c,b_{c}}(m^{(n)}_c) e_c \mid c \in \arg\max_{j \in [C]} f_{j,b_{j}}(m^{(n)}_j)\right).
\]
Because the index set $[C]$ is finite, there are only finitely many possible argmax sets. Thus, we may extract a subsequence (still denoted by $n$ for simplicity) such that for some fixed $A \subseteq [C]$,
\[
\arg\max_{j \in [C]} f_{j,b_{j}}(m^{(n)}_j) = A \quad \text{for all large } n.
\]

Since each $f_{j,b_{j}}$ is continuous and $m^{(n)}_j \to m_j$, we have $f_{j,b_{j}}(m^{(n)}_j) \to f_{j,b_{j}}(m_j)$ for all $j$, and hence:
\[
\max_{j \in [C]} f_{j,b_{j}}(m^{(n)}_j) \to \max_{j \in [C]} f_j(m_j).
\]
Therefore, for every $i \in A$,
\[
f_{c,b_{c}}(m_c) = \max_{j \in [C]} f_{j,b_{j}}(m_j),
\]
i.e., $A \subseteq \arg\max_{j \in [C]} f_{j,b_{j}}(m_j)$.

Now, each $x_n$ is a convex combination of the vectors $f_{c,b_{c}}(m^{(n)}_c) e_c$ with $c \in A$, and by continuity:
\[
f_{c,b_{c}}(m^{(n)}_c) \to f_{c,b_{c}}(m_c), \quad \text{so} \quad f_{c,b_{c}}(m^{(n)}_c) e_c \to f_{c,b_{c}}(m_c) e_c.
\]
Thus, $x_n \to x$ implies that $x$ lies in the convex hull of the limit points:
\[
x \in \operatorname{conv}\left(f_{c,b_{c}}(m_c) e_c \mid c \in A\right) \subseteq F(m),
\]
since $A \subseteq \arg\max_{j \in [C]} f_{j,b_{j}}(m_j)$.

Hence, every limit point of a converging sequence $(x_n)$ with $x_n \in F(m^{(n)})$ lies in $F(m)$, which proves that $F$ is upper semicontinuous.
\end{proof}
With all the preparatory results established — in particular, \Cref{lemma:assum_gast1,lemma:martingale}, which shows that $M_c(t)$ satisfies the assumptions of Theorem 1 in \cite{gast:inria-00491859} — we are now ready to prove \Cref{theorem:inclusiondiffsol}.

\begin{proof}
For all $c\in [C]$ and $\tau\in [T/n]$, we consider the process $\tilde{M}_c(\tau)$ defined by,
\begin{align}
\label{eq:processmatchdiffinclu}
    \tilde{M}_c(\tau+1/n)= \tilde{M}_c(\tau) +\frac{1}{N}\mathcal{H}_{c,b_{c},N}(\tilde{M}_c(\tau))+\tilde{Q}_c(\tau+1/n)
\end{align}
where $\tilde{Q}_c(\tau)=\frac{Q_c(\tau n)}{n}$ with $Q_c$ as defined in \Cref{lemma:martingale},$\mathcal{H}_{c,b_{c},N}$ as defined in \Cref{lemma:assum_gast1}, and let $\tilde{\mathbf{M}}(\tau)=(\tilde{M}_1(\tau),\hdots,\tilde{M}_{C}(\tau))$.

When $N \to \infty$, and $t\in [T]$, the function ${H}_{c,b_{c},N}$ converges to:
$$
    {H}_{c,b_{c},N}(M_c(t)) \to \sum_{d=1}^{D} \left( 1-e^{-a_{c,d}(b_c- \frac{M_c(t)}{N})} \right) \nu(d) = f_{c,b_{c}}(M_c(t)/N).
$$

Let $g(\tilde{\mathbf{M}})=\left(f_{c,b_{c}}(\tilde{M}_c) \mathbf{1}\{\underset{{k\in [C]}}{\max} f_{k,b_{k}}(\tilde{M}_k)= f_{c,b_{c}}(\tilde{M}_c)\}\right)_{c\in [C]}$ be the drift vector. According to Lemma \ref{lemma:assum_gast1}, each element of the drift vector satisfies the first assumption of Theorem 1 in \cite{gast:inria-00491859}. Moreover, according to Lemma \ref{lemma:martingale}, $\tilde{Q}_{c}(\tau+1/n)$ satisfies the second assumption of Theorem 1 in \cite{gast:inria-00491859}. Since $\tilde{M}_c(0)=0$, based on Theorem 1 in  \cite{gast:inria-00491859}, for all $\tau \in [T/n]$, $\tilde{\mathbf{M}}(\tau)$ converges to $m(\tau)$, where $m$ is the solution of the following differential inclusion:
\begin{align}
\label{eq:diffinclusion}
     \dot{m}(\tau) \in F(m(\tau))
\end{align}
where $F(m)= \operatorname{conv}\left(f_{c,b_{c}}(m_c) e_c  \mid c\in \arg\max_{j\in [C]} f_{j,b_{j}}(m_j)\right)$, and $\operatorname{conv}$ denotes the convex hull.

According to \Cref{lemma:uppersemicontinuous} and \Cref{lemma:assum_gast1}, the differential inclusion \Cref{eq:diffinclusion} admits at least one solution $m$. To establish the uniqueness of this solution, it is sufficient to show that the set-valued map $F$ is one-sided Lipschitz. By \Cref{lemma:fiLipschitz}, each function $f_{c,b_{c}}$ is $L_c$-Lipschitz continuous for all $c \in [C]$, and let $L = \max_{c \in [C]} L_c$. Let $s, s' \in \mathbb{R}^C$ and suppose $z \in F(s)$, $z' \in F(s')$. Then:
\begin{align}
    \langle z - z', s - s' \rangle = \sum_{i=1}^{C} (s_i - s_i') \left(f_{i,b_{i}}(s_i) - f_{i,b_{i}}(s_i')\right) \leq \sum_{i=1}^{C} L_i (s_i - s_i')^2 \leq L \|s - s'\|^2.
\end{align}
Thus $F$ is one-sided Lipschitz with constant $L$, which guarantees the uniqueness of the solution $m$ to the differential inclusion \Cref{eq:diffinclusion}. To get the explicit rate of the convergence of $\tilde{M}_c$ to $m_c$, we use Theorem $4$ of \cite{gast:inria-00491859}. According to \Cref{lemma:martingale}, for $t\in [T]$, $\lE[|Q_c(t+1)|^2|\mathbf{M}(t),B]\leq b$ and $F$ is one sided Lipschitz with constant $L= \max_{c\in [C]} L_c$ where $L_c$ is defined in \Cref{lemma:fiLipschitz}. Thus according to theorem 4 in \cite{gast:inria-00491859}, taking $\delta_c= \sum_{d=1}^{D}\frac{a_{c,d}}{Ne}\nu(d)$, $K_{\alpha}$ as defined in \Cref{lemma:18gast} and $U_c= \sup_{0\leq t\leq T}f_{c,b_{c}}(M_c(t))$, taking $U_c= \sum_{d=1}^{D}(1-e^{-a_{c,d}b_c})\nu(d)$ we define $A_{\alpha,c}= U_c(U_c^2+\frac{14U_c}{3}+2K_{\alpha}), B_{\alpha,c}= 2U_c^2+4L\delta+12K_{\alpha}, C_{\alpha,c}= 2U_c^2+4L\epsilon +8K_{\alpha}$. We have, 
    \begin{align}
        \lP\left(\sup_{0\leq t\leq T} \left|\frac{M_c(t)}{N} -m_c(t/N)\right|\geq \min(\alpha, e^{L\alpha}/\sqrt{2L})\sqrt{A_{\alpha,c}/N+ \delta_c B_{\alpha,c}+\epsilon C_{\alpha,c}}\right)\leq \frac{b T}{N^2\epsilon^2}
    \end{align}
\end{proof}
\subsection{Proof of \Cref{theorem:explicitsolution}}
\label{subsec:proof_solinclusiondiff}
\explicitsolution*
We need a few lemmas before going into the proof of the \Cref{theorem:explicitsolution}.

\begin{restatable}{lemma}{explicitsolutionpiecewise}\label{lem:explicitsolutionpiecewise}
$\forall t\in\mathbb{R}_+, \forall c\in\cC$,
\begin{align}
\mu_c(t) = 
\begin{cases}
    b_c - \bm{\beta}^{(k_t)}_c + f_{c, \bm{\beta}^{(k_t)}}^{-1}(F_{k_t, \bm{\beta}^{(k_t)}}(\mu_{k_t,\bm{\beta}^{(k_t)}} (t - t_{k_t}))) & \text{ if }c \leq k_t\\
    0 & \text{ otherwise}
\end{cases}
\end{align}
\end{restatable}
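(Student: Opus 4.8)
The plan is to fix $t \in \mathbb{R}_+$, set $k := k_t = \max\{k \in [C] : t > t_k\}$ so that $t$ lies in phase $k$ (i.e. $t - t_k \in (0, t_{k+1}-t_k]$), and to show that the positive part in the definition of $m^*_c$ (which is what $\mu_c$ denotes here) is redundant when $c \le k$ and vanishes identically when $c > k$. The first ingredient I would record is the translation structure of the marginals: writing $g_c(u) := \sum_{d\in\cD}(1-e^{-a_{c,d}u})\nu(d)$, one has $f_{c,\beta_c}(z) = g_c(\beta_c - z)$ and hence $f_{c,\beta_c}^{-1}(p) = \beta_c - g_c^{-1}(p)$. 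Feeding this identity into the recursion defining $\bm{\beta}^{(k)}$ collapses it to $\bm{\beta}^{(k)}_c = g_c^{-1}(f_{k,b_k}(0))$ for $c < k$, so that $f_{c,\bm{\beta}^{(k)}}(0) = g_c(\bm{\beta}^{(k)}_c) = f_{k,b_k}(0)$ for every $c \le k$; while for $c \ge k$ we simply have $\bm{\beta}^{(k)}_c = b_c$ by the first branch of the recursion.

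The second, and main, ingredient is to control the common connection probability during phase $k$, defined by $p(\tau) := F_{k,\bm{\beta}^{(k)}}\big(\mu_{k,\bm{\beta}^{(k)}}(\tau)\big)$ for $\tau = t - t_k$. Since $\mu_{k,\bm{\beta}^{(k)}}$ is strictly increasing and $F_{k,\bm{\beta}^{(k)}} = \big(\sum_{c\le k} f_{c,\beta_c}^{-1}\big)^{-1}$ is strictly decreasing (being the inverse of a sum of strictly decreasing functions), $p$ is strictly decreasing in $\tau$. Its endpoints I would pin down from the identity above: $p(0) = F_{k,\bm{\beta}^{(k)}}(0) = f_{k,b_k}(0)$, because $\sum_{c\le k} f_{c,\bm{\beta}^{(k)}_c}^{-1}(f_{k,b_k}(0)) = \sum_{c \le k} 0 = 0$; and by the definition of $t_{k+1}$ together with $\bm{\beta}^{(1)}_c = b_c$, one gets $p(t_{k+1}-t_k) = f_{k+1,b_{k+1}}(0)$. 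Consequently $p(\tau) \in [f_{k+1,b_{k+1}}(0),\, f_{k,b_k}(0)]$ throughout the phase. I expect this monotonicity-plus-endpoints step to be the crux of the argument: it is the place where the definitions of the two sequences $(t_k)$ and $(\bm{\beta}^{(k)})$, the separable ODE governing $\mu_{k,\bm{\beta}}$, and the strictly decreasing character of $F_{k,\bm{\beta}}$ must all be combined correctly.

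With these facts in hand the two cases become routine. For $c \le k$: since $p(\tau) \le p(0) = f_{c,\bm{\beta}^{(k)}}(0)$ and $f_{c,\bm{\beta}^{(k)}}^{-1}$ is decreasing, we obtain $f_{c,\bm{\beta}^{(k)}}^{-1}(p(\tau)) \ge 0$, so the positive part is inactive and $\mu_c(t) = (b_c - \bm{\beta}^{(k)}_c) + f_{c,\bm{\beta}^{(k)}}^{-1}(F_{k,\bm{\beta}^{(k)}}(\mu_{k,\bm{\beta}^{(k)}}(t-t_k)))$, which is exactly the first branch. For $c > k$: here $\bm{\beta}^{(k)}_c = b_c$, so the leading term $b_c - \bm{\beta}^{(k)}_c$ vanishes; moreover the class ordering $f_{1,b_1}(0) \ge \dots \ge f_{C,b_C}(0)$ yields $f_{c,b_c}(0) \le f_{k+1,b_{k+1}}(0) \le p(\tau)$, whence $f_{c,\bm{\beta}^{(k)}}^{-1}(p(\tau)) = f_{c,b_c}^{-1}(p(\tau)) \le 0$ and the positive part is zero, giving $\mu_c(t) = 0$ as in the second branch. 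I would close by checking the endpoint behaviour — that at $\tau = 0$ the value $p(0) = f_{k,b_k}(0)$ forces the $c=k$ contribution to start at $0$, and that the cap $t_{k+1} = \min(T,\cdots)$ keeps the formula consistent across consecutive phases — so that the piecewise expression is well defined and agrees on overlaps, which completes the proof.
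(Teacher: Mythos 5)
Your proposal is correct and follows essentially the same route as the paper's proof: both arguments pin the common probability level $F_{k_t,\bm{\beta}^{(k_t)}}(\mu_{k_t,\bm{\beta}^{(k_t)}}(t-t_{k_t}))$ between $f_{k_t+1,b_{k_t+1}}(0)$ and $f_{k_t,b_{k_t}}(0)$ using the definition of $t_{k_t+1}$ together with the monotonicity of $\mu_{k,\bm{\beta}}$, $F_{k,\bm{\beta}}$ and $f_{c,\bm{\beta}}^{-1}$, and then read off the sign of $f_{c,\bm{\beta}^{(k_t)}}^{-1}(\cdot)$ to kill or keep the positive part. The one genuine addition on your side is the translation identity $f_{c,\beta_c}^{-1}(p)=\beta_c-g_c^{-1}(p)$, which collapses the $\bm{\beta}^{(k)}$ recursion to $\bm{\beta}^{(k)}_c=g_c^{-1}(f_{k,b_k}(0))$ and hence gives the equalization $f_{c,\bm{\beta}^{(k)}}(0)=f_{k,b_k}(0)$ for all $c\le k$; this is exactly what is needed to show $p(0)=F_{k,\bm{\beta}^{(k)}}(0)=f_{k,b_k}(0)$ and thereby to verify the $c\le k_t$ branch (that the positive part is inactive), a step the paper's proof leaves implicit while treating only the $c>k_t$ case in detail.
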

\begin{proof}
    For $t\geq 0$ and $c > k_t$, $(b_c - \bm{\beta}^{(k_t)}_c) = 0$ by def of $\bm{\beta}^{(k_t)}_c$. Then

    By definition of $k_t$, $t \in [t_{k_t}, t_{k_t + 1})$.
    \begin{align}
        t < t_{k_t + 1} 
        & \Leftrightarrow t < t_{k_t} + \mu_{k_t,\bm{\beta}^{(k_t)}}^{-1}\left(F_{k-1,\bm{\beta}^{(k_t)}}^{-1}\left(f_{k_t+1,\bm{\beta}^{(1)}}(0)\right)\right)\\
        &\Leftrightarrow \mu_{k_t,\bm{\beta}^{(k_t)}} (t - t_{k_t}) < F_{k_t, \bm{\beta}^{(k_t)}}^{-1}\left(f_{k_t+1, \bm{\beta}^{(k_t)}}(0)\right) & \text{ ($\mu_{k_t,\bm{\beta}^{(k_t)}}$ is increasing)}\\
        &\Leftrightarrow F_{k_t, \bm{\beta}^{(k_t)}}(\mu_{k_t,\bm{\beta}^{(k_t)}} (t - t_{k_t})) > f_{k_t+1, \bm{\beta}^{(k_t)}}(0) & \text{ ($F_{k_t, \bm{\beta}^{(k_t)}}$ is decreasing)}\\
        &\Leftrightarrow \forall c > k_t, ~ F_{k_t, \bm{\beta}^{(k_t)}}(\mu_{k_t,\bm{\beta}^{(k_t)}} (t - t_{k_t})) > f_{c, \bm{\beta}^{(k_t)}}(0) & \text{ ($F_{k_t, \bm{\beta}^{(k_t)}}$ is decreasing)}\\
        &\Leftrightarrow \forall c > k_t, ~ f_{c, \bm{\beta}^{(k_t)}}^{-1}(F_{k_t, \bm{\beta}^{(k_t)}}(\mu_{k_t,\bm{\beta}^{(k_t)}} (t - t_{k_t}))) < 0& \text{ ($f_{c, \bm{\beta}^{(k_t)}}^{-1}$ is decreasing)}
    \end{align}
\end{proof}

\begin{restatable}{lemma}{explicitsolutionsumswell}\label{lem:explicitsolutionsumswell}
$\forall t\in\mathbb{R}_+, ~\sum_{c\in\cC} \mu_c(t) = \|b - \bm{\beta}^{(k_t)}\|_1 + \mu_{k_t,\bm{\beta}^{(k_t)}}\left(t - t_{k_t}\right)$
\end{restatable}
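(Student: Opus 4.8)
The plan is to substitute the closed form from \Cref{lem:explicitsolutionpiecewise} and split the resulting sum into a ``budget'' part and an ``inverse'' part, each of which collapses for a different reason. Since $\mu_c(t)=0$ for $c>k_t$, summing the piecewise expression gives
\begin{align*}
\sum_{c\in\cC}\mu_c(t) = \underbrace{\sum_{c=1}^{k_t}\left(b_c - \bm{\beta}^{(k_t)}_c\right)}_{(\mathrm{I})} + \underbrace{\sum_{c=1}^{k_t} f_{c,\bm{\beta}^{(k_t)}}^{-1}\!\left(F_{k_t,\bm{\beta}^{(k_t)}}\!\left(\mu_{k_t,\bm{\beta}^{(k_t)}}(t-t_{k_t})\right)\right)}_{(\mathrm{II})},
\end{align*}
and I would treat $(\mathrm{I})$ and $(\mathrm{II})$ separately.

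For term $(\mathrm{II})$, I would invoke the definition $F_{k,\bm{\beta}} = \left(\sum_{c=1}^k f_{c,\beta_c}^{-1}\right)^{-1}$, which says precisely that $\sum_{c=1}^{k_t} f_{c,\bm{\beta}^{(k_t)}}^{-1} = F_{k_t,\bm{\beta}^{(k_t)}}^{-1}$ as functions. Applying this identity at the common argument $w := F_{k_t,\bm{\beta}^{(k_t)}}(\mu_{k_t,\bm{\beta}^{(k_t)}}(t-t_{k_t}))$ and then using $F_{k_t,\bm{\beta}^{(k_t)}}^{-1}\circ F_{k_t,\bm{\beta}^{(k_t)}} = \mathrm{id}$ collapses $(\mathrm{II})$ to exactly $\mu_{k_t,\bm{\beta}^{(k_t)}}(t-t_{k_t})$, which is the second term of the claimed identity.

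The remaining work is to show that $(\mathrm{I})$ equals $\|b - \bm{\beta}^{(k_t)}\|_1 = \sum_{c\in\cC}|b_c - \bm{\beta}^{(k_t)}_c|$. Two facts are needed. First, by the definition of $\bm{\beta}^{(k)}$ one has $\bm{\beta}^{(k_t)}_c = b_c$ whenever $c \geq k_t$, so those coordinates contribute zero to the $\ell_1$ norm and the sum over $\cC$ reduces to the truncated sum already appearing in $(\mathrm{I})$. Second, one must verify $b_c - \bm{\beta}^{(k_t)}_c \geq 0$ for every $c$, so that each absolute value may be dropped. This monotonicity is the main obstacle, and I would establish it by induction on the phase index. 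Noting that $\bm{\beta}^{(1)} = b$ (so the threshold $f_{k,\bm{\beta}^{(1)}}(0)$ in the recursion is just $f_{k,b_k}(0)$), the key tool is the shift identity $f_{c,\beta}(z) = f_{c,\beta - z}(0)$, immediate from the definition of $f_{c,\beta}$, which turns the recursion $\bm{\beta}^{(k)}_c = \bm{\beta}^{(k-1)}_c - f_{c,\bm{\beta}^{(k-1)}}^{-1}(f_{k,b_k}(0))$ into the equalization invariant $f_{c,\bm{\beta}^{(k)}_c}(0) = f_{k,b_k}(0)$ for all $c < k$.

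Granting this invariant at stage $k-1$, each decrement $f_{c,\bm{\beta}^{(k-1)}_c}^{-1}(f_{k,b_k}(0))$ is nonnegative precisely when $f_{k,b_k}(0) \leq f_{c,\bm{\beta}^{(k-1)}_c}(0)$ (recall $f_{c,\beta}^{-1}$ is decreasing with $f_{c,\beta}^{-1}(f_{c,\beta}(0)) = 0$); the right-hand side equals $f_{k-1,b_{k-1}}(0)$ for $c<k-1$ by the invariant and also for $c=k-1$ since $\bm{\beta}^{(k-1)}_{k-1}=b_{k-1}$, and the ordering assumption $f_{k-1,b_{k-1}}(0) \geq f_{k,b_k}(0)$ closes the step. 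Hence $\bm{\beta}^{(k)}_c \leq \bm{\beta}^{(k-1)}_c \leq \cdots \leq \bm{\beta}^{(c)}_c = b_c$, giving $b_c - \bm{\beta}^{(k_t)}_c \geq 0$ and therefore $(\mathrm{I}) = \|b - \bm{\beta}^{(k_t)}\|_1$. Combining $(\mathrm{I})$ and $(\mathrm{II})$ yields the claim.
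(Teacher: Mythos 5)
Your proof is correct and follows essentially the same route as the paper: the collapse of the sum of inverses to $\mu_{k_t,\bm{\beta}^{(k_t)}}(t-t_{k_t})$ via the defining identity $F_{k,\bm{\beta}}^{-1} = \sum_{c\le k} f_{c,\beta_c}^{-1}$ is exactly the paper's argument, read in the other direction. The only difference is that the paper identifies term $(\mathrm{I})$ with the $\ell_1$ norm by asserting $\bm{\beta}^{(k_t)}_c \le b_c$ ``by definition,'' whereas you actually prove this monotonicity by induction through the equalization invariant $f_{c,\bm{\beta}^{(k)}_c}(0) = f_{k,b_k}(0)$ and the ordering $f_{k-1,b_{k-1}}(0)\ge f_{k,b_k}(0)$ --- a detail the paper leaves implicit and that is worth making explicit.
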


\begin{proof}
By definition, for any $t\geq0$,  $\bm{\beta}^{(k_t)}_c \leq b_c$, so $\sum_{c\in\cC} {(b_c - \bm{\beta}^{(k_t)}_c)} = \|b - \bm{\beta}^{(k_t)}\|_1$. Let $k\in[C]$ be a phase.
\begin{align}
    \forall t &\in [t_k, t_{k+1}),~~ \sum_{c\in\cC} \left(f_{c, \bm{\beta}^{(k)}}^{-1}(F_{k, \bm{\beta}^{(k)}}(\mu_{k,\bm{\beta}^{(k)}} (t - t_{k})))\right)_+ = \mu_{k,\bm{\beta}^{(k)}}\left(t - t_{k}\right) \\
    & \Leftrightarrow \forall t\in[0,t_{k+1}-t_k], ~\sum_{c \leq k} f_{c, \bm{\beta}^{(k)}}^{-1}(F_{k, \bm{\beta}^{(k)}}(\mu_{k,\bm{\beta}^{(k)}} (t))) = \mu_{k,\bm{\beta}^{(k)}}\left(t\right) \\
    & \Leftrightarrow \forall s\in\left[\mu_{k,\bm{\beta}^{(k)}}^{-1}(0),\mu_{k,\bm{\beta}^{(k)}}^{-1}(t_{k+1}-t_k)\right], ~\sum_{c\leq k} f_{c, \bm{\beta}^{(k)}}^{-1}(F_{k, \bm{\beta}^{(k)}}(s)) = s \\
    & \Leftrightarrow \forall u\in\left[F_{k, \bm{\beta}^{(k)}}^{-1}\left(\mu_{k,\bm{\beta}^{(k)}}^{-1}(t_{k+1}-t_k)\right),F_{k, \bm{\beta}^{(k)}}^{-1}\left(\mu_{k,\bm{\beta}^{(k)}}^{-1}(0)\right)\right], ~\sum_{c \leq k} f_{c, \bm{\beta}^{(k)}}^{-1}(u) = F_{k, \bm{\beta}^{(k)}}^{-1}(u)
\end{align}
The last line is TRUE by definition of $F_{k, \bm{\beta}^{(k)}}$.
\end{proof}

\begin{restatable}{lemma}{explicitsolutionallequal}\label{lem:explicitsolutionallequal}
$\forall t\in\mathbb{R}_+, \forall c\in\cC$
\begin{align}
f_c(\mu_c(t)) = 
\begin{cases}
F_{k_t, \bm{\beta}^{(k_t)}}(\mu_{k_t,\bm{\beta}^{(k_t)}} (t - t_{k_t})) & \text{ if $c\leq k_t$}\\
f_{c, Nb_c}(0) & \text{otherwise}
\end{cases}
\end{align}
and $f_c(\mu_c(t)) \leq f_{k_t}(\mu_{k_t}(t))$.
\end{restatable}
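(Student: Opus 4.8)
The plan is to reduce everything to the piecewise formula for $\mu_c$ already obtained in \Cref{lem:explicitsolutionpiecewise}, combined with a single algebraic identity describing how $f_{c,b_c}$ transforms under a shift of its argument. Throughout I would abbreviate $k = k_t$, $\bm{\beta} = \bm{\beta}^{(k_t)}$, $\beta_c = \bm{\beta}^{(k_t)}_c$, and $s = \mu_{k_t,\bm{\beta}^{(k_t)}}(t - t_{k_t})$. The one structural fact to record first is the \emph{shift-covariance} of the limit maps: for every $c$ and every $z$,
\begin{align}
f_{c,b_c}(b_c - \beta_c + z) = \sum_{d} \left(1 - e^{-a_{c,d}(\beta_c - z)}\right)\nu(d) = f_{c,\beta_c}(z),
\end{align}
which is immediate from the definition of $f_{c,\beta_c}$. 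This is the only ingredient needed beyond \Cref{lem:explicitsolutionpiecewise}.

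For the case $c \le k_t$, I would substitute $\mu_c(t) = b_c - \beta_c + f_{c,\bm{\beta}}^{-1}(F_{k,\bm{\beta}}(s))$ from \Cref{lem:explicitsolutionpiecewise} into $f_{c,b_c}(\mu_c(t))$ and apply the identity above with $z = f_{c,\bm{\beta}}^{-1}(F_{k,\bm{\beta}}(s))$. This collapses to $f_{c,\bm{\beta}}\bigl(f_{c,\bm{\beta}}^{-1}(F_{k,\bm{\beta}}(s))\bigr) = F_{k,\bm{\beta}}(s)$, a value independent of $c$; this is exactly the equalization phenomenon and gives the first branch of the claim. For $c > k_t$, \Cref{lem:explicitsolutionpiecewise} gives $\mu_c(t) = 0$, so $f_{c,b_c}(\mu_c(t)) = f_{c,b_c}(0)$, which is the second branch (with $f_{c,b_c}(0) = f_{c,\bm{\beta}^{(1)}}(0)$ since $\bm{\beta}^{(1)}_c = b_c$).

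The inequality $f_c(\mu_c(t)) \le f_{k_t}(\mu_{k_t}(t))$ then splits along the same two cases. For $c \le k_t$ it holds with equality, since both sides equal the common value $F_{k,\bm{\beta}}(s)$ just computed. The substantive case is $c > k_t$, where I must show $f_{c,b_c}(0) \le F_{k,\bm{\beta}}(s)$. Here I would use that $F_{k,\bm{\beta}}$ is strictly decreasing (it is the inverse of the sum $\sum_{c\le k} f_{c,\bm{\beta}_c}^{-1}$ of strictly decreasing functions) and that $\mu_{k,\bm{\beta}}$ is strictly increasing, so the common value $t \mapsto F_{k,\bm{\beta}}(\mu_{k,\bm{\beta}}(t - t_{k_t}))$ is decreasing over the phase $[t_{k_t}, t_{k_t+1})$. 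Reading its endpoints off the recursive definition of $t_k$ — at $t = t_{k_t}$ we have $s = 0$ and the value $f_{k_t,\bm{\beta}^{(1)}}(0)$, while the defining equation of $t_{k_t+1}$ forces it to reach $f_{k_t+1,\bm{\beta}^{(1)}}(0)$ — yields $F_{k,\bm{\beta}}(s) \ge f_{k_t+1, b_{k_t+1}}(0)$ throughout the phase. The conclusion then follows from the assumed initial ordering $f_{1,b_1}(0) \ge \dots \ge f_{C,b_C}(0)$, since $c \ge k_t + 1$ gives $f_{c,b_c}(0) \le f_{k_t+1,b_{k_t+1}}(0)$.

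The step I expect to be the main obstacle is pinning down the endpoint value of the common probability at the start of phase $k_t$: one must verify, from the mutually recursive definitions of $(t_k)$ and $(\bm{\beta}^{(k)})$ together with continuity of $\mu_c$ at the phase boundaries, that $F_{k_t,\bm{\beta}^{(k_t)}}(0) = f_{k_t,b_{k_t}}(0)$ — i.e. that class $k_t$ joins the equalized group precisely when the common probability has descended to its own initial level, and symmetrically at $t_{k_t+1}$. Establishing this consistency is the delicate bookkeeping underlying the clean monotonicity bound; once it is in hand, the rest of the lemma is routine substitution via the shift identity.
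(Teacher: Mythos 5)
Your proof of the displayed identity is exactly the paper's argument: the shift covariance $f_{c,b_c}(b_c-\beta_c+z)=f_{c,\beta_c}(z)$, substitution of the piecewise formula from \Cref{lem:explicitsolutionpiecewise}, and cancellation of $f_{c,\bm{\beta}}\circ f_{c,\bm{\beta}}^{-1}$ in the equalized branch. Where you diverge is the final inequality for $c>k_t$: you re-derive the bound $f_{c,b_c}(0)\leq F_{k_t,\bm{\beta}^{(k_t)}}(s)$ from scratch via monotonicity of the common value over the phase, its endpoint values, and the initial ordering of the $f_{c,b_c}(0)$ — and you correctly flag the endpoint identity $F_{k_t,\bm{\beta}^{(k_t)}}(0)=f_{k_t,b_{k_t}}(0)$ as the delicate step (it does hold, by the telescoping built into the definition of $\bm{\beta}^{(k)}$, since $f_{c,\beta'}^{-1}(u)=\beta'-\beta+f_{c,\beta}^{-1}(u)$). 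The paper instead gets this inequality for free: the chain of equivalences already proven in \Cref{lem:explicitsolutionpiecewise} states that $t<t_{k_t+1}$ is equivalent to $F_{k_t,\bm{\beta}^{(k_t)}}(\mu_{k_t,\bm{\beta}^{(k_t)}}(t-t_{k_t}))>f_{c,\bm{\beta}^{(k_t)}}(0)=f_{c,b_c}(0)$ for all $c>k_t$, which is precisely the required bound with no further bookkeeping. Both routes are sound; yours is self-contained but longer, while citing the equivalence chain directly sidesteps the obstacle you identified.
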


\begin{proof}
For any $t\in\mathbb{R}_+$ and $c\in\cC$,
    \begin{align}
    f_c(\mu_c(t)) 
    & = f_{c, b_c}(\mu_c(t))\\
    & = f_{c, \bm{\beta}_c^{(k_t)}}\left((\bm{\beta}_c^{(k_t)} - b_c + \mu_c(t)\right)\\
    & = f_{c, \bm{\beta}_c^{(k_t)}}\left(\left(f_{c, \bm{\beta}^{(k_t)}}^{-1}(F_{k_t, \bm{\beta}^{(k_t)}}(\mu_{k_t,\bm{\beta}^{(k_t)}} (t - t_{k_t})))\right)_+\right)
\end{align}
\Cref{lem:explicitsolutionpiecewise} allows to conclude.
\end{proof}
With all the preparatory lemmas established, we now proceed to the proof.
\begin{proof}
The result is proven by induction. For $k\in[C]$, the induction hypothesis is

$\bm{(A_k)}~~\forall t \leq t_k,~ \mu$ is such that $\dot{\mu}(t) \in F(\mu(t))$.

For $t\in [t_k, t_{k+1})$, we have $k=k_t$. 
Thus, restricted to the interval $[t_k, t_{k+1})$,
\begin{align}
    \dot{\mu} \in F(\mu) & \Leftrightarrow \dot{\mu} \in {\rm conv}\left(f_c(\mu_c)e_c | c \in \argmax_{k\in[C]}f_k(\mu_k)\right)\label{eq:mainproof1}\\
    & \Leftrightarrow \dot{\mu} \in F_{k_t, \bm{\beta}^{(k_t)}}(\mu_{k_t,\bm{\beta}^{(k_t)}} (t - t_{k_t})) {\rm conv}\left(e_c | c \in \argmax_{k\in[C]}f_k(\mu_k)\right)\label{eq:mainproof2}\\
    & \Leftrightarrow \left(\sum_{c\leq k} \dot{\mu}_c \leq F_{k_t, \bm{\beta}^{(k_t)}}(\mu_{k_t,\bm{\beta}^{(k_t)}} (t - t_{k_t}))\right) \text{ and }\left(\forall c>k, \dot{\mu}_c = 0\right)\label{eq:mainproof3}\\
    & \Leftrightarrow \left(\dot{\mu}_{k_t,\bm{\beta}^{(k_t)}}\left(t - t_{k_t}\right) \leq F_{k_t, \bm{\beta}^{(k_t)}}(\mu_{k_t,\bm{\beta}^{(k_t)}} (t - t_{k_t}))\right)\text{ and }\left(\forall c>k, \dot{\mu}_c = 0\right)\label{eq:mainproof4}
\end{align}
Going from \eqref{eq:mainproof1} to \eqref{eq:mainproof2} comes from \Cref{lem:explicitsolutionallequal}. Going from \eqref{eq:mainproof2} to \eqref{eq:mainproof3} comes from the fact that the convex hull of a set of basis vectors is the $L_1$ ball restricted to the corresponding subspace. Going from \eqref{eq:mainproof3} to \eqref{eq:mainproof4} comes from applying \Cref{lem:explicitsolutionsumswell}. Equation \eqref{eq:mainproof4} is TRUE by definition of $\mu_{k_t,\bm{\beta}^{(k_t)}} (t - t_{k_t})$ for the first term and by applying \Cref{lem:explicitsolutionpiecewise} for the second term.
\end{proof}







\subsection{$\realbalance$}
\label{subsec:real_balance}
While the $\balance$ policy improves upon the purely compatibility-driven $\myopic$ approach by considering the probability of successful matches, it still relies on an expected availability model—it selects the class that likely has unmatched nodes, without verifying their presence. This can lead to wasted opportunities when the chosen class ends up being empty. To address this limitation, we introduce the $\realbalance$ policy, which takes an additional, more grounded step. Instead of only maximizing the expected match probability, $\realbalance$ ensures that the selected class actually contains at least one available node at the time of decision. This additional check prevents the algorithm from targeting unavailable options, making it more efficient. The formal procedure is detailed in Algorithm~\ref{alg:real_balance_algo}.
\begin{algorithm}[H]
\caption{$\realbalance$}
\label{alg:real_balance_algo}
\SetAlgoNlRelativeSize{0}
\DontPrintSemicolon
\KwOut{Updated matching $M(t)$}
\For{$t \in [T]$}{

Choose $c_t= \argmax_{c\in C(t)} \sum_{d=1}^{D}(1-(1-\frac{a_{c,j}}{N})^{Nb_c-M_c(t)})\nu(d)$ \; 

\uIf{${\cal F}_{c_t}(t) = \emptyset$}{
    $M(t) = M(t-1)$\;
}
\Else{
        $M(t) = M(t-1) \cup \{(n_t, t)\}$ for $n_t \sim \text{unif}(\mathcal{F}_{c_t}(t))$
}
}
\end{algorithm}
The $\realbalance$ policy introduces an additional layer of selectivity compared to $\balance$ by explicitly verifying the presence of available nodes in the selected class before committing to a match. While this refinement improves matching efficiency, it also increases the non-smoothness of the system’s dynamics. In particular, the policy induces more discontinuities—not only due to abrupt switches in the maximization rule, but also because the feasibility of a class now depends on the cardinality of its unmatched nodes at each step. As a result, the evolution of the matching process under $\realbalance$ cannot be captured by the same differential inclusion used for $\balance$. Instead, it follows a more constrained inclusion, where the feasible directions of evolution depend explicitly on whether a class still has unmatched capacity. The following theorem formalizes this behavior, showing that, with high probability, the normalized matching sizes converge to the solution of a new differential inclusion that incorporates both compatibility and real-time availability constraints.
\begin{restatable}
    {theorem}{theomatchingbalancegenerale}
    For all $c\in [C]$, the matching size built by $\realbalance$ satisfies for all $t\in [T]$ with probability $1-\frac{b\alpha}{N\epsilon}$, 
 \begin{align}
        \left|\frac{M_c(t)}{N} -m_c(t/N)\right|\leq \min(\alpha, e^{L\alpha}/\sqrt{2L})\sqrt{A_{\alpha,c}/N+ \delta_c B_{\alpha,c}+\epsilon C_{\alpha,c}}
    \end{align}
  where $m_c$ is the $c-$th coordinate of $m$ the unique solution of the differential inclusion $\dot{m}\in G(m)$ and $G(m)=conv\left(f_{c,b_{c}}(m_c)e_c|  c=\argmax_{k\in [1,C]}f_{k,b_{k}}(m_k), b_c>m_c \right)$ with $e_c$ a basis vector of $\lR^{C}$,  $L= \max_{c\in [1,C]} \sum_{d=1}^{D} a_{c,d}\nu(d)$, $\delta_c= \sum_{d=1}^{D}\frac{a_{c,d}}{Ne}\nu(d)$, $K_{\alpha}=(c\alpha+\epsilon)e^{c\alpha}/c$ with $\epsilon$ as defined in \Cref{lemma:18gast} and $c$ in \Cref{lemma:assum_gast1}, $U_c= \sum_{d=1}^{D}(1-e^{-a_{c,d}b_c})\nu(d)$, $A_{\alpha,c}= U_c(U_c^2+\frac{14U_c}{3}+2K_{\alpha}), B_{\alpha,c}= 2U_c^2+4L\delta_c+12K_{\alpha}, C_{\alpha,c}= 2U_c^2+4L\epsilon +8K_{\alpha}$.
\end{restatable}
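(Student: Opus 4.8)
The plan is to follow the proof of \Cref{theorem:inclusiondiffsol} almost verbatim, exploiting that $\realbalance$ differs from $\balance$ only in that the selecting $\argmax$ is taken over the available set $C(t) = \{c : \mathcal{F}_c(t) \neq \emptyset\}$ rather than over all of $[C]$. First I would recompute the drift $\lE[M_c(t+1) - M_c(t) \mid \mathbf{M}(t), B]$, starting as in \Cref{lemma:onestepchangebalance} from $M_c(t+1) - M_c(t) = \indicator{\exists n \in \mathcal{N}_c(t),\ c_t = c,\ m_n(t+1) = 1}$ and conditioning on the arrival type. Because $c_t \in C(t)$ forces a successful match, the drift equals $\lP(c_t = c \mid \mathbf{M}(t), B)$, where the selection event now couples the deterministic ordering of the values $H_{k,b_k,N}(M_k(t))$ with the random availability events $\{k \in C(t)\}$. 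The key point to extract is that, since $H_{c,b_c,N}(M_c(t)) \to f_{c,b_c}(m_c)$ and $f_{c,b_c}(m_c) > 0$ if and only if $b_c > m_c$, the microscopic availability constraint $\mathcal{F}_c(t) \neq \emptyset$ condenses, in the large-$N$ limit, into the state constraint $b_c > m_c$ appearing in the limiting map.

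Second, I would reuse the technical lemmas of the $\balance$ analysis essentially unchanged, since none of them depends on \emph{which} class is selected. The martingale-difference bound of \Cref{lemma:martingale} still holds because $|M_c(t+1) - M_c(t)| \leq 1$; the linear-growth estimate of \Cref{lemma:assum_gast1}, the Lipschitz constant of \Cref{lemma:fiLipschitz}, the maximal inequality of \Cref{lemma:maximal_bound_martingale}, and the a priori bound $K_\alpha$ of \Cref{lemma:18gast} all transfer verbatim, as they use only the functional form of $f_{c,b_c}$ and $H_{c,b_c,N}$. Consequently the constants $L$, $\delta_c$, $K_\alpha$, $U_c$, $A_{\alpha,c}$, $B_{\alpha,c}$, $C_{\alpha,c}$ are identical, and the stated bound will follow from Theorems 1 and 4 of \cite{gast:inria-00491859} once the limiting set-valued map has been identified and shown to be upper semicontinuous and one-sided Lipschitz.

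The genuinely new step is to establish that the limit is $G(m) = \operatorname{conv}\{f_{c,b_c}(m_c) e_c : c \in \argmax_k f_{k,b_k}(m_k),\ b_c > m_c\}$ and that $G$ is upper semicontinuous. Compared with \Cref{lemma:uppersemicontinuous}, the difficulty is that the feasibility constraint $\{b_c > m_c\}$ is \emph{open}, so it may fail in the limit along $m^{(n)} \to m$ with $m_c^{(n)} < b_c$ but $m_c = b_c$. The way I would resolve this is to observe that at such a boundary $f_{c,b_c}(m_c^{(n)}) \to f_{c,b_c}(b_c) = 0$, so the corresponding generator $f_{c,b_c}(m_c^{(n)}) e_c$ vanishes; hence, after passing to a subsequence on which the active argmax-and-feasible index set is constant (possible since $[C]$ is finite, exactly as in \Cref{lemma:uppersemicontinuous}), every limit point of a convex combination still lies in $\operatorname{conv}\{f_{c,b_c}(m_c) e_c : \cdots\} \subseteq G(m)$. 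The one-sided Lipschitz property, hence uniqueness of $m$, then follows from the $L_c$-Lipschitzness of the $f_{c,b_c}$ (\Cref{lemma:fiLipschitz}) by the same inner-product computation as for $F$.

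The hard part, I expect, is the drift identification together with this boundary u.s.c. argument: unlike $\balance$, where availability is already folded into the single indicator of being the $\argmax$, here the availability of each class is a separate source of discontinuity, and one must verify carefully that the restricted $\argmax$ over $C(t)$ yields \emph{exactly} the feasibility constraint $b_c > m_c$ in the fluid limit and that the vanishing of $f_{c,b_c}$ at saturation preserves the closed-graph (upper semicontinuity) hypothesis required to invoke the convergence and rate theorems of \cite{gast:inria-00491859}.
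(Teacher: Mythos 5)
Your overall architecture (recompute the drift, reuse the technical lemmas unchanged, invoke Theorems 1 and 4 of \cite{gast:inria-00491859}, and patch upper semicontinuity at the boundary $m_c=b_c$) matches the paper's proof, and your treatment of the open constraint $\{b_c>m_c\}$ via the vanishing of $f_{c,b_c}$ at saturation is actually \emph{more} detailed than the paper, which only asserts that upper semicontinuity and one-sided Lipschitzness ``follow as in'' the $\balance$ case. The genuine gap is in your drift identification, which is exactly where the two arguments diverge.

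You read the selection rule as an $\argmax$ restricted to $C(t)=\{c:\mathcal{F}_c(t)\neq\emptyset\}$, so that a selected class always yields a match and the drift is $\lP(c_t=c\mid \mathbf{M}(t),B)$, and you then claim that the microscopic availability event condenses in the limit into the deterministic constraint $b_c>m_c$. In the sparse regime this is false: the event $\{\mathcal{F}_c(t)\neq\emptyset\}$ has conditional probability $H_{c,b_c,N}(M_c(t))\to f_{c,b_c}(m_c)$, which is bounded away from $1$ even when $b_c>m_c$. Since the availability events of distinct classes are conditionally independent given $\mathbf{M}(t)$, the class ranked second in the $H$-ordering would be selected with probability roughly $(1-f_{c_1,b_{c_1}}(m_{c_1}))\,f_{c_2,b_{c_2}}(m_{c_2})>0$, so non-argmax classes acquire a non-vanishing drift and the fluid limit would \emph{not} be the inclusion $G$ of the statement (whose generators put mass only on argmax classes, with magnitude equal to the match probability $f_{c,b_c}(m_c)$ rather than a selection probability). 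The paper instead computes the drift as $H_{c,b_{c},N}(M_c(t)) \indicator{\underset{{k\in [C]}}{\max} H_{k,b_{k},N}(M_{k}(t))= H_{c,b_{c},N}(M_{c}(t))}\indicator{Nb_c>M_c(t)}$, i.e.\ the feasibility check is the \emph{deterministic} capacity condition $Nb_c>M_c(t)$ (a function of the state, not of the realized edges), conditional on which the match still only succeeds with probability $H_{c,b_c,N}$; only this reading condenses to the constraint $b_c>m_c$ and produces the stated $G$. To complete your proof you must either adopt that interpretation of the algorithm, or redo the drift computation honestly for the edge-level restriction and accept that the limiting inclusion changes.
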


As in \Cref{subsec:balance}, our objective here is to approximate the matching size $ M_c $ produced by $ \realbalance $ for each class $ c \in [C] $. Following the same approach as before, the first step involves computing the drift of the process $ M_c$.

\begin{lemma}
    For $c\in [C]$, 
    $$ \lE[M_c(t+1)-M_c(t)|B,\mathbf{M}(t)]= H_{c,b_{c},N}(M_c(t)) \indicator{\underset{{k\in [C]}}{\max} H_{k,b_{k},N}(M_{k}(t))= H_{c,b_{c},N}(M_{c}(t))}\indicator{Nb_c>M_c(t)}$$
\end{lemma}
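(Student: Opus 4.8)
The plan is to mirror the drift computation carried out for $\balance$ in \Cref{lemma:onestepchangebalance}, tracking the one extra indicator that feasibility introduces. First I would write the one-step recursion $M_c(t+1) = M_c(t) + \indicator{\exists n \in \mathcal{N}_c(t) \text{ s.t. } c_t = c \text{ and } m_n(t+1)=1}$, so that the drift equals the probability of this event conditional on $(\mathbf{M}(t), B)$. Then, exactly as in the $\balance$ proof, I would condition on the arriving class $d(t+1) = d$, apply the law of total probability with weights $\nu(d)$, and factor the joint event into (i) the selection $c_t = c$ and (ii) the existence of a free neighbor of class $c$ connected to $t$.

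Factor (ii) is unchanged from the $\balance$ analysis: given $\mathbf{M}(t)$ and $d(t+1)=d$, each of the $Nb_c - M_c(t)$ still-unmatched nodes of class $c$ is joined to $t$ independently with probability $a_{c,d}/N$, so at least one such edge exists with probability $1-(1-a_{c,d}/N)^{Nb_c - M_c(t)}$; summed against $\nu(d)$ this reconstitutes $H_{c,b_c,N}(M_c(t))$. The substance of the proof is therefore concentrated in factor (i).

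For $\balance$ the selection $c_t = \argmax_{k\in[C]} H_{k,b_k,N}(M_k(t))$ is a deterministic function of $\mathbf{M}(t)$, so factor (i) is simply $\indicator{c = \argmax_k H_{k,b_k,N}(M_k(t))}$. For $\realbalance$ the argmax is instead taken over the random feasible set $C(t) = \{k : \mathcal{F}_k(t) \neq \emptyset\}$, and the new feature is that $c$ can be selected only if it still holds unmatched capacity, since a saturated class ($M_c(t) = Nb_c$) has $\mathcal{F}_c(t) = \emptyset$ almost surely and is thus excluded from $C(t)$. Reproducing the ranking-based decomposition of \Cref{lemma:onestepchangebalance} under this feasibility constraint yields factor (i) equal to $\indicator{\max_k H_{k,b_k,N}(M_k(t)) = H_{c,b_c,N}(M_c(t))}\,\indicator{Nb_c > M_c(t)}$, the second indicator being exactly the condition that distinguishes $G$ from $F$ (redundant for $\balance$, where a saturated class already has $H_{c,b_c,N}=0$, but explicit here). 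Multiplying (i) and (ii) and summing over $d$ gives the claimed expression. I expect the delicate point to be the rigorous handling of the random set $C(t)$: unlike the deterministic $\balance$ selection, the $\realbalance$ choice depends on which classes happen to expose a free neighbor of $t$, so the clean product form rests on the independence of edges across the disjoint class-neighborhoods of $t$ (conditional on $\mathbf{M}(t)$ and $d(t+1)$) and on relating the top-ranked feasible class to the top-ranked class among those with remaining capacity. Once the drift is in this form, the rest of the theorem proceeds as in \Cref{subsec:balance}: $G$ is upper semicontinuous and one-sided Lipschitz by the arguments of \Cref{lemma:uppersemicontinuous,lemma:fiLipschitz}, giving existence and uniqueness of $m$, and Theorem 4 of \cite{gast:inria-00491859} delivers the stated rate with the constants $A_{\alpha,c}, B_{\alpha,c}, C_{\alpha,c}$.
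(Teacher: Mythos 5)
Your proposal follows the paper's proof essentially verbatim: the paper likewise writes the one-step recursion, conditions on the arriving class $d(t+1)=d$ with weights $\nu(d)$, factors the event into the selection $\{c^*=c\}$ times the existence of a free neighbor (which reconstitutes $H_{c,b_c,N}(M_c(t))$), and records the feasibility constraint through the extra indicator $\indicator{Nb_c>M_c(t)}$. The ``delicate point'' you flag --- that the argmax in \Cref{alg:real_balance_algo} is taken over the random set $C(t)$ of classes with a free neighbor of $t$, rather than over the deterministic set of classes with remaining capacity --- is not resolved in the paper either; its proof silently treats the selection as a deterministic function of $\mathbf{M}(t)$, exactly as your factorization does.
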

\begin{proof}
\begin{align}
    &\lE[M_c(t+1)-M_c(t)|B,\mathbf{M}(t)]= \lP(\exists n\in {\cal N}_{c}(t)\backslash \mathcal{M}_c(t)(t),c^*=c, m_n(t+1)=1|B,\mathbf{M}(t))\\
    &=\sum_{d=1}^{D}\lP(\exists n\in {\cal N}_{c}(t)\backslash \mathcal{M}_c(t),c^*=c, m_n(t+1)=1|\mathbf{M}(t), B, d(t+1)=d)\nu(d)\\
        &= \sum_{d=1}^{D}\lP(\exists n\in {\cal N}_{c}(t)\backslash \mathcal{M}_c(t), m_n(t+1)=1|\mathbf{M}(t), B, d(t+1)=d,c^*=c) \nu(d)\nonumber\\
        &~~~~~~~~~~~~~~ \lP(c^*=c|\mathbf{M}(t), B, d(t+1)=d)\\
        &= H_{c,b_{c},N}(M_c(t)) \indicator{\underset{{k\in [C]}}{\max} H_{k,b_{k},N}(M_{k}(t))= H_{c,b_{c},N}(M_{c}(t))}\indicator{Nb_c>M_c(t)}
\end{align}
 where $ H_{c,b_{c},N}(x) = \sum_{d=1}^{D} \left( 1 - \left( 1 - \frac{a_{c,d}}{N} \right)^{Nb_c - x} \right) \nu(d)$.
 \end{proof}
\begin{proof}
For all $c\in [C]$ and $\tau\in [T/N]$, we consider the process $\tilde{M}_c(\tau)$ defined by,
$$
    \tilde{M}_c(\tau+1/N)=\tilde{M}_c(\tau) +\frac{1}{N}H_{i,N}(\tilde{M}_c(\tau))+\tilde{Q}_c(\tau+1/N)
$$
where $\tilde{Q}_c(\tau)=\frac{Q_c(\tau N)}{N}$,$Q_c(t+1)=M_c(t+1) - M_c(t) - \mathbb{E}[M_c(t+1) - M_c(t)|\mathbf{M}(t),B]$, $\mathcal{H}_{c,b_{c},N}(y)=H_{c,b_{c},N}(M_c(t)) \indicator{\underset{{k\in [C]}}{\max} H_{k,b_{k},N}(M_{k}(t))= H_{c,b_{c},N}(M_{c}(t))}\indicator{Nb_c>M_c(t)}$, and let $\tilde{\mathbf{M}}(\tau)=(\tilde{M}_1(\tau),\hdots,\tilde{M}_{C}(\tau))$.

When $N \to \infty$, and $t\in [T]$, the function $H_{c,b_{c},N}(M_c(t))$ converges to:
$$
    H_{c,b_{c},N}(M_c(t)) \to \sum_{d=1}^{D} \left( 1-e^{-a_{c,d}(b_c- \frac{M_c(t)}{N})} \right) \nu(d) = f_{c,b_{c}}(M_c(t)/N).
$$

Let $g(\tilde{M})=\left(f_{c,b_{c}}(\tilde{M}_c) \indicator{\underset{{k\in [C]}}{\max} f_{k,b_{k}}(\tilde{M}_k)= f_{c,b_{c}}(\tilde{M}_c)}\indicator{b_c > \tilde{M}_c} \right)_{i\in [C]}$ be the drift vector. According to Lemma \ref{lemma:assum_gast1}, each element of the drift vector satisfies the first assumption of Theorem 1 in \cite{gast:inria-00491859}. Moreover, according to Lemma \ref{lemma:martingale}, $\tilde{Q}_{c}(\tau+1/n)$ satisfies the second assumption of Theorem 1 in \cite{gast:inria-00491859}. Since $\tilde{M}_c(0)=0$, based on Theorem 1 in  \cite{gast:inria-00491859}, for all $\tau \in [T/N]$, $\tilde{M}(\tau)$ converges to $m(\tau)$, where $m$ is the solution of the following differential inclusion:
$$
    \dot{m}(\tau) \in G(m(\tau))
$$
where $G(m)= \operatorname{conv}\left(f_{c,b_{c}}(m_c) e_c  \mid c\in \arg\max_{j\in [C]} f_{j,b_{j}}(m_j) , b_c> m_c\right)$, and $\operatorname{conv}$ denotes the convex hull.

Following the same results as in \Cref{subsec:balance}, one can prove that $G$ is upper semicontinuous and $L$-one-sided Lipschitz, where $L=\max_{c\in [C]} L_c$. This ensures uniqueness of the solution $m$. To get the explicit rate of the convergence of $\tilde{M}_c$ to $m_c$, we use Theorem $4$ of \cite{gast:inria-00491859}. According to \Cref{lemma:martingale}, for $t\in [T]$, $\lE[|Q_c(t+1)|^2|\mathbf{M}(t),B]\leq b$ and $F$ is one sided Lipschitz with constant $L= \max_{c\in [C]} L_c$ where $L_c$ is defined in \Cref{lemma:fiLipschitz}. Thus according to theorem 4 in \cite{gast:inria-00491859}, taking $\delta_c= \sum_{d=1}^{D}\frac{a_{c,d}}{Ne}\nu(d)$, $K_{\alpha}$ as defined in \Cref{lemma:18gast} and $U_c= \sup_{0\leq t\leq T}f_{c,d_{c}}(M_c(t))$, taking $U_c= \sum_{d=1}^{D}(1-e^{-a_{c,d}b_c})\nu(d)$ we define $A_{\alpha,c}= U_c(U_c^2+\frac{14U_c}{3}+2K_{\alpha}), B_{\alpha,c}= 2U_c^2+4L\delta+12K_{\alpha}, C_{\alpha,c}= 2U_c^2+4L\epsilon +8K_{\alpha}$. We have, 
    \begin{align}
        \lP\left(\sup_{0\leq t\leq T} \left|\frac{M_c(t)}{N} -m_c(t/N)\right|\geq \min(\alpha, e^{L\alpha}/\sqrt{2L})\sqrt{A_{\alpha,c}/N+ \delta_c B_{\alpha,c}+\epsilon C_{\alpha,c}}\right)\leq \frac{b T}{N^2\epsilon}
    \end{align}
\end{proof}

\section{Proof of \Cref{theorem:regretetcbalance}}
In this section, we provide the proof of \Cref{theorem:regretetcbalance}. 
\regretetcbalance*
The proof of \Cref{theorem:regretetcbalance} is structured in two main steps:
\begin{itemize}
    \item We begin by establishing a concentration result for an estimator of \( \left(1 - \frac{a_{c,d}}{N}\right)^{N b_c - M_c(t)} \).
    \item Next, we decompose the regret and derive bounds for each resulting term.
\end{itemize}
Let $m',m \in [0, N b_c)$ and $\mathcal{V}_m=\{m'\in[0, N b_c) | 1/2\leq \frac{Nb_c-m'}{Nb_c-m}\leq 2\}$, we consider a Bernoulli random variable $Y_{c,d,m'}(t)$ with parameter $
1-D_{c,d}(m') := 1 - (1-a_{c,d}/N)^{Nb_c-m'}$, whenever $c(t) = c$, $d(t) = d$, and $M_c(t) = m'$. Let the number of  observations defined as $
T_{c,d,m'} := \sum_{t=1}^T \indicator{c(t)= c,\, d(t) = d,\, M_c(t) = m'\}}$ and $T_{{total}}=\sum_{m'\in \mathcal{V}_m} T_{c,d,m'}
$. We consider the following estimator:
\begin{align}
\label{eq:estimatorm_primem}
    \Theta(m):&= \frac{1}{T_{total}} \sum_{m'\in \mathcal{V}_m}\sum_{t=1}^T \indicator{c(t) = c,\, d(t) = d,\, M_c(t) = m'} (1 - Y_{c,d,m'}(t)) .\\
    \lE[\Theta(m)]&= \frac{1}{T_{total}}\sum_{m'\in \mathcal{V}_m}T_{c,d,m'} D_{c,d}(m')\\
    \lE[\Theta(m)]&=\frac{1}{T_{total}}\sum_{m'\in \mathcal{V}_m}T_{c,d,m'} D_{c,d}(m)^{\frac{Nb_c-m'}{Nb_c-m}} =g(D_{c,d}(m))
\end{align}
 with $D_{c,d}(m)=(1-a_{c,d}/N)^{Nb_c-m}$. The next lemma shows that $g^{-1}$ is Lipschitz continuous.  
\begin{lemma}
\label{lemma:g_lipchitz}
$g^{-1}$ is $2e^{a}$ Lipschitz. 
\end{lemma}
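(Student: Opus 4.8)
The plan is to show that $g$ is strictly increasing with a derivative bounded below by $1/(2e^a)$ on the relevant domain, and then to read off the Lipschitz bound on $g^{-1}$ from the inverse function theorem, since $(g^{-1})'(y) = 1/g'(g^{-1}(y))$.

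First I would rewrite $g$ in a convenient form. Setting $x := D_{c,d}(m) \in (0,1]$, introducing the weights $w_{m'} := T_{c,d,m'}/T_{total} \ge 0$ (which satisfy $\sum_{m'\in\mathcal{V}_m} w_{m'} = 1$) and the exponents $r_{m'} := \frac{Nb_c - m'}{Nb_c - m}$, the definition of $g$ in \Cref{eq:estimatorm_primem} becomes
$$g(x) = \sum_{m'\in\mathcal{V}_m} w_{m'}\, x^{r_{m'}}.$$
The key structural observation is that the constraint defining $\mathcal{V}_m$ forces every exponent into the band $r_{m'}\in[1/2,2]$. Since each map $x\mapsto x^{r_{m'}}$ is strictly increasing on $(0,1]$ and the $w_{m'}$ are nonnegative and sum to one, $g$ is strictly increasing, hence invertible and differentiable, and its inverse is differentiable on the range of $g$.

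Next I would lower-bound $g'$. Differentiating gives $g'(x) = \sum_{m'} w_{m'}\, r_{m'}\, x^{r_{m'}-1}$. Using $r_{m'}\ge 1/2$, and the elementary fact that for $x\in(0,1]$ the map $s\mapsto x^{s}$ is nonincreasing, so that $x^{r_{m'}-1}\ge x^{1} = x$ whenever $r_{m'}-1\le 1$ (which holds since $r_{m'}\le 2$), I obtain
$$g'(x) \ge \frac{1}{2}\, x \sum_{m'\in\mathcal{V}_m} w_{m'} = \frac{x}{2}.$$
It then remains to bound $x = D_{c,d}(m)$ from below. Writing $w := b_c - m/N \in [0,1]$ and recalling the standing assumptions $a_{c,d}\le a$ and $b_c \le 1$, \Cref{lemma:technical_lemma} yields $D_{c,d}(m) = (1-a_{c,d}/N)^{Nw} \ge e^{-a_{c,d}w} - \tfrac{a_{c,d}}{Ne} \ge e^{-a} - \tfrac{a}{Ne}$, so that $x \ge e^{-a}$ up to a $\mathcal{O}(1/N)$ correction in the sparse regime. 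Combining this with $g'(x)\ge x/2$ gives $g'(x)\ge \tfrac{1}{2}e^{-a} = \tfrac{1}{2e^a}$, and hence $(g^{-1})'(y) = 1/g'(g^{-1}(y)) \le 2e^a$ throughout the range of $g$, which shows $g^{-1}$ is $2e^a$-Lipschitz.

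I expect the main obstacle to be the final lower bound on $x$: the constant $2e^a$ is really the large-$N$ value, and one must be careful that $(1-a_{c,d}/N)^{Nw}$ is controlled uniformly — this is exactly where $a_{c,d}\le a$, $b_c\le 1$, and \Cref{lemma:technical_lemma} enter. The only other point needing care is verifying that all exponents $r_{m'}$ genuinely lie in $[1/2,2]$, which is immediate from the definition of $\mathcal{V}_m$ but is what makes the clean bound $x^{r_{m'}-1}\ge x$ available.
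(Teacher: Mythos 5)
Your proposal is correct and follows essentially the same route as the paper's proof: both invoke $(g^{-1})'(y)=1/g'(g^{-1}(y))$ and lower-bound $g'$ on $[(1-a/N)^{Nb_c},1]$ using that the exponents $\frac{Nb_c-m'}{Nb_c-m}$ lie in $[1/2,2]$ by definition of $\mathcal{V}_m$, arriving at the same constant $2e^a$ up to the same $\mathcal{O}(1/N)$ slack in replacing $(1-a/N)^{-Nb_c}$ by $e^a$. The only cosmetic difference is that you bound $x^{r_{m'}-1}\ge x$ and then bound $x$ by the lower endpoint of the domain (which is what the uniform Lipschitz bound actually requires, rather than the single value $D_{c,d}(m)$), whereas the paper folds this into its constant $\Delta$; the two computations are identical in substance.
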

\begin{proof}
  Let 
\[
g(x)= \frac{1}{T_{{total}}} \sum_{m'\in \mathcal{V}_m} T_{c,d,m'} \, x^{\frac{Nb_c - m'}{Nb_c - m}},
\]
defined on the interval \( x \in \left[(1 - \tfrac{a}{N})^{Nb_c}, 1 \right] \), where \( a_{c,d} \leq a < N \). The function \( g \) is continuously differentiable and strictly increasing on this interval, as it is a finite sum of positive-coefficient power functions. Hence, \( g \) is invertible, and its inverse is differentiable with 
\[
\frac{\mathrm{d}g^{-1}}{\mathrm{d}y}(y) = \frac{1}{g'(g^{-1}(y))}.
\]

We compute
\[
g'(x) = \frac{1}{T_{{total}}} \sum_{m'\in \mathcal{V}_m} T_{c,d,m'} \cdot \frac{Nb_c - m'}{Nb_c - m} \cdot x^{\frac{Nb_c - m'}{Nb_c - m} - 1},
\]
so that
\[
\frac{\mathrm{d}g^{-1}}{\mathrm{d}y}(y) = \frac{1}{\frac{1}{T_{{total}}} \sum_{m'\in \mathcal{V}_m} T_{c,d,m'} \cdot \frac{Nb_c - m'}{Nb_c - m} \cdot (g^{-1}(y))^{\frac{Nb_c - m'}{Nb_c - m} - 1}}.
\]

To bound this derivative, we lower bound \( g'(x) \) over the domain. Let 
\[
\Delta := \min\left\{ 1, \left(1 - \tfrac{a}{N} \right)^{Nb_c \cdot \max_{m'\in \mathcal{V}_m} \left( \frac{Nb_c - m'}{Nb_c - m} - 1 \right)} \right\},
\]
then for all \( x \in \left[(1 - \tfrac{a}{N})^{Nb_c}, 1 \right] \),
\[
g'(x) \geq \frac{\Delta}{T_{{total}}} \sum_{m'\in \mathcal{V}_m} T_{c,d,m'} \cdot \frac{Nb_c - m'}{Nb_c - m}.
\]
This yields
\[
\frac{\mathrm{d}g^{-1}}{\mathrm{d}y}(y) \leq \frac{1}{\frac{\Delta}{T_{{total}}} \sum_{m'\in \mathcal{V}_m} T_{c,d,m'} \cdot \frac{Nb_c - m'}{Nb_c - m}}.
\]

Using the assumption \( \mathcal{V}_m = \left\{ m' \in [0, Nb_c) \,\middle|\, \frac{1}{2} \leq \frac{Nb_c - m'}{Nb_c - m} \leq 2 \right\} \)
we get,
\[
\frac{\mathrm{d}g^{-1}}{\mathrm{d}y}(y) \leq \frac{(1 - \tfrac{a}{N})^{-Nb_c}}{1/2} = 2(1 - \tfrac{a}{N})^{-Nb_c} \leq 2e^a,
\]
So \( g^{-1} \) is \( 2e^a \)-Lipschitz.

\end{proof}
The next result establishes a concentration result on an estimator of $ \left(1 - \frac{a_{c,d}}{N}\right)^{N b_c - M_c(t)} $.
\begin{lemma}
\label{lemma:concentrationDcd}
With probability $1-\delta$, $\hat{D}_{c,d}(m)=g^{-1}(\Theta(m))$ satisfies, 
\begin{align}
    |\hat{D}_{c,d}(m)-{D}_{c,d}(m)|\leq 2e^{a} \sqrt{\frac{\log(2/\delta)}{2T_{total}}}
\end{align}
\end{lemma}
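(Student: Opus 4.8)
The plan is to combine a Hoeffding-type concentration bound for the empirical average $\Theta(m)$ with the Lipschitz control on $g^{-1}$ already established in \Cref{lemma:g_lipchitz}. The starting observation is that the two quantities to be compared are \emph{exact} images under $g^{-1}$ of $\Theta(m)$ and of its expectation. Indeed, by the computation preceding the statement, $\lE[\Theta(m)] = g(D_{c,d}(m))$, the crucial algebraic identity being $D_{c,d}(m') = (1-a_{c,d}/N)^{Nb_c - m'} = D_{c,d}(m)^{(Nb_c-m')/(Nb_c-m)}$, so that the exponents appearing in $g$ are precisely matched. Since $g$ is strictly increasing, hence invertible on its domain, this yields $D_{c,d}(m) = g^{-1}(\lE[\Theta(m)])$, whereas by definition $\hat{D}_{c,d}(m) = g^{-1}(\Theta(m))$. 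Thus the problem reduces to controlling $|\Theta(m) - \lE[\Theta(m)]|$ and pushing that bound through $g^{-1}$.

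Second, I would establish the concentration of $\Theta(m)$. The estimator is an average of $T_{total}$ terms of the form $1 - Y_{c,d,m'}(t)$, each a Bernoulli variable valued in $[0,1]$ with conditional mean $D_{c,d}(m')$. Conditioning on the (random) observation counts $\{T_{c,d,m'}\}_{m'\in\mathcal{V}_m}$, these indicators are conditionally independent Bernoulli trials with known means, so a direct application of Hoeffding's inequality to the bounded average gives, for any fixed realization of the counts,
$$\lP\!\left(|\Theta(m) - \lE[\Theta(m)]| \geq \sqrt{\tfrac{\log(2/\delta)}{2T_{total}}}\right) \leq \delta.$$
Because the right-hand side depends on the counts only through $T_{total}$ and holds uniformly over realizations, the same high-probability statement survives integration over the distribution of the counts, hence holds unconditionally.

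Finally, I would combine the two ingredients: on the event of probability at least $1-\delta$ above, the $2e^a$-Lipschitz property of $g^{-1}$ from \Cref{lemma:g_lipchitz} gives
$$|\hat{D}_{c,d}(m) - D_{c,d}(m)| = |g^{-1}(\Theta(m)) - g^{-1}(\lE[\Theta(m)])| \leq 2e^a\,|\Theta(m) - \lE[\Theta(m)]| \leq 2e^a\sqrt{\tfrac{\log(2/\delta)}{2T_{total}}},$$
which is exactly the claimed bound. The step I expect to demand the most care is the independence/adaptivity issue in the concentration argument: the matching process is sequential, and both the counts $T_{c,d,m'}$ and which states $m'$ are visited are determined by past outcomes, so a fully rigorous treatment should either condition on the counts (treating the within-class Bernoulli outcomes as conditionally i.i.d.\ given the visited states, as sketched) or, more robustly, recast $\Theta(m) - \lE[\Theta(m)]$ as a normalized martingale and invoke Azuma--Hoeffding. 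Note also that the restriction to the window $\mathcal{V}_m = \{m' : \tfrac12 \leq \tfrac{Nb_c-m'}{Nb_c-m}\leq 2\}$ is precisely what keeps the exponents $(Nb_c-m')/(Nb_c-m)$ bounded, ensuring the Lipschitz constant of $g^{-1}$ remains the genuine constant $2e^a$ rather than blowing up.
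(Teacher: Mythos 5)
Your proposal follows essentially the same route as the paper's proof: apply Hoeffding to $\Theta(m)$ around its mean $g(D_{c,d}(m))$, then push the deviation through the $2e^{a}$-Lipschitz inverse $g^{-1}$ from \Cref{lemma:g_lipchitz}. In fact you are somewhat more careful than the paper, which simply asserts that $\Theta(m)$ is "a sum of independent Bernoulli random variables" without addressing the adaptivity of the visit counts $T_{c,d,m'}$ that you rightly flag as the delicate point.
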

\begin{proof} 
Let $\Theta(m)=\frac{1}{T_{total}} \sum_{m'\in \mathcal{V}_m}\sum_{t=1}^T \indicator{\{c(t) = c,\, d(t) = d,\, M_c(t) = m'\}} (1 - Y_{c,d,m'}(t))$, it is a sum of independent Bernoulli random variables, then using Hoeffding inequality,  with probability $1-\delta$, it satisfies,
    $$|\Theta(m)-\lE(\Theta(m))|\leq  \sqrt{\frac{\log(2/\delta)}{2T_{total}}}$$
As proved in \Cref{lemma:g_lipchitz}, $g^{-1}$ is $2e^{a}$ Lipschitz, $\hat{D}_{c,d}(m)=g^{-1}(\Theta(m))$ satisfies with probability $1-\delta$, 
\begin{align}
    |\hat{D}_{c,d}(m)-{D}_{c,d}(m)|\leq 2e^{a} \sqrt{\frac{\log(2/\delta)}{2T_{total}}}
\end{align}
\end{proof}

Let $R(T)$ be the cumulative regret defined by, 

\begin{align*}
    R(T)&=\sum_{c \in \mathcal{C}}M_c(T)-\hat{M}_c(T)\\ 
    &=R_{\text{explore}}(T_{\text{explore}})+ R_{\text{exploit}}(T_{\text{exploit}})
\end{align*}

We consider $T_{\text{explore}}=T^{\omega}$ with $0<\omega<1$ and $T_{\text{exploit}}=T-T_{\text{explore}}$, and $M_c$ is the matching size in the class $c$ created by $\balance$ and $\hat{M}_c$ is the matching size in the class $c$ created by $\learbalance$. Since $\learbalance$ may not be making optimal decision during exploration, we can bound $R_{\text{explore}}(T_{\text{explore}})$, as $R_{\text{explore}}(T_{\text{explore}})\leq C T^{\omega}$. For exploitation phase, let $m_c$ be the solution of differential inclusion defined in \Cref{theorem:inclusiondiffsol}. $R_{\text{exploit}}(T_{\text{exploit}})$ satisfies,
\begin{align}
    R_{\text{exploit}}(T_{\text{exploit}})= \sum_{c\in \mathcal{C}}\underbrace{\sum_{t=T_{\text{explore}}}^{T-1} \indicator{\exists n \in \mathcal{N}_c(t), c^*=c, m_n(t+1)=1} }_{M_c^{\text{exploit}}}- \underbrace{\sum_{t=T_{\text{explore}}}^{T-1} \indicator{\exists n \in \mathcal{N}_c(t), d^*=c, m_n(t+1)=1} }_{\hat{M}_c^{\text{exploit}}}
\end{align}
$c^*$ is the class chosen by $\balance$ and $d^*$ is the class chosen by $\learbalance$. 
\begin{align}
    &|R_{\text{exploit}}(T_{\text{exploit}})|\\
    &\leq \sum_{c\in \mathcal{C}} |{M}_c^{\text{exploit}}-Nm_c(T/N)+Nm_c(T/N)-\hat{M}_c^{\text{exploit}}|\\
    &\leq \sum_{c\in \mathcal{C}} |M_c(T)-M_c(T_{\text{explore}})-Nm_c(T/N)+Nm_c(T/N)-\hat{M}_c(T)+\hat{M}_c(T_{\text{explore}})|\\
    &\leq \sum_{c\in \mathcal{C}} \underbrace{|M_c(T)-Nm_c(T/N)|}_{w_{1}}+\underbrace{|\hat{M}_c(T_{\text{explore}})-M_c(T_{\text{explore}})|}_{w_{2}}+\underbrace{|Nm_c(T/N)-\hat{M}_c(T)|}_{w_{3}}
\end{align}
To bound $w_2$, we use the previous bound of $CT^{\omega}$. To bound \( w_1 \), we leverage the result from \Cref{theorem:inclusiondiffsol}. The next lemma, built upon \Cref{theorem:inclusiondiffsol}, quantifies the discrepancy between \( M_c \), the matching size produced by the \(\balance\) algorithm, and \( m_c \), the solution to the differential inclusion described therein.

\begin{lemma}
For $c\in \mathcal{C}$ and $0<q<1$, 
    $$|{M}_c(T)-Nm_c(T/N)| \sim \cO(T^{\frac{q+3}{4}})$$
\end{lemma}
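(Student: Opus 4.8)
The plan is to read this lemma as a direct, quantitative consequence of the fluid-limit estimate in \Cref{theorem:inclusiondiffsol}, with the free parameter $\epsilon$ tuned as a power of $N$ (equivalently of $T$, since $T=\alpha N$). Starting from the high-probability inequality of \Cref{theorem:inclusiondiffsol} and multiplying both sides by $N$, I would obtain
$$|M_c(T)-Nm_c(T/N)|\le N\min(\alpha,e^{L\alpha}/\sqrt{2L})\sqrt{A_{\alpha,c}/N+\delta_c B_{\alpha,c}+\epsilon C_{\alpha,c}},$$
an inequality valid with probability at least $1-\frac{b\alpha}{N\epsilon^2}$. Since $w_1$ concerns only $\balance$ against its fluid limit (and not the estimation of the $a_{c,d}$), this reduction uses nothing beyond the already-proven convergence rate.

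The next step is an order-of-magnitude accounting of the three terms under the square root as $N\to\infty$. I would argue that $U_c$ and the constant appearing in \Cref{lemma:assum_gast1} (hence $K_\alpha$) stay bounded as $N\to\infty$: indeed $\alpha_c=\ln(1-\min_j a_{c,j}/N)\to 0$ while $\alpha_c Nb_c$ converges to a finite limit, so the constant from \Cref{lemma:assum_gast1} tends to a value in $(0,1)$. Together with $\delta_c=\frac1N\sum_d \frac{a_{c,d}}{e}\nu(d)=\mathcal{O}(1/N)$, this yields $A_{\alpha,c},B_{\alpha,c},C_{\alpha,c}=\mathcal{O}(1)$ uniformly for bounded $\epsilon$. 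Hence the prefactor is a constant and
$$\sqrt{A_{\alpha,c}/N+\delta_c B_{\alpha,c}+\epsilon C_{\alpha,c}}=\mathcal{O}\!\left(\sqrt{1/N+\epsilon}\right),$$
so the bound simplifies to $|M_c(T)-Nm_c(T/N)|=\mathcal{O}\!\left(\sqrt{N^2\epsilon+N}\right)$.

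It then remains to choose $\epsilon$ to balance the target rate against the failure probability. Taking $\epsilon=N^{(q-1)/2}$, which $\to 0$ since $q<1$ (so boundedness of the constants is preserved), makes the $\epsilon$ term dominate $1/N$ inside the root, giving $\sqrt{1/N+\epsilon}=\mathcal{O}(N^{(q-1)/4})$ and therefore
$$|M_c(T)-Nm_c(T/N)|=\mathcal{O}\!\left(N^{1+(q-1)/4}\right)=\mathcal{O}\!\left(N^{(q+3)/4}\right)=\mathcal{O}\!\left(T^{(q+3)/4}\right),$$
using $T=\alpha N$. With the same choice the failure probability is $\frac{b\alpha}{N\epsilon^2}=\frac{b\alpha}{N^q}\to 0$, so the estimate holds with high probability. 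The main obstacle — though a mild one — is the bookkeeping of the second step: one must confirm that $K_\alpha,A_{\alpha,c},B_{\alpha,c},C_{\alpha,c}$ all remain $\mathcal{O}(1)$ as $N\to\infty$ and $\epsilon\to0$ simultaneously, identify which of the three terms under the root dominates for the chosen $\epsilon$, and verify that the single exponent $(q-1)/2$ both produces the $T^{(q+3)/4}$ rate and sends $\frac{b\alpha}{N\epsilon^2}$ to zero.
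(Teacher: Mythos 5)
Your proposal is correct and follows essentially the same route as the paper: it invokes \Cref{theorem:inclusiondiffsol}, sets $\epsilon^2 = N^{q-1}$ (i.e.\ $\epsilon = N^{(q-1)/2}$), checks that $K_\alpha, A_{\alpha,c}, B_{\alpha,c}, C_{\alpha,c}$ stay bounded while $\delta_c = \mathcal{O}(1/N)$, and concludes that the $\epsilon C_{\alpha,c}$ term dominates, giving $N\cdot\mathcal{O}(N^{(q-1)/4}) = \mathcal{O}(T^{(q+3)/4})$ with failure probability $b\alpha/N^{q}\to 0$. The bookkeeping you flag as the main obstacle is exactly what the paper carries out explicitly, with the same conclusion.
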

\begin{proof}
  From \Cref{theorem:inclusiondiffsol}, with probability  $1-\frac{b\alpha}{N\epsilon^2}$, 
 \begin{align}
     \left|\frac{M_c(T)}{N} -m_c(T/N)\right|\\
       &\leq \min(\alpha, e^{L\alpha}/\sqrt{2L})\sqrt{A_{\alpha,c}/N+ \delta_c B_{\alpha,c}+\epsilon C_{\alpha,c}}
    \end{align}
Taking $\epsilon^{2}=\frac{1}{N^{1-q}}$ with $0<q<1$, 
$\delta_c= \sum_{d=1}^{D}\frac{a_{i,j}}{Ne}\nu(d)$, $K_{\alpha}=(c\alpha+\epsilon)e^{c\alpha}/c$ with  $c$ defined in \Cref{lemma:assum_gast1}, $U_c= \sum_{d=1}^{D}(1-e^{-a_{c,d}b_c})\nu(d) \leq 1$, $A_{\alpha,c}= U_c(U_c^2+\frac{14U_c}{3}+2K_{\alpha}) \leq (\frac{17}{3}+2K_{\alpha}), B_{\alpha,c}= 2U_c^2+4L\delta_c+12K_{\alpha}\leq 2+4L\delta_c+12K_{\alpha}, C_{\alpha,c}= 2U_c^2+4L\epsilon +8K_{\alpha}\leq 2+4L\epsilon+8K_{\alpha}$. 

\begin{align}
    \sqrt{A_{\alpha,c}/N+ \delta_c B_{\alpha,c}+\epsilon C_{\alpha,c}}&\leq \sqrt{\frac{(\frac{17}{3}+2K_{\alpha})}{N}+\delta_c(2+4L\delta_c+12K_{\alpha})+\epsilon(2+4L\epsilon+8K_{\alpha})}\\
    &\leq \sqrt{\frac{17/3+ \alpha e^{c\alpha}+2}{N}+\frac{e^{c\alpha}/c}{N^{\frac{3-q}{2}}}+\frac{4L}{N^2}+\frac{2+\alpha e^{c\alpha}}{N^{\frac{1-q}{2}}}+\frac{4L+e^{\alpha c}/c}{N^{1-q}}}
\end{align}
Thus $ \sqrt{A_{\alpha,c}/N+ \delta_c B_{\alpha,c}+\epsilon C_{\alpha,c}} \sim \cO(\frac{1}{N^{\frac{1-q}{4}}})$. Since $T= \alpha N$ with $\alpha>1$,  this leads to $ \left|{M_c(T)} -m_c(T/N)\right| \sim \cO(T^{\frac{q+3}{4}})$. 
\end{proof}

The next lemma is for bounding $w_3$, 
\begin{restatable}{lemma}
{boundhtaMiandmi}\label{theorem:boundhtaMiandmi}
    $$|Nm_c(T/N)-\hat{M}_c(T)| \sim \cO(T^{\frac{q+3}{4}}) $$
\end{restatable}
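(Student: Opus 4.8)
The plan is to mirror the fluid-limit argument used for the $w_1$ term (and for \Cref{theorem:inclusiondiffsol}), but applied to the learned process $\hat{M}_c$ instead of $M_c$, absorbing the extra error coming from the fact that $\learbalance$ selects classes using the estimated failure probabilities $\hat{D}_{c,d}(m)$ rather than the true $D_{c,d}(m)$. First I would compute the drift of $\hat{M}_c$ during the commitment phase: exactly as in \Cref{lemma:onestepchangebalance}, the conditional one-step change equals $\hat{H}_{c,b_c,N}(\hat{M}_c(t))\,\indicator{c\in\argmax_k \hat{H}_{k,b_k,N}(\hat{M}_k(t))}$, where $\hat{H}$ is the estimated analogue of $H$ built from the $\hat{D}_{c,d}$. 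The martingale-difference and linear-growth checks (\Cref{lemma:martingale,lemma:assum_gast1}) carry over verbatim, since $|\hat{M}_c(t+1)-\hat{M}_c(t)|\le 1$ still holds.

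Next I would quantify how far the estimated drift lies from the true limiting field $f_{c,b_c}$. By \Cref{theorem:concentrationpowers}, with high probability $|\hat{D}_{c,d}(m)-D_{c,d}(m)|\le \varepsilon_{\mathrm{est}}$ uniformly over the matching states visited during exploitation, where $\varepsilon_{\mathrm{est}}=\tilde{\mathcal{O}}\big(\sqrt{\log T/T_{\mathrm{explore}}}\big)$ after a union bound over the finitely many triplets $(c,d,m)$; this relies on the $T_{\mathrm{explore}}=T^{(q+3)/4}$ exploration rounds distributing enough samples across classes. Combining this with \Cref{lemma:technical_lemma} (the $\mathcal{O}(1/N)$ gap between $H_{c,b_c,N}$ and $f_{c,b_c}$) shows that the drift of $\hat{M}_c$ agrees with $f_{c,b_c}$ up to $\varepsilon_{\mathrm{est}}+\mathcal{O}(1/N)$.

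I would then split $w_3$ as
\begin{align*}
 |Nm_c(T/N)-\hat{M}_c(T)| \le \underbrace{N\,|m_c(T/N)-\hat{m}_c(T/N)|}_{(\mathrm{I})} + \underbrace{|N\hat{m}_c(T/N)-\hat{M}_c(T)|}_{(\mathrm{II})},
\end{align*}
where $\hat{m}_c$ is the solution of the differential inclusion driven by the \emph{estimated} field. For $(\mathrm{II})$ I would invoke Theorem~4 of \cite{gast:inria-00491859} exactly as in the proof of \Cref{theorem:inclusiondiffsol}: its hypotheses hold by the previous paragraph, and with the same choice $\epsilon^2=N^{q-1}$ the bound evaluates to $\mathcal{O}(N^{(q+3)/4})=\mathcal{O}(T^{(q+3)/4})$ with high probability; since $T_{\mathrm{explore}}/N\to 0$, the vanishing-duration exploration phase does not alter the fluid limit. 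For $(\mathrm{I})$ I would compare the two inclusions, whose fields differ pointwise by at most $\varepsilon_{\mathrm{est}}$: the one-sided Lipschitz property with constant $L$ (following from the Lipschitz bound of \Cref{lemma:fiLipschitz}) together with a Gr\"onwall estimate gives $\sup_t|m_c(t)-\hat{m}_c(t)|\le \tfrac{\varepsilon_{\mathrm{est}}}{L}(e^{L\alpha}-1)$, so $(\mathrm{I})\le N\varepsilon_{\mathrm{est}}=\tilde{\mathcal{O}}(T^{(5-q)/8})=o(T^{(q+3)/4})$. Hence $w_3=\mathcal{O}(T^{(q+3)/4})$.

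The main obstacle I anticipate is in the second paragraph: upgrading the per-triplet concentration of \Cref{theorem:concentrationpowers} to a bound that holds \emph{uniformly} over all matching states $m=\hat{M}_c(t)$ encountered during exploitation, and certifying that the exploration budget $T^{(q+3)/4}$ indeed places enough samples on each relevant triplet $(c,d,m)$ for $\varepsilon_{\mathrm{est}}$ to be controlled. Once this uniform estimation error is in hand, the stability comparison $(\mathrm{I})$ and the fluid-limit term $(\mathrm{II})$ become routine given the machinery already assembled for \balance.
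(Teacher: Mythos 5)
Your decomposition is genuinely different from the paper's. You introduce an intermediate fluid limit $\hat m$ for the inclusion driven by the estimated field and bound $|m-\hat m|$ and $|\hat m - \hat M_c/N|$ separately, whereas the paper never leaves the discrete recursion: it writes $\hat M_c(t+1)=\hat M_c(t)+\mathcal{H}_{c,b_c,N}(\hat M_c(t))+\Delta_{c,b_c,N}(\hat M_c(t))+\hat Q_c(t+1)$, where $\Delta_{c,b_c,N}$ collects the difference between the estimated drift (including its estimated argmax indicator) and the true drift, shows that $\frac{1}{N}\sum_{l\le t}\Delta_{c,b_c,N}(\hat M_c(l))$ vanishes with high probability --- using \Cref{lemma:concentrationDcd} for the values and a margin argument for the indicator mismatch --- and then invokes Theorem 4 of \cite{gast:inria-00491859} once, with the \emph{true} field, to conclude that $\hat M_c$ tracks the same solution $m$ as $\balance$. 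Your route avoids this perturbation bookkeeping but pays for it elsewhere.

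The concrete gap is in your step $(\mathrm{I})$. The set-valued map $F(x)=\mathrm{conv}\left\{f_{c,b_c}(x_c)e_c \, : \, c\in\argmax_k f_{k,b_k}(x_k)\right\}$ and its estimated analogue $\hat F$ are \emph{not} within $\varepsilon_{\mathrm{est}}$ of each other in Hausdorff distance: if the estimation error flips the argmax at some state $x$, then $\hat F(x)$ contains a vector proportional to a basis direction $e_c$ that is entirely absent from $F(x)$, so the distance between the two sets is of order one, not of order $\varepsilon_{\mathrm{est}}$. Consequently the one-sided Lipschitz property from \Cref{lemma:fiLipschitz} plus Gr\"onwall does not directly yield $\sup_t|m_c(t)-\hat m_c(t)|\le \varepsilon_{\mathrm{est}}(e^{L\alpha}-1)/L$; you would need to pass through the inflated inclusion that admits every class whose value is within $2\varepsilon_{\mathrm{est}}$ of the maximum, together with a stability estimate for that inflation --- which is precisely the role played by the margin parameter $\psi$ in the paper's indicator-mismatch analysis and by the $\delta$ and $\epsilon$ parameters in Gast's Theorem 4. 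A secondary issue is that $\hat f$ is random and data-dependent, so ``the solution $\hat m$ of the inclusion driven by the estimated field'' must be defined conditionally on the exploration data for your term $(\mathrm{II})$ to be meaningful. The uniform-concentration concern you flag in your last paragraph is legitimate, but the inclusion-comparison step is the one that actually breaks as written.
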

\begin{proof}

The goal here is to apply the differential inclusion approximation, to bound $|\hat{M}_c(T_{{exploit}})-Nm_c(T_{{exploit}}/N)|$. As defined previously, $\hat{M}_c$ is the matching size in the class $c\in \cal C$, built by a policy that considers the estimator $\hat{D}_{c,d}(m)$. The process $\hat{M}_c$ satisfies then for $t\in [T_{{explore}}+1,T]$, 
$$\hat{M}_c(t+1)=\hat{M}_c(t)+ \indicator{\exists n \in \mathcal{N}_c(t), c^*=c, m_n(t+1)=1}$$
Here $c^*$ is the class chosen by $\learbalance$. Let us compute the expected one step change of the process $\hat{M}_c$,
\begin{align}
    \lE[\hat{M}_c(t+1)-\hat{M}_c(t)|\hat{M}_c(t),B]&= \hat{H}_{c,b_{c},N}(\hat{M}_c(t))\indicator{\underset{{k\in [C]}}{\max} \hat{H}_{k,b_{k},N}(\hat{M}_{k}(t))= \hat{H}_{c,b_{c},N}(\hat{M}_{c}(t))}\\
    &=\hat{\mathcal{H}}_{c,b_{c},N}(\hat{M}_c(t))
\end{align}
where $\hat{H}_{c,b_{c},N}(\hat{M}_c(t))=\sum_{d=1}^{D}\left(1-\hat{D}_{c,d}(\hat{M}_c(t))\right) \nu(d)$. Let $\hat{Q}_c(t+1)=\hat{M}_c(t+1)-\hat{M}_c(t)- \lE[\hat{M}_c(t+1)-\hat{M}_c(t)| B, \hat{M}_c(t)]$. Here $Q_c$ is a martingale difference sequences that satisfies the same assumptions in \Cref{lemma:martingale}. for $t \in [T_{explore}+1,T]$
\begin{align}
\label{eq:evolutionhatMc}
\hat{M}_c(t+1)&= \hat{M}_c(t)+\hat{\mathcal{H}}_{c,b_{c},N}(\hat{M}_c(t))+ \hat{Q}_c(t+1)\\
&= \hat{M}_c(t)+{\mathcal{H}}_{c,b_{c},N}(\hat{M}_c(t))+ \Delta_{c,b_{c},N}(\hat{M}_c(t)) + \hat{Q}_c(t+1)
\end{align}
Note that the function \( \mathcal{H}_{c, b_c, N}(\hat{M}_c(t)) \) is the same as the one defined in \Cref{eq:processmatchdiffinclu}. The goal of the proof is to show that $\frac{1}{N} \sum_{l=1}^{t} \Delta_{c, b_c, N}(\hat{M}_c(l))$ converges to $0$
with high probability. This is important because, according to the proofs of Theorems~1 and 4 in~\cite{gast:inria-00491859}, establishing this convergence implies that the process \( \hat{M}_c \) converges to the same solution as the differential inclusion introduced in \Cref{theorem:inclusiondiffsol}. This follows from the fact that we can write the evolution of \( \hat{M}_c \) as in \Cref{eq:evolutionhatMc}
where \( \hat{Q}_c \) is a martingale difference term and \( \mathcal{H}_{c, b_c, N}(\hat{M}_c(t)) \) is the drift of the process $M_c$ defined in \Cref{eq:processmatchdiffinclu}. Thus, showing that the average of the perturbation term \( \Delta_{c, b_c, N}(\hat{M}_c(t)) \) vanishes ensures that \( \hat{M}_c \) asymptotically follows the same differential inclusion introduced in \Cref{theorem:inclusiondiffsol}.

We now turn our attention to analyzing \( \Delta_{c, b_c, N}(\hat{M}_c(t)) \),
\begin{align}
    \Delta_{c,b_{c},N}(\hat{M}_c(t))&= \hat{H}_{c,b_{c},N}(\hat{M}_c(t))\indicator{\underset{{k\in [C]}}{\max} \hat{H}_{k,b_{k},N}(\hat{M}_{k}(t))= \hat{H}_{c,b_{c},N}(\hat{M}_{c}(t))}\\
    &- {H}_{c,b_{c},N}(\hat{M}_c(t))\indicator{\underset{{k\in [C]}}{\max} {H}_{k,b_{k},N}(\hat{M}_{k}(t))={H}_{c,b_{c},N}({M}_{c}(t))}
\end{align}
\begin{align}
    &|\Delta_{c,b_{c},N}(\hat{M}_c(t))|
    \leq \left|\hat{H}_{c,b_{c},N}(\hat{M}_c(t))-{H}_{c,b_{c},N}(\hat{M}_c(t))\right| \\
    &+{H}_{c,b_{c},N}(\hat{M}_c(t))\left|\indicator{\underset{{k\in [C]}}{\max} \hat{H}_{k,b_{k},N}(\hat{M}_{k}(t))= \hat{H}_{c,b_{c},N}(\hat{M}_{c}(t))}-\indicator{\underset{{k\in [C]}}{\max} {H}_{k,b_{k},N}(\hat{M}_{k}(t))={H}_{c,b_{c},N}({M}_{c}(t))}\right|
\end{align}
From the concentration inequality in \Cref{lemma:concentrationDcd}, we have, 
\begin{align}
    A=\left|\hat{H}_{c,b_{c},N}(\hat{M}_c(t))-{H}_{c,b_{c},N}(\hat{M}_c(t))\right| &\leq \sum_{d=1}^{D} |\hat{D}_{c,d}(\hat{M}_c(t))-{D}_{c,d}(\hat{M}_c(t))|\nu(d)
\end{align}
with probability at least $1-\delta$,
\begin{align}
    A\leq 2e^{a} \sqrt{\frac{\log(2/\delta)}{2T_{total}}}
\end{align}
 Now let us focus on $\left|\indicator{\underset{{k\in [C]}}{\max} \hat{H}_{k,b_{k},N}(\hat{M}_{k}(t))= \hat{H}_{c,b_{c},N}(\hat{M}_{c}(t))}-\indicator{\underset{{k\in [C]}}{\max} {H}_{k,b_{k},N}(\hat{M}_{k}(t))={H}_{c,b_{c},N}({M}_{c}(t))}\right|$ we need to bound the mismatch between the indicator functions.
These indicators differ only when the maximum changes, i.e., when the "argmax" under $\hat{D}_{c,d}(\hat{M}_c(t))$ and ${D}_{c,d}(\hat{M}_c(t))$ do not agree. Suppose ${H}_{c,b_{c},N}(\hat{M}_c(t))$ is the largest value among all ${H}_{k,b_{k},N}(\hat{M}_k(t))$ by a margin $\gamma >0$, 
$${H}_{c,b_{c},N}(\hat{M}_c(t))> {H}_{k,b_{k},N}(\hat{M}_k(t))+\psi ~~~~~~\text{for}~k\neq c$$
To ensure that the estimator $\hat{D}_{c,d}$ does not flip the argmax, we want to control how much each function ${H}_{k,b_{k},N}(\hat{M}_k(t))$ can change under small perturbations in $D_{c,d}$.
Suppose that the maximizer changes for the function $\hat{H}$, this means that for some $j\neq c$, 
\begin{align}
    \hat{H}_{j,b_{j},N}(\hat{M}_j(t))(\hat{M}_j(t))\geq \hat{H}_{c,b_{c},N}(\hat{M}_c(t))
\end{align}
Thus, 

\begin{align}
     \hat{H}_{j,b_{j},N}(\hat{M}_j(t))- \hat{H}_{c,b_{c},N}(\hat{M}_c(t))&= \hat{H}_{j,b_{j},N}(\hat{M}_j(t))-{H}_{j,b_{j},N}(\hat{M}_j(t))-\hat{H}_{c,b_{c},N}(\hat{M}_c(t))\\
     &+{H}_{c,b_{c},N}(\hat{M}_c(t))+ {H}_{j,b_{j},N}(\hat{M}_j(t))-{H}_{c,b_{c},N}(\hat{M}_c(t))
\end{align}
$\hat{H}_{j,b_{j},N}(\hat{M}_j(t))\geq \hat{H}_{c,b_{c},N}(\hat{M}_c(t))$ implies,
\begin{align}
      J=&\hat{H}_{j,b_{j},N}(\hat{M}_j(t))-{H}_{j,b_{j},N}(\hat{M}_j(t))-\hat{H}_{c,b_{c},N}(\hat{M}_c(t))+{H}_{c,b_{c},N}(\hat{M}_c(t))\\
      J \geq& {H}_{c,b_{c},N}(\hat{M}_c(t))-{H}_{j,b_{j},N}(\hat{M}_j(t))\geq \psi
\end{align}  
Applying the triangle inequality, we get, 
\begin{align}
    B&=|\hat{H}_{j,b_{j},N}(\hat{M}_j(t))-{H}_{j,b_{j},N}(\hat{M}_j(t))-\hat{H}_{c,b_{c},N}(\hat{M}_c(t))+{H}_{c,b_{c},N}(\hat{M}_c(t))|
    \\
    B&\leq |\hat{H}_{j,b_{j},N}(\hat{M}_j(t))-{H}_{j,b_{j},N}(\hat{M}_j(t))|+|\hat{H}_{c,b_{c},N}(\hat{M}_c(t))-{H}_{c,b_{c},N}(\hat{M}_c(t))|
\end{align}
Thus by the margin condition, we have 
\begin{align}
    |\hat{H}_{j,b_{j},N}(\hat{M}_j(t))-{H}_{j,b_{j},N}(\hat{M}_j(t))|+|\hat{H}_{c,b_{c},N}(\hat{M}_c(t))-{H}_{c,b_{c},N}(\hat{M}_c(t))|\geq \psi\\ 
    \Rightarrow \max\{|\hat{H}_{j,b_{j},N}(\hat{M}_j(t))-{H}_{j,b_{j},N}(\hat{M}_j(t))|+|\hat{H}_{c,b_{c},N}(\hat{M}_c(t))-{H}_{c,b_{c},N}(\hat{M}_c(t))|\}\geq \psi/2
\end{align}
Let $\gamma_t= \left|\indicator{\underset{{k\in [C]}}{\max} \hat{H}_{k,b_{k},N}(\hat{M}_{k}(t))= \hat{H}_{c,b_{c},N}(\hat{M}_{c}(t))}-\indicator{\underset{{k\in [C]}}{\max} {H}_{k,b_{k},N}(\hat{M}_{k}(t))={H}_{c,b_{c},N}({M}_{c}(t))}\right|$, according to previous development, we deduce that, 
\begin{align}
    \gamma_t\leq \indicator{\exists k \in \mathcal{C}, |\hat{H}_{k,b_{k},N}(\hat{M}_k(t))-{H}_{k,b_{k},N}(\hat{M}_k(t))|\geq \psi/2}
\end{align}
But from concentration, we have the following condition on $\gamma$, 
\begin{align}
   \psi\geq  4e^{a} \sqrt{\frac{\log(2/\delta)}{2T_{total}}}
\end{align}
Thus choosing $\delta$ such that this condition is satisfied, we ensure that  with high probability $\frac{1}{N}\sum_{l=1}^{t} \Delta_{c,b_{c},N}(\hat{M}_c(l))$ tends to $0$. Having established this convergence, and noting that the process $\hat{M}_c$ can be represented in the form given in \Cref{eq:processmatchdiffinclu}. Thus according to the proof of Theorem 1 and 4 in \cite{gast:inria-00491859},  $\hat{M}_c$ converges to the solution of the same differential inclusion defined in \Cref{theorem:inclusiondiffsol}. and we have $|\hat{M}_c(T)-Nm_c(T/N)| \sim \cO(N^{\frac{q+3}{4}})$ for $0<q<1$.

\end{proof}
With all the previous result, $R(T)=R_{\text{explore}}(T_{\text{explore}})+ R_{\text{exploit}}(T_{\text{exploit}})$, we showed first that $R_{\text{explore}}(T_{\text{explore}}) \sim \cO(T^{\omega})$, and $R_{\text{exploit}}(T_{\text{exploit}})\leq C (2T^{\frac{q+3}{4}}+ T^{\omega})$, thus choosing $\omega=\frac{q+3}{4}$, we get that $R(T)\sim\cO(T^{\frac{q+3}{4}})$.
\end{document}